\declaretheorem{theorem}
\declaretheorem{lemma}
\declaretheorem{corollary}
\declaretheorem{observation}
\theoremstyle{definition}
\newtheorem{definition}{Definition}
\newtheorem*{definition*}{Definition}
\newtheorem*{remark}{Remark}
\begin{document}

\title{Exponential separations between classical and quantum learners}

\author[1]{Casper Gyurik 
\thanks{\href{mailto:c.f.s.gyurik@liacs.leidenuniv.nl}{c.f.s.gyurik@liacs.leidenuniv.nl}}}
\author[1]{Vedran Dunjko
\thanks{\href{mailto:v.dunjko@liacs.leidenuniv.nl}{v.dunjko@liacs.leidenuniv.nl}}}
\affil[1]{{\small applied Quantum algorithms (aQa), Leiden University, The Netherlands}} 

\date{\today}

\maketitle

\begin{abstract}
Despite significant effort, the quantum machine learning community has only demonstrated quantum learning advantages for artificial cryptography-inspired datasets when dealing with classical data. 
In this paper we address the challenge of finding learning problems where quantum learning algorithms can achieve a provable exponential speedup over classical learning algorithms. 
We reflect on computational learning theory concepts related to this question and discuss how subtle differences in definitions can result in significantly different requirements and tasks for the learner to meet and solve. 
We examine existing learning problems with provable quantum speedups and find that they largely rely on the classical hardness of evaluating the function that generates the data, rather than identifying it. 
To address this, we present two new learning separations where the classical difficulty primarily lies in identifying the function generating the data. Furthermore, we explore computational hardness assumptions that can be leveraged to prove quantum speedups in scenarios where data is quantum-generated, which implies likely quantum advantages in a plethora of more natural settings (e.g., in condensed matter and high energy physics).
We also discuss the limitations of the classical shadow paradigm in the context of learning separations, and how physically-motivated settings such as characterizing phases of matter and Hamiltonian learning fit  in the computational learning framework.
\end{abstract}

\section{Introduction}
\label{sec:introduction}

Quantum machine learning (QML)~\cite{biamonte:qml, arunachalam:qlearning} is a bustling field with the potential to deliver quantum enhancements for practically relevant problems.
An important goal of the community is to find practically relevant learning problems for which one can prove that quantum learners have an exponential advantage over classical learners.
In this paper, we study how to achieve such exponential separations between classical and quantum learners for problems with classical data in the efficient probably approximately correct (PAC) learning framework.
The results of this paper supersede the results of~\cite{gyurik:old}.
The first thing we address is that there is no single definition of what precisely constitutes a \textit{learning} separation.
In particular, when trying to come up with a definition there are many choices to be made, and various choices make sense depending on the particular settings.
For instance, as we explain, a significant difference arise if the emphasis is on the task of \textit{identifying} or \textit{evaluating} the functions that are generating the data.
This ambiguity can lead to conflating the task of learning in an intuitive sense with a purely computational task. 
To address this issue, we provide multiple definitions of a learning separation, and we discuss in which cases the tasks involve learning in an intuitive sense.
Moreover, we study existing learning separations~\cite{liu:dlp, servedio:q_c_learnability} and carefully delineate where the classical hardness of learning lies and the types of learning separations they achieve. 
Furthermore, we provide new examples of learning separations where the classical hardness lies more in learning in an intuitive sense rather than evaluating the functions to be learned.

Next, we turn our attention to the folklore in the community that states that quantum machine learning is most likely to have advantages when the data is quantum-generated.
For instance, it is believed that quantum learners are more likely to offer an advantage in predicting the phases of physical systems rather than distinguishing between images of dogs and cats. This is because genuine quantum-generated data typically has some $\mathsf{BQP}$-hard function underlying it. 
However, it is not immediately clear how these $\mathsf{BQP}$-hard functions can give rise to a learning separation. 
In other words, if we assume a complexity-theoretic separation like $\mathsf{BPP} \neq \mathsf{BQP}$, how can we construct a learning separation from the fact that the labeling function is $\mathsf{BQP}$-hard? 
In this paper, we address this question by exploring the additional complexity-theoretic assumptions required to build such a learning separation. 
Moreover, we provide several examples of how learning separations can be constructed from physical systems, such as the Bose-Hubbard model~\cite{childs:qma}, the antiferromagnetic Heisenberg and antiferromagnetic XY model~\cite{piddock:qma}, the Fermi-Hubbard model~\cite{gorman:electronic}, supersymmetric systems~\cite{cade:susy}, interacting bosons~\cite{wei:qma}, interacting fermions~\cite{liu:qma}.

\subsection*{Contributions}

The main contributions of this paper are as follows, listed according to the relevant sections:

\medskip

\noindent \textbf{\underline{Section 2}}:
\begin{itemize}
\item We clarify the finer points regarding the possible definitions of a learning separation by highlighting that there are various ways of defining them.
Above all, we explore the distinction between the task of identifying (i.e., giving a specification of) the correct labeling function versus the task of evaluating it (i.e., computing its output), and we explain that these differences have a significant impact on whether a problem exhibits a learning separation. 

\item We outline computational hardness assumptions that one can leverage to establish learning separations in the efficient PAC learning framework.
In particular, we define the complexity class $\mathsf{HeurBPP/samp}$ that aims to capture all classically learnable functions (see Definition~\ref{def:heurbpp/samp}).
Moreover, using ideas from~\cite{huang:power} we relate our new complexity class to a familiar though unexplored complexity class by showing that $\mathsf{HeurBPP/samp} \subseteq \mathsf{HeurP/poly}$ (see Lemma~\ref{lemma:samppoly}).
\end{itemize}

\noindent \textbf{\underline{Section 3}}:
\begin{itemize}
\item We discuss known learning separations~\cite{liu:dlp, servedio:q_c_learnability}, and we provide a fine-grained analysis of where the classical hardness of learning stems from.
\begin{itemize}
    \item We identify discrepancies between the definitions we posited in Section~\ref{sec:background} and the definitions of the learning separations of~\cite{liu:dlp, servedio:q_c_learnability}. Upon noticing these discrepancies, we find that it is not directly clear how the learning separations of~\cite{liu:dlp, servedio:q_c_learnability} follow from the canonical hardness assumptions of the related computational problems (i.e., the discrete logarithm and discrete cube root outlined in~\cite{blum:hardness} and~\cite{kearns:clt} respectively) when trying to make them comply with our definitions in Section~\ref{sec:background}.
    We address this as follows: 
    \begin{itemize}
    \item[(i)] We identify new stronger hardness assumptions for the relevant computational problems that adequately support a proof of learning separation according to our definitions in Section~\ref{sec:background}.
    \item[(ii)] We introduce a new approach to translating the learning separations of~\cite{liu:dlp, servedio:q_c_learnability} to comply with our definitions in Section~\ref{sec:background} by changing the underlying functions for which the hardness assumptions in~\cite{blum:hardness} and~\cite{kearns:clt} are sufficient.
    \end{itemize}
    
    \item We find that the learning separations in literature largely rely on the classical hardness of \textit{evaluating} the function generating the data, as opposed to the hardness of \textit{identifying} the function. 
    We discuss how the identification problem can be what is needed in practice, and we address this gap by proving two new learning separations where the classical hardness lies in identifying the function generating the data (see Theorems~\ref{thm:modexp} and~\ref{thm:elgamal}).
\end{itemize}
\end{itemize}

\noindent \textbf{\underline{Section 4}}:
\begin{itemize}
\item We show how leveraging stronger complexity-theoretic assumptions can lead to learning separations where the data is generated by a genuine quantum process. 
Our main contribution is Theorem~\ref{thm:seps_no-eff-data}, which outlines a generic method of establishing learning separations from $\mathsf{BQP}$-complete functions. 
We also provide two lemmas, Lemmas~\ref{lemma:1} and~\ref{lemma:2}, which introduce natural assumptions under which the criteria in Theorem~\ref{thm:seps_no-eff-data} are satisfied. 
Finally, we show how Theorem~\ref{thm:seps_no-eff-data} can be used to build learning separations from problems in quantum many-body physics.
\end{itemize}

\noindent \textbf{\underline{Section 5}}:
\begin{itemize}
\item To connect our work to some of the related results in the field~\cite{huang:science, huang:power, brien:ham_learning, haah:ham_learning}, we discuss selected topics related to learning separations with classical data:
\begin{itemize}
    \item[Section~\ref{subsec:huang}] We discuss the milestone work of Huang et al.~\cite{huang:science} and how their classical machine learning methods based on the classical shadow framework relate to learning separations with quantum-generated data (i.e., those from Theorem~\ref{thm:seps_no-eff-data}).
     In particular, we construct a family of Hamiltonians whose ground state properties cannot be predicted by any classical machine learning method based on cryptographic assumptions (see Theorem~\ref{thm:limitations_huang}).
     These results show that the conditions needed to achieve learnability in~\cite{huang:science} essentially cannot be relaxed.
    \item[Section~\ref{subsec:power_data}] We discuss a specific example (i.e., evaluating parameterized quantum circuits) that exemplifies how access to data radically enhances what is efficiently evaluated classically.
    \item[Section~\ref{subsec:proper_physics}] We discuss how two physically-motivated problems (i.e., Hamiltonian learning, and identifying order parameters and phases of matter) naturally fit in a learning setting where the learner is constrained to output a hypothesis from a fixed hypothesis class.
\end{itemize}
\end{itemize}

\setcounter{footnote}{0} 
\section{Formalizing quantum advantage in learning theory}
\label{sec:background}

In this section we introduce the required background and definitions.
Firstly, in Section~\ref{subsec:pac}, we discuss the relevant definitions from computational learning theory (for more details see~\cite{kearns:clt}), and we clarify the finer points regarding the possible definitions of a learning separation.
Afterwards, in Section~\ref{subsec:complexity}, we discuss the areas of complexity theory relevant to learning separations.

\subsection{Learning separations in the PAC learning framework}
\label{subsec:pac}

In this paper we use the standard terminology of the efficient \textit{probably approximately correct} (PAC) learning framework, and we focus on the supervised learning setting (for an overview of the generative modelling setting see~\cite{sweke:gen_mod}).
In this framework a learning problem is defined by a \textit{concept class} $\mathcal{C} = \{\mathcal{C}_n\}_{n \in \mathbb{N}}$, where each $\mathcal{C}_n$ is a set of \textit{concepts}, which are functions from some \textit{input space} $\mathcal{X}_n$ (in this paper we assume $\mathcal{X}_n$ is either $\{0,1\}^n$ or $\mathbb{R}^n$) to some \textit{label set} $\mathcal{Y}_n$ (in this paper we assume $\mathcal{Y}_n$ is $\{0,1\}$, with the exception of Section~\ref{subsec:new_sep} where it is $\{0,1\}^n$)\footnote{Since the focus of this paper is on the computational complexity of the learner, we choose to explicitly highlight the relevance of the instance size $n$ in our notation.}.
As input the learning algorithm has access to a procedure $EX(c, \mathcal{D}_n)$ (sometimes called an \textit{example oracle}) that runs in unit time, and on each call returns a labeled \textit{example} $(x, c(x))$, where $x \in \mathcal{X}_n$ is drawn according to \textit{target distributions} $\mathcal{D} = \{\mathcal{D}_n\}_{n \in \mathbb{N}}$.
Finally, the learning algorithm has associated to it a hypothesis class $\mathcal{H} = \{\mathcal{H}_n\}_{n \in \mathbb{N}}$, and its goal is to output a \textit{hypothesis} $h \in \mathcal{H}_n$ -- which is another function from $\mathcal{X}_n$ to $\mathcal{Y}_n$--  that is in some sense ``close'' to the concept $c \in \mathcal{C}_n$ generating the examples.

In the statistical version of the PAC learning framework the learning algorithm has to identify (and/or evaluate) a good hypothesis using $\mathcal{O}\left(\mathrm{poly}(n)\right)$ many queries to $EX(c, \mathcal{D}_n)$, and the computational complexity (i.e., ``runtime'') of the learning algorithm is not considered. 
In this paper however, we focus on the \emph{efficient} PAC learning framework, where the learning algorithm must output such a good hypothesis in \emph{time} $\mathcal{O}\left(\mathrm{poly}(n)\right)$ (note that this also implies that the learning algorithm can only use $\mathcal{O}\left(\mathrm{poly}(n)\right)$ many queries to $EX(c, \mathcal{D}_n)$).
Moreover, in this paper, we study exponential separations specifically with respect to the time complexity of the learning algorithms.

The PAC learning framework formalizes (binary-valued) supervised learning. 
For instance, in the learning scenario where one wants to detect a specific object in an image, the concepts are defined to attain the value $1$ when the object is present and 0 otherwise. 
Moreover, the oracle represents the set of training examples that is available in supervised learning.
We formally define efficient PAC learnability as follows.

\begin{definition}[Efficient probably approximately correct learnability]
\label{def:learnability}
A concept class $\mathcal{C} = \{\mathcal{C}_n\}_{n \in \mathbb{N}}$ is \textit{efficiently PAC learnable} under target distributions $\mathcal{D} = \{\mathcal{D}_n\}_{n \in \mathbb{N}}$ if there exists a hypothesis class $\mathcal{H} = \{\mathcal{H}_n\}_{n \in \mathbb{N}}$ and a (randomized) learning algorithm $\mathcal{A}$ with the following property: for every $c \in \mathcal{C}_n$, and for all $0 < \epsilon < 1/2$ and $0 < \delta < 1/2$, if $\mathcal{A}$ is given access to $EX(c, \mathcal{D}_n)$ and $(\epsilon, \delta)$, then with probability at least $1-\delta$ over the random examples drawn from $EX(c, \mathcal{D}_n)$ and over the internal randomization of $\mathcal{A}$, the learning algorithm $\mathcal{A}$ outputs a specification{\color{blue}\footnote{{The hypotheses (and concepts) are specified according to some enumeration $R: \cup_{n \in \mathbb{N}}\{0,1\}^n \rightarrow \cup_{n} \mathcal{H}_n$ (or, $\cup_{n} \mathcal{C}_n$) and by a ``specification of $h \in \mathcal{H}_n$'' we mean a string $\sigma \in \{0,1\}^*$ such that $R(\sigma) = h$ (see~\cite{kearns:clt} for more details).}}} of some $h \in \mathcal{H}_n$ that satisfies
\[
\mathsf{Pr}_{x \sim \mathcal{D}_n}\big[h(x) \neq c(x)\big] \leq \epsilon.
\]
Moreover, the learning algorithm $\mathcal{A}$ must run in time $\mathcal{O}(\mathrm{poly}(n, 1/\epsilon, 1/\delta))$.
If the learning algorithm is a polynomial-time classical algorithm (or, a quantum algorithm), we say that the concept class is \textit{classically learnable} (or, \textit{quantumly learnable}, respectively).

\end{definition}

An important thing to note in the above definition is that the learner itself consists of two parts:\ a hypothesis class, and a learning algorithm.
More precisely, the learner consists of a family of functions that it will use to approximate the concepts (i.e., the hypothesis class), and of a way to select which function from this family is the best approximation for a given concept (i.e., the learning algorithm).
This is generally not very different from how supervised learning is done in practice.
For example, in deep learning the hypothesis class consists of all functions realizable by a deep neural network with some given architecture, and the learning algorithm uses gradient descent to find the best hypothesis (i.e., the best assignment of weights).

As we will discuss in more detail in Section~\ref{sec:eff_data}, it is important to consider how the concept class is specified. 
What is important is that there must be some unambiguous definition of the concept class. 
In particular, the learnability criterion says that the concept class $\mathcal{C}$ is learnable if there exists an algorithm $\mathcal{A}_\mathcal{C}$ that satisfies the PAC criterion in Definition~\ref{def:learnability} with respect to $\mathcal{C}$. 
The point is that the algorithm  $\mathcal{A}_\mathcal{C}$ is tailored to the specific concept class  $\mathcal{C}$, and for this reason the concept class  $\mathcal{C}$ has to somehow be unambiguously specified and fixed. 
For example, it is not acceptable that we leave some ambiguity as to what the concept class is precisely (e.g., it is specified by some sequence of primes, but you are not told which one), and then conclude that the concept class is not learnable because the learning algorithm would have to work for all possibilities that the ambiguity leaves open (which would be equivalent to a concept class involving all sequences of primes).

With regards to the hypotheses that the learner is allowed to output, there are two settings to consider. 
Firstly, the learner can have unrestricted freedom and be able to output arbitrary hypotheses. 
Alternatively, the learner can be constrained to only output hypotheses from a fixed hypothesis class. 
In this paper, our main focus is on investigating the limitations of \textit{all} possible classical learners for a given task. 
To do so, we primarily focus on setting where the learner has the flexibility to output arbitrary hypotheses, barring certain tractability constraints which will be discussed in the paragraph below.
Our goal is to demonstrate separations that establish the inability of classical learners, regardless of the hypotheses they can output, to efficiently solve a learning problem that can be solved by a quantum learner. 
Nonetheless, in certain cases, it is natural to constrain the learner to only output hypotheses from a fixed hypothesis class. 
We explore this setting, along with an instance of it called proper PAC learning, in Sections~\ref{subsubsec:proper_pac} and~\ref{subsec:proper_physics}.

When the learner is allowed to output arbitrary hypotheses, it becomes necessary to limit the computational power of the hypotheses. 
Constraining the learning algorithm to run in polynomial-time turns out to be pointless if one allows arbitrary superpolynomial-time hypotheses. 
More precisely, if we allow superpolynomial-time hypotheses, then any concept class that can be learned by a superpolynomial-time learning algorithm, can also be learned by a polynomial-time learning algorithm (see Appendix~\ref{appendix:poly_eval} for more details). 
Intuitively, this is because by changing the hypotheses one can ``offload'' the learning algorithm onto the evaluation of the hypotheses, which makes any constraints on the learning algorithm pointless.
This is different when we constrain the learner to only be able to output hypotheses from a fixed hypothesis class, in which case it can be meaningful and natural to consider hypotheses with superpolynomial runtimes (see also Sections~\ref{subsubsec:proper_pac} and~\ref{subsec:proper_physics}).
In conclusion, if the learner is free to output arbitrary hypotheses, then we must make sure to restrict the learner to output efficiently evaluatable hypotheses~\cite{kearns:clt}.
Finally, because we are studying separations between classical and quantum learners, we make the distinction whether the hypotheses are efficiently evaluatable classically or quantumly.

\begin{definition}[Efficiently evaluatable hypothesis class]
    A hypothesis class $\mathcal{H} = \{\mathcal{H}_n\}_{n \in \mathbb{N}}$ is \emph{classically (quantumly) efficiently evaluatable} if there exists a classical (respectively quantum) polynomial-time evaluation algorithm $\mathcal{A}_{\mathrm{eval}}$ that on input  $x \in \mathcal{X}_n$ and a specification of a hypothesis $h \in \mathcal{H}_n$, outputs $\mathcal{A}_{\mathrm{eval}}(x, h) = h(x)$.
\end{definition}

For example, the hypotheses could be specified by a polynomial-sized Boolean circuit, in which case they are \textit{classically} efficiently evaluatable.
On the other hand, the hypotheses could also be specified by polynomial-depth quantum circuits, in which case they are \textit{quantumly} efficiently evaluatable.
If the family of quantum circuits that make up the hypothesis class is $\mathsf{BQP}$-complete, then the hypothesis class will be quantumly efficiently evaluatable, but not classically efficiently evaluatable (assuming $\mathsf{BPP}\neq\mathsf{BQP}$).
In this paper we will drop the ``efficiently'' and simply call a  hypothesis class classically- or quantumly evaluatable.

Given the definitions above one may assume that there is only one way to define a learning separation in the PAC learning framework.
However, it is in fact more subtle, and there are various definitions that each have operationally different meanings.
In particular, one needs to differentiate whether the learning algorithm is classical or quantum, and whether the hypothesis class is classically- or quantumly- evaluatable. 
As a result, we can consider four categories of learning problems: concept classes that are either \textit{classically-} or \textit{quantumly-} \textit{learnable} (i.e., whether the learning algorithm is classical or quantum), using a \textit{classically-} or \textit{quantumly-} evaluatable hypothesis class.
We denote these categories by $\mathsf{CC}, \mathsf{CQ}, \mathsf{QC}$, and $\mathsf{QQ}$, where the first letter signifies whether the concept class is classically- or quantumly- learnable , and the second letter signifies whether the learner uses a classically- or quantumly- evaluatable hypothesis class.
These distinctions are \emph{not} about the nature of the data (i.e., we only consider the setting where the examples are classical) as it often occurs in literature, and even on the \href{https://en.wikipedia.org/wiki/Quantum_machine_learning}{Wikipedia-page} of quantum machine learning.

\begin{definition}[Categories of learning problem]\hspace{0pt}
\label{def:categories}
\begin{itemize}
    \item Let $\mathsf{CC}$ denote the set of tuples $\big(\mathcal{C}, \mathcal{D}\big)$ such that~$\mathcal{C}$ is $\mathsf{\textbf{classically}}$ \textit{learnable} under target distributions $\mathcal{D}$ with a $\mathsf{\textbf{classically}}$ \textit{evaluatable} hypothesis class.
    
    \item Let $\mathsf{CQ}$ denote the set of tuples $\big(\mathcal{C}, \mathcal{D}\big)$ such that $\mathcal{C}$ is $\mathsf{\textbf{classically}}$ \textit{learnable} under target distributions $\mathcal{D}$ with a $\mathsf{\textbf{quantumly}}$ \textit{evaluatable} hypothesis class.
    
    \item Let $\mathsf{QC}$ denote the set of tuples $\big(\mathcal{C}, \mathcal{D}\big)$ such that $\mathcal{C}$ is $\mathsf{\textbf{quantumly}}$ \textit{learnable} under target distributions $\mathcal{D}$ with a $\mathsf{\textbf{classically}}$ \textit{evaluatable} hypothesis class.
    
    \item Let $\mathsf{QQ}$ denote the set of tuples $\big(\mathcal{C}, \mathcal{D}\big)$ such that $\mathcal{C}$ is $\mathsf{\textbf{quantumly}}$ \textit{efficiently learnable} under target distributions $\mathcal{D}$ with a $\mathsf{\textbf{quantumly}}$ \textit{evaluatable} hypothesis class.
    
\end{itemize}
\end{definition}

We remark that our definitions do not (yet) talk about the computational tractability of the concepts, the importance of which we will discuss in Section~\ref{subsec:complexity} and throughout Sections~\ref{sec:eff_data} and~\ref{sec:no_eff_data}.
We now proceed with a few observations.
Firstly, since any classical algorithm can be simulated by a quantum algorithm it is clear that $\mathsf{CC} \subseteq \mathsf{CQ}$, $\mathsf{CC} \subseteq \mathsf{QC}$, $\mathsf{CC} \subseteq \mathsf{QQ}$, $\mathsf{CQ} \subseteq \mathsf{QQ}$, and $\mathsf{QC} \subseteq \mathsf{QQ}$.
Secondly, if the hypothesis class is quantumly evaluatable, then it does not matter whether we constrain the learning algorithm to be a classical- or a quantum- algorithm.
More precisely, any learning problem that is quantumly learnable using a quantumly evaluatable hypothesis class is also classically learnable using \textit{another} quantumly evaluatable hypothesis class.
This observation is summarized in the lemma below, the proof of which is deferred to Appendix~\ref{appendix:cqqq}.

\begin{restatable}{lemma}{CQequalsQQ}
\label{lemma:cq=qq}
$\mathsf{CQ} = \mathsf{QQ}$.
\end{restatable}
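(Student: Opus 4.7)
The inclusion $\mathsf{CQ} \subseteq \mathsf{QQ}$ is immediate, since any classical learner with a quantumly evaluatable hypothesis class is a special case of a quantum learner with the same hypothesis class. My plan is therefore to concentrate on the nontrivial direction $\mathsf{QQ} \subseteq \mathsf{CQ}$, which I would prove by an \emph{offloading} construction: given a witness for membership in $\mathsf{QQ}$, I would push the entire quantum learning procedure into the evaluation of a newly defined hypothesis, leaving the learning algorithm itself with only the classical task of collecting training examples.

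Concretely, suppose $(\mathcal{C},\mathcal{D}) \in \mathsf{QQ}$ is witnessed by a quantum polynomial-time learning algorithm $\mathcal{A}_Q$ that issues $N = N(n,1/\epsilon,1/\delta)$ calls to $EX(c,\mathcal{D}_n)$ and outputs a specification of some $h \in \mathcal{H}_n$, where $\mathcal{H}$ is quantumly evaluatable via a quantum polynomial-time algorithm $\mathcal{A}_{\mathrm{eval}}$. I would define the classical learner $\mathcal{A}_C$ to simply draw $N$ samples $S = ((x_i,y_i))_{i=1}^N$ from $EX(c,\mathcal{D}_n)$ and output $\sigma_S := (S,\epsilon,\delta)$ as the specification of a hypothesis $h'_S$ in a \emph{new} hypothesis class $\mathcal{H}'$. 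The evaluation algorithm for $\mathcal{H}'$ would, on input $(x,\sigma_S)$, simulate $\mathcal{A}_Q$ by answering its $i$-th oracle call with the stored pair $(x_i,y_i)$, obtain a specification of some $h \in \mathcal{H}_n$, and then return $\mathcal{A}_{\mathrm{eval}}(x,h)$.

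Two things then need to be checked. First, $\mathcal{H}'$ is quantumly evaluatable because its evaluation algorithm is the composition of two quantum polynomial-time subroutines ($\mathcal{A}_Q$ run against an internally simulated oracle, followed by $\mathcal{A}_{\mathrm{eval}}$) and is therefore itself quantum polynomial-time. Second, the PAC correctness is inherited: since $S$ is distributed exactly as if $\mathcal{A}_Q$ had queried its own oracle, with probability at least $1-\delta$ over $S$ and $\mathcal{A}_Q$'s internal randomness the resulting $h$ satisfies $\Pr_{x\sim\mathcal{D}_n}[h(x)\neq c(x)] \leq \epsilon$, and hence $h'_S$ meets the same bound.

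The one subtle point---which I do not expect to be a real obstacle---is that $h'_S$ is implicitly randomized, since each evaluation reruns $\mathcal{A}_Q$ with fresh measurement outcomes, so strictly speaking the specification does not pin down a single deterministic function. I would absorb this into the PAC guarantee by invoking $\mathcal{A}_Q$ internally with slightly tightened parameters $(\epsilon',\delta')$ and redistributing the failure probability between the draw of $S$ and the per-evaluation randomness (if preferred, one may instead append to $\sigma_S$ a polynomial number of classical random bits sampled by $\mathcal{A}_C$ and use them to replace $\mathcal{A}_Q$'s classical coins, rendering $h'_S$ fully deterministic). Since $\mathcal{A}_C$ only draws polynomially many samples, it runs in classical polynomial time, establishing $(\mathcal{C},\mathcal{D}) \in \mathsf{CQ}$.
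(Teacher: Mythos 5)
Your proposal is correct and follows essentially the same route as the paper's proof: the hypothesis class is re-indexed by polynomially-sized sample sets, the quantum learner together with the quantum evaluator is offloaded into the evaluation of these new hypotheses, and the classical learner merely collects samples and outputs them as the specification. Your additional care about the per-evaluation randomness of the offloaded quantum procedure (tightening $(\epsilon',\delta')$ or hard-coding the classical coins) addresses a subtlety the paper's proof passes over silently, but it does not change the underlying argument.
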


The above lemma is analogous to why we constrain the hypotheses to be efficiently evaluatable, in the sense that by changing the hypothesis class one can ``offload'' the \textit{quantum} learning algorithm onto the evaluation of the \textit{quantum} hypotheses.
We reiterate that it is critical that one can change the hypothesis class when mapping a learning problem in $\mathsf{QQ}$ to $\mathsf{CQ}$.
If the learner is constrained to output hypotheses from a fixed hypothesis class, then such a collapse does not happen.

Having studied the relations between the categories learning problems, we can now specify what it means for a learning problem to exhibit a separation between classical and quantum learners.

\begin{definition}[Learning separation]
\label{def:separations}
A learning problem $L = \big(\mathcal{C}, \mathcal{D}\big)$ is said to exhibit a 

\begin{itemize}
    \item $\mathsf{CC}/\mathsf{QC}$ separation if $L \in \mathsf{QC}$ and $L \not\in \mathsf{CC}$.
    
    \item $\mathsf{CC}/\mathsf{QQ}$ separation if $L \in \mathsf{QQ}$ and $L \not\in \mathsf{CC}$.
\end{itemize}

\end{definition}

Firstly, note that due to the previously listed inclusions any $\mathsf{CC/QC}$ separation is also a $\mathsf{CC/QQ}$ separation.
Secondly, note that by fully relying on the classical intractability of concepts one can construct trivial learning separations that are less about ``learning'' in an intuitive sense.
More precisely, consider the separation exhibited by the concept class $\mathcal{C} = \{\mathcal{C}_n\}_{n \in \mathbb{N}}$, where each $\mathcal{C}_n$ consists of a \textit{single} concept that is classically hard to evaluate on a fraction of inputs even in the presence of data, yet it can be efficiently evaluated by a quantum algorithm.
This singleton concept class is clearly quantumly learnable using a quantumly evaluatable hypothesis class.
Also, it is not classically learnable using any classically evaluatable hypothesis class, since this would violate the classical intractability of the concepts.
However, note that the quantum learner \textit{requires no data} to learn the concept class, so it is hard to argue that this is a genuine learning problem. 
We will discuss how to construct examples of such concept classes in Sections~\ref{subsec:dlp} and~\ref{subsec:3root}.

\begin{observation}[Trivial learning separation without data] 
    \label{lemma:trivial_sep}
    Consider a family of concept classes $\mathcal{C} = \{\mathcal{C}_n\}_{n \in \mathbb{N}}$, where each $\mathcal{C}_n = \{c_n\}$ consists of a single concept that is classically hard to evaluate on a fraction of inputs when given access to examples, yet it can be efficiently evaluated by a quantum algorithm.
    Then, $\mathcal{C}$ exhibits a $\mathsf{CC/QQ}$ separation which is quantum learnable without requiring data.
\end{observation}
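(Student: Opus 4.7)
The plan is to verify the two membership claims separately and then observe that the quantum learner never needs to query $EX(c_n, \mathcal{D}_n)$. For membership in $\mathsf{QQ}$, I would exploit the fact that each $\mathcal{C}_n = \{c_n\}$ is a singleton: take the hypothesis class $\mathcal{H}_n = \{c_n\}$, specified by the polynomial-size quantum circuit witnessing the efficient quantum evaluation of $c_n$. A learning algorithm that simply outputs this hardcoded specification --- without ever invoking $EX(c_n, \mathcal{D}_n)$ --- runs in polynomial time, outputs $h = c_n$ (which trivially satisfies the PAC error condition for every $\epsilon, \delta$), and uses a quantumly evaluatable hypothesis class. This simultaneously establishes $\mathcal{C} \in \mathsf{QQ}$ and the ``no data required'' part of the claim.

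For non-membership in $\mathsf{CC}$, I would argue by contradiction. Suppose there existed a classically evaluatable hypothesis class $\mathcal{H}$ with polynomial-time evaluator $\mathcal{A}_{\mathrm{eval}}$, together with a polynomial-time classical learning algorithm $\mathcal{A}$ that PAC-learns $\mathcal{C}$ under $\mathcal{D}$. Then for any $\epsilon, \delta$ one could build a polynomial-time classical algorithm that evaluates $c_n$ on a $(1-\epsilon)$-fraction of $\mathcal{D}_n$ with probability at least $1-\delta$: first run $\mathcal{A}$ using its example oracle access to $EX(c_n, \mathcal{D}_n)$ to obtain a specification of some $h \in \mathcal{H}_n$ with $\mathsf{Pr}_{x \sim \mathcal{D}_n}[h(x) \neq c_n(x)] \leq \epsilon$; then on any query $x$ return $\mathcal{A}_{\mathrm{eval}}(x, h)$. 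Composing two polynomial-time procedures that use polynomially many samples from $\mathcal{D}_n$ yields exactly the kind of classical evaluator that the hypothesis on $c_n$ forbids, giving the contradiction.

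Combining the two steps gives $\mathcal{C} \in \mathsf{QQ} \setminus \mathsf{CC}$, which by Definition~\ref{def:separations} is a $\mathsf{CC}/\mathsf{QQ}$ separation, and the explicit learner from the first step witnesses the ``without data'' qualification. The one place that requires care --- and which I expect to be the main (albeit minor) obstacle --- is aligning the quantifiers: the PAC framework quantifies over all $\epsilon, \delta < 1/2$, whereas the hardness hypothesis only asserts hardness of evaluation ``on a fraction of inputs.'' The argument goes through provided that ``a fraction'' is interpreted as the existence of some constants $\epsilon_0, \delta_0 < 1/2$ for which no efficient classical example-oracle algorithm can produce a classically evaluatable $\epsilon_0$-approximator of $c_n$ with success probability $1-\delta_0$; this is the natural reading, and it is exactly the contrapositive of classical PAC learnability with a classically evaluatable hypothesis class at parameters $(\epsilon_0, \delta_0)$, so no further reformulation is needed.
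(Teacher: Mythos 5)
Your proposal is correct and follows essentially the same route as the paper: the paper also establishes membership in $\mathsf{QQ}$ by letting the learner output the single (quantumly evaluatable) concept itself without querying the example oracle, and establishes non-membership in $\mathsf{CC}$ by noting that a classical learner with a classically evaluatable hypothesis class would contradict the assumed classical hardness of evaluating $c_n$ on a fraction of inputs even given examples. Your closing remark on aligning the $(\epsilon,\delta)$ quantifiers with the phrase ``hard on a fraction of inputs'' is a reasonable reading of the (informally stated) hypothesis and does not diverge from the paper's intent.
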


We want to emphasize that some concept classes are \textit{efficiently evaluatable on a classical computer}, yet they are \textit{not classically learnable}. 
One such example is the class of polynomially-sized logarithmic-depth Boolean circuits~\cite{kearns:clt}. 
Moreover, in Section~\ref{subsec:new_sep}, we provide an example of concept class which (assuming a plausible but relatively unexplored hardness assumption) exhibits a $\mathsf{CC/QC}$ separation where the concepts are efficiently evaluatable on a classical computer.

\subsubsection{Learning separations with a fixed hypothesis class and proper PAC learning}
\label{subsubsec:proper_pac}

In some practical settings, it can be natural to constrain the learner to only output hypotheses from a fixed hypothesis class.
To give a physics-motivated example, when studying phases of matter one might want to identify what observable properties characterize a phase.
One can formulate this problem as finding a specification of the correct hypothesis selected from a hypothesis class consisting of possible \emph{order parameters}.
More precisely, we fix the hypotheses to be of a particular form, e.g., those that compute certain expectation values of ground states given a specification of a Hamiltonian\footnote{Note the computation of these hypotheses can be $\mathsf{QMA}$-hard, as it involves preparing ground states.  Nonetheless, we can still study whether a learner is able to \textit{identify} which of these hypotheses matches the data.}.
We further discuss this setting of characterizing phases of matter in Section~\ref{subsec:proper_physics}, where we also discuss Hamiltonian learning as a natural setting in which the learner is constrained to output hypotheses from a fixed hypothesis class.

Recall that in the standard PAC learning framework discussed in the previous section, the learner is free to output arbitrary hypotheses (barring tractability constraints discussed in Appendix~\ref{appendix:poly_eval}).
It therefore fails to capture the setting where one aims to characterize phases of matter, as the learner might output hypotheses that are not order parameters, which will not allow one to identify physical properties that characterize a phase.
To remedy this, one could consider the setting where the learner is constrained to output hypotheses from a fixed hypothesis class.

\begin{definition}[Efficient PAC learnability with fixed hypothesis class]
\label{def:proper_learnability}
A concept class $\mathcal{C} = \{\mathcal{C}_n\}_{n \in \mathbb{N}}$ is \textit{efficiently PAC learnable with a fixed hypothesis class} $\mathcal{H} = \{\mathcal{H}_n\}_{n \in \mathbb{N}}$ under target distributions $\mathcal{D} = \{\mathcal{D}_n\}_{n \in \mathbb{N}}$ if there exists a (randomized) learning algorithms $\mathcal{A}$ with the following property: for every $c \in \mathcal{C}_n$, and for all $0 < \epsilon < 1/2$ and $0 < \delta < 1/2$, if $\mathcal{A}$ is given access to $EX(c, \mathcal{D}_n)$ and $\epsilon$ and $\delta$, then with probability at least $1 - \delta$, $\mathcal{A}$ outputs a specification of some $h \in \mathcal{H}_n$ that satisfies
\[
\mathsf{Pr}_{x \sim \mathcal{D}_n}\big[h(x) \neq c(x)\big] \leq \epsilon.
\]
Moreover, the learning algorithm $\mathcal{A}$ must run in time $\mathcal{O}(\mathrm{poly}(n, 1/\epsilon, 1/\delta))$.
\end{definition}

In the above definition, the probability $1-\delta$ is over the random examples from $EX(c, \mathcal{D}_n)$ and over the internal randomization of $\mathcal{A}_n$.
If the learning algorithm is a polynomial-time classical algorithm (or, a quantum algorithm), we say that the concept class is \textit{classically learnable with fixed hypothesis class} (or, \textit{quantumly learnable with fixed hypothesis class}, respectively).
An example of learning with a fixed hypothesis class is that of \emph{proper PAC learning}. 
In proper PAC learning the learner is constrained to only output hypothesis from the concept class it is trying to learn.

We emphasize again that if the learner is constrained to output hypotheses from a fixed hypothesis class, then it is allowed and reasonable for the hypothesis class to be (classically- or quantumly-) intractable.
In particular, doing so will not trivialize the definitions as it did in the standard PAC learning framework (see Appendix~\ref{appendix:details}) as this requires one to be able to change the hypotheses.

In the setting where the learner is constrained to output hypotheses from a fixed hypothesis class, it is relatively clear how to define a learning separation.
In particular, one only has to distinguish whether the learning algorithm is an efficient classical- or quantum- algorithm, which we capture by defining the following categories of learning problems.

\begin{definition}[Categories of learning problem -- fixed hypothesis class $\mathcal{H}$]\hspace{0pt}\\
\vspace{-15pt}
\begin{itemize}
    \item Let $\mathsf{C}_{\mathcal{H}}$ denote the set of tuples $\big(\mathcal{C}, \mathcal{D}\big)$ such that $\mathcal{C}$ is $\mathsf{\textbf{classically}}$ \textit{learnable with fixed hypothesis class $\mathcal{H}$} under target distributions $\mathcal{D}$.
    
    \item Let $\mathsf{Q}_{\mathcal{H}}$ denote the set of tuples $\big(\mathcal{C}, \mathcal{D}\big)$ such that $\mathcal{C}$ is $\mathsf{\textbf{quantumly}}$ \textit{learnable with fixed hypothesis class $\mathcal{H}$} under target distributions $\mathcal{D}$.
\end{itemize}
\end{definition}

We can now specify what it means for a learning problem to exhibit a separation between classical and quantum learners in the setting where the learner is constrained to output hypotheses from a fixed hypothesis class.

\begin{definition}[Learning separation -- fixed hypothesis class $\mathcal{H}$]\hspace{0pt}\\
\label{def:proper_separations}

\vspace{-10pt} 
\noindent A learning problem $L = \big(\mathcal{C}, \mathcal{D}\big) \in \mathsf{Q}_{\mathcal{H}}$ is said to exhibit a $\mathsf{C}_{\mathcal{H}}/\mathsf{Q}_{\mathcal{H}}$ separation if $L \not\in \mathsf{C}_{\mathcal{H}}$.
\end{definition}

In Sections~\ref{subsec:new_sep} and~\ref{subsec:elgamal}, we provide examples of learning separations in the setting where the learner is constrained to output hypotheses from a fixed hypothesis class.
Moreover, in Section~\ref{subsec:proper_physics}, we further discuss the practical relevance of this setting by discussing how it captures certain physics-motivated examples of learning settings.

\subsubsection{Identification versus Evaluation} 
\label{subsubsec:identify_evaluate}

An important difference in what exactly entails a learning task in practice is whether the learner has to only \emph{identify} a hypothesis that is close to the concept generating the examples, or whether the learner also has to \emph{evaluate} the hypothesis on unseen examples later on.
Moreover, these differences in tasks have implications for the role of quantum computers in achieving separations.
This difference in tasks is reflected in two aspects within the definitions discussed in this section.

Firstly, this difference in tasks is reflected in the difference between $\mathsf{CC}/\mathsf{QQ}$ and $\mathsf{CC}/\mathsf{QC}$ separations.
In particular, it is reflected in the task that requires a quantum computer (i.e., what task needs to be classically intractable yet efficient on a quantum computer).
On the one hand, for a $\mathsf{CC}/\mathsf{QC}$ separation, one has to show that only a quantum algorithm can \textit{identify} how to label unseen examples using a classical algorithm.
On the other hand, for a $\mathsf{CC}/\mathsf{QQ}$ separation, one also needs to show that only a quantum algorithm can \textit{evaluate} the labels of unseen examples.
In Section~\ref{subsec:new_sep}, we provide an example of a $\mathsf{CC/QC}$ separation (contingent on a plausible though relatively unexplored hardness assumption), where the classical hardness lies in \textit{identifying} an hypothesis matching the examples, since the concepts are efficiently evaluatable classically.

Secondly, the difference in tasks is also reflected in the difference between the setting where the learner is allowed to output arbitrary hypothesis, or whether it can only output hypotheses from a fixed hypothesis class.
In the arbitrary hypothesis class setting, one has to demand that the hypotheses are efficiently evaluatable (i.e., see Appendix~\ref{appendix:details}), which allows the learner to efficiently \textit{evaluate} the hypotheses on unseen examples.
In the fixed hypothesis class setting, the hypotheses need not be efficiently evaluatable, and the learner is only required to \textit{identify} the correct hypothesis without having to evaluate it on unseen examples.
In Sections~\ref{subsec:new_sep} and~\ref{subsec:elgamal}, we provide examples of separation in the setting where the learner is constrained to output hypotheses from a fixed hypothesis class.
Note that the classical hardness in these separation lies identifying the hypotheses, as we do not require the learner to evaluate the hypothesis on unseen examples afterwards.

\subsection{Complexity theory}
\label{subsec:complexity}

In this section we provide a short overview of the areas of complexity theory that we will refer to when discussing separations in the PAC learning framework.
In particular, we focus on the computational hardness assumptions that one can leverage to establish a learning separation.

We turn our attention to the definition of the PAC learning framework (see Definition~\ref{def:learnability}) and make some observations that will be relevant later.
First, we note that the hypothesis that the learning algorithm outputs is only required to be correct with probability $\epsilon$ over the target distribution.
In complexity theory, this is related to the notion of heuristic complexity classes (for more details see~\cite{bogdanov:average}).
To define heuristic complexity classes, we first need to incorporate the target distribution as a part of the problem, which is done by considering distributional problems.

\begin{definition}[Distributional problem]
\label{def:distr_problem}
A distributional problem $(L, \mathcal{D})$ consists of a language $L\subseteq\{0,1\}^*$\footnote{Throughout this paper, we also use an equivalent definition of a language $L\subseteq \{0,1\}^*$ by instead calling it a \emph{problem} and defining it as a function $L: \{0, 1\}^* \rightarrow \{0,1\}$ such that $L(x) = 1$ if and only if $x \in L$.} and a family of distributions $\mathcal{D} = \{\mathcal{D}_n\}_{n \in \mathbb{N}}$ such that $\mathrm{supp}(\mathcal{D}_n) \subseteq \{0,1\}^n$.
\end{definition}

\noindent Having defined distributional problems, we now define the relevant heuristic complexity classes.

\begin{definition}[Heuristic complexity~\cite{bogdanov:average}]
\label{def:heurbpp}
A distributional problem $(L, \mathcal{D})$ is in $\mathsf{HeurBPP}$ if there exists a polynomial-time randomized classical algorithm $\mathcal{A}$\footnote{More precisely, a Turing machine.} such that for all $n$ and $\epsilon > 0$\footnote{Here $0^{\lfloor 1/\epsilon \rfloor}$ denotes the bitstring consisting of $\lfloor1/\epsilon\rfloor$ zeroes (i.e., it is a unary specification of the precision $\epsilon$).}:
\begin{align}
\label{eq:heur}
     \mathsf{Pr}_{x \sim \mathcal{D}_n}\Big[\mathsf{Pr}\big(\mathcal{A}(x, 0^{\lfloor 1/\epsilon \rfloor}) = L(x)\big) \geq \frac{2}{3} \Big] \geq 1 - \epsilon,
\end{align}
where the inner probability is taken over the internal randomization of $\mathcal{A}$.\\
\indent Analogously, we say that a distributional problem $(L, \{\mathcal{D}_n\}_{n \in \mathbb{N}})$ is in $\mathsf{HeurBQP}$ if there exists a polynomial-time \textit{quantum} algorithm $\mathcal{A}$ that satisfies the property in Eq.~\eqref{eq:heur}.
\end{definition}

A related and perhaps better known area of complexity theory is that of \textit{average-case} complexity.
The main difference between average-case complexity and heuristic complexity, is that in the latter one is allowed to err, whereas in the former one can never err but is allowed to output ``don't know''.
Note that an average-case algorithm can always be converted into a heuristic algorithm by simply outputting a random result instead of outputting ``don't know''.
Similarly, if there is a way to efficiently check if a solution is correct, any heuristic algorithm can be turned into an average-case algorithm by outputting ``don't know'' when the solution is incorrect.
Even though they are closely related, in the PAC learning framework one deals with heuristic complexity.

While heuristic-hardness statements are not as common in quantum computing literature, many cryptographic security assumptions (such as that of RSA and Diffie-Hellman) are in fact examples of heuristic-hardness statements.
These heuristic-hardness statements are generally derived from \textit{worst-case to average-case reductions}, which show that being correct with a certain probability over a specific input distribution is at least as difficult as being correct on all inputs. 
Problems that admit a worst- to average-case reduction are called \emph{random self-reducible} (for a formal definition see~\cite{feigenbaum:rsr}).
It is worth noting that despite the term ``average-case'', these reductions can also yield heuristic hardness statements.
Specifically, if one can efficiently check whether a solution is correct, then a worst-case to average-case reduction also results in a heuristic hardness statement when the worst-case is hard.
For instance, a worst-case to average-case reduction by Blum and Micali~\cite{blum:hardness} demonstrates that for the discrete logarithm problem being correct on any $\frac{1}{2} + \frac{1}{\mathrm{poly}(n)}$ fraction of inputs is as difficult as being correct for all inputs (notably, modular exponentiation allows for efficient checking of the correctness of a discrete logarithm solution).

Finally, there is the notion of the example oracle.
The fact that access to the example oracle radically enhances what can be efficiently evaluated is related (though not completely analogous, as we will explain below) to the notion of ``advice'' complexity classes such as $\mathsf{P/poly}$.
\begin{definition}[Polynomial advice~\cite{arora:book}]
\label{def:p/poly}
A problem $L: \{0,1\}^* \rightarrow \{0,1\}$ is in $\mathsf{P/poly}$ if there exists a polynomial-time classical algorithm $\mathcal{A}$ with the following property: for every $n$ there exists an advice bitstring $\alpha_n \in \{0,1\}^{\mathrm{poly}(n)}$ such that for all $x \in \{0,1\}^n$:
\begin{align}
\label{eq:advice}
     \mathcal{A}(x, \alpha_n) = L(x).
\end{align} 

Analogously, we say that a problem $L$ is in $\mathsf{BQP/poly}$ if there exists a polynomial-time quantum algorithm $\mathcal{A}$ with the following property: for every $n$ there exists an advice bitstring $\alpha_n~\in~\{0,1\}^{\mathrm{poly}(n)}$ such that for all $x \in \{0,1\}^n$:
\begin{align}
\mathsf{Pr}\big(\mathcal{A}(x, \alpha_n) = L(x)\big) \geq \frac{2}{3},
\end{align}
where the probability is taken over the internal randomization of $\mathcal{A}$.
\end{definition}

Equivalently, one could also define $\mathsf{P/poly}$ as the class of problems solvable by a \textit{non-uniform} family of polynomial-size Boolean circuits (i.e., there could be a completely different circuit for each input length).
Also, since in the PAC learning framework we deal with randomized learning algorithms one may want to consider $\mathsf{BPP/poly}$ instead, however by~\cite{adleman:theorem} we have that $\mathsf{BPP}\subseteq \mathsf{P/poly}$, and so $\mathsf{BPP/poly} = \mathsf{P/poly}$.
On the other hand, from the perspective of the PAC learning framework, it is both natural and essential to allow the algorithm that uses the advice to err on a fraction of inputs, which is captured by the complexity class $\mathsf{HeurP/poly}$.  

\begin{definition}[Heuristic complexity with polynomial advice]
\label{def:heurp/poly}
A distributional problem $(L, \mathcal{D})$ is in $\mathsf{HeurP/poly}$ if there exists a polynomial-time classical algorithm $\mathcal{A}$ with the following property: for every $n$ and $\epsilon > 0 $ there exists an advice string $\alpha_{n, \epsilon} \in \{0,1\}^{\mathrm{poly}(n, 1/\epsilon)}$ such that:
\begin{align}
\label{eq:heur_advice}
     \mathsf{Pr}_{x \sim \mathcal{D}_n}\left[\mathcal{A}(x, 0^{\lfloor 1/\epsilon \rfloor}, \alpha_{n, \epsilon}) = L(x)\right] \geq 1 - \epsilon.
\end{align} 

Analogously, we say that $(L, \mathcal{D})$ is in $\mathsf{HeurBQP/poly}$ if there exists a polynomial-time quantum algorithm $\mathcal{A}$ such that: for every $n$ and $\epsilon > 0 $ there exists an advice string $\alpha_{n, \epsilon} \in \{0,1\}^{\mathrm{poly}(n, 1/\epsilon)}$ with the following property:
\begin{align}
\label{eq:heur_advice1}
     \mathsf{Pr}_{x \sim \mathcal{D}_n}\left[ \mathsf{Pr}\left(\mathcal{A}(x, 0^{\lfloor 1/\epsilon \rfloor}, \alpha_{n, \epsilon}) = L(x) \right) \geq \frac{2}{3} \right] \geq 1 - \epsilon,
\end{align} 
where the inner probability is taken over the internal randomization of $\mathcal{A}$.
\end{definition}

Note that in the PAC learning framework, the advice that the learning algorithm gets is of a specific form, namely that obtained through queries to the example oracle.
This is more closely related to the notion of ``sampling advice'' complexity classes such as $\mathsf{BPP/samp}$~\cite{huang:power} defined below.
In~\cite{huang:power} it is shown that $\mathsf{BPP/samp} \subseteq \mathsf{P/poly}$, i.e., sampling advice is not more powerful than the standard notion of advice.

\begin{definition}[Sampling advice~\cite{huang:power}]
\label{def:bpp/samp}
A problem $L: \{0,1\}^* \rightarrow \{0,1\}$ is in $\mathsf{BPP/samp}$ if there exists polynomial-time classical randomized algorithms $\mathcal{S}$ and $\mathcal{A}$ such that for every $n$:
\begin{itemize}
    \item $\mathcal{S}$ generates random instances $x \in \{0,1\}^n$  sampled from the distribution~$\mathcal{D}_n$.
    \item $\mathcal{A}$ receives as input $\mathcal{T} = \{(x_i, L(x_i)) \mid x_i \sim \mathcal{D}_n\}_{i=1}^{\mathrm{poly}(n)}$ and satisfies for all $x \in \{0,1\}^n$:
\begin{align}
\label{eq:samp}
    \mathsf{Pr}\big(\mathcal{A}(x, \mathcal{T}) = L(x)\big) \geq \frac{2}{3},
 \end{align} 
 where the probability is taken over the internal randomization of $\mathcal{A}$ and $\mathcal{T}$. 
\end{itemize}
\end{definition}

Having related notions in the PAC learning framework to different areas of complexity theory, we are now ready to determine what computational hardness assumptions one can leverage to establish that no classical learner is able to learn a given concept class.
More specifically, how hard must evaluating the concepts be for the concept class to not be classically learnable?
Since the learning algorithm is a \emph{randomized} algorithm that \emph{heuristically} computes the concepts when provided with \emph{advice} in the form of samples from the example oracle, the existence of a polynomial-time learning algorithm puts the concepts in a complexity class that we call $\mathsf{HeurBPP/samp}$.

\begin{definition}
\label{def:heurbpp/samp}
A distributional problem $(L, \mathcal{D})$ is in $\mathsf{HeurBPP/samp}$ if there exists classical randomized algorithms $\mathcal{S}$ and $\mathcal{A}$ such that for every $n$:
\begin{itemize}
    \item $\mathcal{S}$ generates random instances $x \in \{0,1\}^n$  sampled from the distribution~$\mathcal{D}_n$.
    \item $\mathcal{A}$ receives as input $\mathcal{T} = \{(x_i, L(x_i)) \mid x_i \sim \mathcal{D}_n\}_{i=1}^{\mathrm{poly}(n)}$ and for every $\epsilon>0$ satisfies:
\begin{align}
\label{eq:heursamp}
    \mathsf{Pr}_{x \sim \mathcal{D}_n}\left[\mathsf{Pr}\big(\mathcal{A}(x, 0^{\lfloor 1/\epsilon \rfloor}, \mathcal{T}) = L(x)\big) \geq \frac{2}{3}\right] \geq 1 - \epsilon,
 \end{align} 
 where the inner probability is taken over the internal randomization of $\mathcal{A}$ and $\mathcal{T}$.
\end{itemize}
\end{definition}

More precisely, if the concepts lie outside of $\mathsf{HeurBPP/samp}$, then the concept class is not classically learnable.
We can connect the class $\mathsf{HeurBPP/samp}$ to other complexity classes by adopting a proof strategy similar to that of~\cite{huang:power} (we defer the proof to Appendix~\ref{appendix:samp_poly}).

\begin{restatable}{lemma}{samppoly}
\label{lemma:samppoly}
$\mathsf{HeurBPP/samp} \subseteq \mathsf{HeurP/poly}$.
\end{restatable}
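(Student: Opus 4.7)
The plan is to adapt the argument that $\mathsf{BPP/samp} \subseteq \mathsf{P/poly}$ from~\cite{huang:power} to the heuristic setting. Concretely, given $(L, \mathcal{D}) \in \mathsf{HeurBPP/samp}$ witnessed by sampler $\mathcal{S}$ and algorithm $\mathcal{A}$, I will fix $n$ and $\epsilon > 0$, run $\mathcal{A}$ with the tightened precision parameter $\epsilon' = \epsilon/4$, and then convert the randomness of the training set $\mathcal{T} \sim \mathcal{D}_n^{\mathrm{poly}(n)}$ (supplied by $\mathcal{S}$) and the internal coins of $\mathcal{A}$ into a fixed polynomial-size advice string $\alpha_{n,\epsilon}$.

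First I would apply standard amplification. Call an input $x \in \{0,1\}^n$ \emph{good} if $\mathsf{Pr}_{\mathcal{T}, r}[\mathcal{A}(x, 0^{\lfloor 4/\epsilon \rfloor}, \mathcal{T}; r) = L(x)] \geq 2/3$. By the hypothesis, the set of good $x$ has $\mathcal{D}_n$-mass at least $1 - \epsilon/4$. Construct $\mathcal{A}'$ that draws $k = \Theta(\log(1/\epsilon))$ independent training sets $\mathcal{T}_1, \dots, \mathcal{T}_k$ using $\mathcal{S}$, runs $\mathcal{A}$ on each with independent coins $r_1, \dots, r_k$, and returns the majority vote. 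A Chernoff bound ensures that for every good $x$, the amplified failure probability $\mathsf{Pr}_{\mathcal{T}_i, r_i}[\mathcal{A}'(x) \neq L(x)]$ is at most $\epsilon/4$.

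Next I would pass to the joint probability and apply the probabilistic method. Splitting on whether $x$ is good, the joint error satisfies
\[
\mathsf{Pr}_{x \sim \mathcal{D}_n,\, \mathcal{T}_i, r_i}\bigl[\mathcal{A}'(x) \neq L(x)\bigr] \leq \tfrac{\epsilon}{4} + \bigl(1 - \tfrac{\epsilon}{4}\bigr)\tfrac{\epsilon}{4} \leq \tfrac{\epsilon}{2}.
\]
By Fubini, $\mathbb{E}_{\mathcal{T}_i, r_i}\bigl[\mathsf{Pr}_{x \sim \mathcal{D}_n}[\mathcal{A}'(x) \neq L(x)]\bigr] \leq \epsilon/2$, so there exist concrete strings $\mathcal{T}_1^*, \dots, \mathcal{T}_k^*, r_1^*, \dots, r_k^*$ for which the $\mathcal{D}_n$-error is at most $\epsilon$. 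Setting $\alpha_{n,\epsilon} := (\mathcal{T}_1^*, \dots, \mathcal{T}_k^*, r_1^*, \dots, r_k^*)$, the length is $k \cdot \mathrm{poly}(n, 1/\epsilon) = \mathrm{poly}(n, 1/\epsilon)$. The deterministic algorithm that parses $\alpha_{n,\epsilon}$ and runs $\mathcal{A}'$ on input $x$ with these hard-coded samples and coins is polynomial-time and satisfies the $\mathsf{HeurP/poly}$ condition of Definition~\ref{def:heurp/poly}.

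I expect the main point requiring care, rather than a genuine obstacle, to be parameter bookkeeping: one must choose $\epsilon'$ and $k$ so that the two sources of error (mass of bad inputs and amplification failure on good inputs) together sum to at most $\epsilon$, while keeping the advice size within $\mathrm{poly}(n, 1/\epsilon)$. A minor additional check is that concatenating $k$ independent training sets of size $\mathrm{poly}(n)$ and then majority-voting over the resulting runs of $\mathcal{A}$ still fits into polynomial time with polynomial advice, which is immediate once $k = O(\log(1/\epsilon))$.
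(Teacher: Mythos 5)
Your proof is correct, and it reaches the conclusion by a slightly different route than the paper. The paper invokes the argument of Huang et al.\ (Section 2 of the supplementary material of \cite{huang:power}) essentially as a black box: after fixing $0^{\lfloor 1/\epsilon \rfloor}$, it amplifies the per-input failure probability to be exponentially small and takes a union bound over the entire good set $I^n_{\text{correct}}(\epsilon)$, so that a single advice string (hard-coded samples and coins) makes the derandomized algorithm correct on \emph{every} good input; the heuristic error $\epsilon$ then comes entirely from the $\mathcal{D}_n$-mass of the bad set. You instead amplify only to failure probability $\epsilon/4$ on each good input (so $k = \Theta(\log(1/\epsilon))$ repetitions suffice), pass to the joint error over $x \sim \mathcal{D}_n$ and the randomness, and fix the samples and coins by an averaging argument over $\mathcal{D}_n$ rather than a union bound over inputs. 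This buys a more elementary and quantitatively lighter derandomization (no dependence on the number of good inputs, fewer repetitions, smaller advice), at the cost of a weaker intermediate guarantee: your fixed advice is only correct on a $(1-\epsilon/2)$-mass subset rather than on all of $I^n_{\text{correct}}(\epsilon)$. Since Definition~\ref{def:heurp/poly} only asks for correctness with probability $1-\epsilon$ over $\mathcal{D}_n$, this weaker guarantee is exactly what is needed, and your parameter bookkeeping (error split $\epsilon/4 + \epsilon/4 \leq \epsilon/2$, advice length $k \cdot \mathrm{poly}(n,1/\epsilon) = \mathrm{poly}(n,1/\epsilon)$, and deriving $0^{\lfloor 4/\epsilon \rfloor}$ from the input $0^{\lfloor 1/\epsilon \rfloor}$) checks out.
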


By the above lemma, we find that any problem not in $\mathsf{HeurP/poly}$ is also not in $\mathsf{HeurBPP/samp}$.
Consequently, to show the non-learnability of a concept class, it is sufficient to show that the concept class includes concepts that are not in $\mathsf{HeurP/poly}$.

\bigskip

Having discussed the related notions from computational learning theory and complexity theory, we are set to investigate how one establishes learning separations.
First, in Section~\ref{sec:eff_data}, we will analyze how existing learning separations have used efficient data generation, and we generalize this construction to $(i)$ establish a learning separation (contingent on a plausible though relatively unexplored hardness assumption) with efficiently evaluatable concepts, and $(ii)$ establish a learning separation in the setting where the learner is constrained to output an hypothesis from a fixed hypothesis class.
Afterwards, in Section~\ref{sec:no_eff_data}, we discuss the additional constructions required to prove separations in tune with the folklore that quantum machine learning is most likely to have it advantages when the data generated by a ``genuine quantum process''.
For an overview of the learning separations discussed throughout this paper see Table~\ref{table:separations}.

\begin{table}[h!]
\centering
\begin{tabular}{ c|c|c|c }
\textbf{First proposed in} & \textbf{Concepts based on} &\textbf{Separation} & \textbf{Complexity of concepts}\footnote{Assuming common assumptions in complexity theory and cryptography.} \\
\hline
\cite{liu:dlp} & Discrete logarithm & $\mathsf{CC/QQ}$ & $\not \in \mathsf{BPP}$\\
\hline
\cite{servedio:q_c_learnability} & Discrete cube root & $\mathsf{CC/QC}$ & $\not \in \mathsf{BPP}$ but $\in \mathsf{P/poly}$\\
\hline
Section~\ref{subsec:new_sep} & Modular exponentiation & $\mathsf{CC/QC}$ & $\in \mathsf{P}$\\
\hline
Section~\ref{subsec:elgamal} & Discrete cube root & $\mathsf{C}_{\mathcal{H}}/\mathsf{Q}_{\mathcal{H}}$ & $ \in \mathsf{P}$\\
\hline
Section~\ref{subsec:physical_systems} & Genuine quantum process & $\mathsf{CC/QQ}$ & $ \not\in \mathsf{HeurP/poly}$ but $\in \mathsf{BQP}$\\
\hline
\end{tabular}
\caption{An overview of the characteristics of the learning separations discussed in Section~\ref{sec:eff_data} and Section~\ref{sec:no_eff_data}.
Importantly, in Section~\ref{sec:no_eff_data} we establish learning separations extending beyond the cryptographic problems discussed in Section~\ref{sec:eff_data} to encompass essentially all $\mathsf{BQP}$-complete problems.}
\label{table:separations}
\end{table}

\setcounter{footnote}{0} 
\section{Learning separations with efficient data generation}
\label{sec:eff_data}

A commonality between the learning separations of~\cite{liu:dlp, servedio:q_c_learnability} is that the proof of classical non-learnability relies on the fact that the examples can be efficiently generated classically (i.e., the example oracle can be efficiently simulated classically)\footnote{The notion of efficiently generatable examples is closely related to the notion of \textit{random verifiability}~\cite{arrighi:blind}.}.
This is crucial, since it ensures that access to the example oracle does not enhance what a classical learner can evaluate relative to a conventional (non-learning) classical algorithm.
This then allows one to directly deduce classical non-learnability from a complexity-theoretic hardness assumption related to the concepts, since the existence of an efficient classical learner would imply the existence of an efficient classical  algorithm.
A similar observation was made by the authors of~\cite{perez:relation} (which came out in between~\cite{gyurik:old} and this paper), where they also study the problem of distribution-independent learning separations.

In this section we study the learning separations of~\cite{liu:dlp, servedio:q_c_learnability}, and we characterize them with respect to the type of learning separation they achieve (as discussed in Section~\ref{subsec:pac}), and the kind of hardness assumptions they leverage to obtain classical non-learnability (as discussed in Section~\ref{subsec:complexity}).
Firstly, in Section~\ref{subsec:dlp}, we discuss the $\mathsf{CC/QQ}$ separation of the discrete logarithm concept class of~\cite{liu:dlp}, whose concepts are believed to be classically intractable, no matter how they are specified.
Secondly, in Section~\ref{subsec:3root}, we discuss the $\mathsf{CC/QC}$ separation of the cube root concept class of~\cite{kearns:clt, servedio:q_c_learnability}, whose concepts are specified in a way that makes them classically intractable, though when specified in a different way they become classically efficient (i.e., the concepts are ``obfuscated'' versions of classically efficient functions).

While discussing the learning separations of~\cite{liu:dlp, servedio:q_c_learnability}, we will provide a fine grained analysis of the computational hardness assumptions necessary for establishing these separations.
It is crucial to be precise about the specific computational hardness assumptions, as subtle details significantly impact the types of learning separations achievable.
For instance, while the concept class based on the discrete logarithm from~\cite{liu:dlp} suggests a separation based on the canonical hardness assumption in~\cite{blum:hardness}, our fine-grained analysis reveals that this is not directly supported by the provided proof in~\cite{liu:dlp}. 
We address this by introducing a stronger modified hardness assumption and we show that this modification then suffices for a formal proof of classical non-learnability.
Nevertheless, we will also show that similar learning separations are attainable assuming the standard hardness assumption in~\cite{blum:hardness}.
However, these assumptions require the introduction of a new set of concepts based on a novel construction, where information about the base of the discrete logarithm is ``leaked'' through examples in the dataset.
Our detailed examination of the learning separation in~\cite{servedio:q_c_learnability} reveals a similar scenario, where we again could not recover a complete proof of classical non-learnability by leveraging the standard hardness assumptions in~\cite{kearns:clt}, based on the details in~\cite{servedio:q_c_learnability}.
We again address this issue in two ways: first, by introducing stronger hardness assumptions, which allow us to construct a proof of separation following the original approach of~\cite{servedio:q_c_learnability}, and second, by introducing a new construction of the concept class, for which it is possible to prove hardness based on the standard hardness assumption in~\cite{kearns:clt}.

In our fine-grained analysis, we also explore whether separations can be achieved for singleton concept classes, a crucial aspect in distinguishing genuine learning problems from problems that are just computational problems in disguise (as discussed in Section~\ref{sec:background}).
We delve into how different hardness assumptions can lead to learning separations for singleton concept classes with different characteristics, revealing a trade-off.
For example, while the computational hardness assumption we introduced to recover the learning separation of~\cite{liu:dlp} yields a $\mathsf{CC/QQ}$ separation for a binary singleton concept class, for the case of the typical hardness assumption in~\cite{blum:hardness} we could only achieve a $\mathsf{CC/QQ}$ separation for multi-valued (i.e., \textit{non-binary}) singleton concept class.
Similarly, we note a trade-off in learning separations based on the discrete cube root assumption, as discussed in Sections~\ref{subsec:3root}-\ref{subsec:elgamal}.
The computational hardness assumption introduced to recover the learning separation in~\cite{servedio:q_c_learnability} leads to a $\mathsf{CC/QC}$ separation for a singleton concept class, whereas our novel construction demonstrates that the typical hardness assumption in~\cite{blum:hardness} results in a $\mathsf{CC/QQ}$ separation for a singleton concept class (i.e., a tradeoff in the type of separation).
For an overview of these trade-offs, we refer to Table~\ref{table:assumptions_finedlp} \& \ref{table:assumptions_finedcr}.

Finally, while discussing the learning separations of~\cite{liu:dlp, servedio:q_c_learnability}, we notice that their proofs largely rely on the classical difficulty of \textit{evaluating} the hypotheses on unseen examples, rather than the difficulty of \textit{identifying} a hypothesis that is close to the concept generating the examples. 
To complement these works, we present two new examples of learning separations where the classical hardness lies in \textit{identifying} the concept that is generating the examples.
Specifically, in Section~\ref{subsec:new_sep}, we provide an example of a $\mathsf{CC/QC}$ separation (contingent on a plausible though relatively unexplored hardness assumption) where the concepts are classically efficiently evaluatable, making it impossible for the classical hardness to come from evaluating them on unseen examples.
Afterwards, in Section~\ref{subsec:elgamal}, we provide an example of a separation in the setting where the learner is constrained to output hypotheses from a fixed hypothesis class, in which case the learner is only required to identify the concept generating the examples, therefore also eliminating the possibility that the classical hardness comes from evaluating them on unseen examples\footnote{We remark that the concept class of Section~\ref{subsec:new_sep} also exhibits a separation in the setting the learner is constrained to output a hypothesis from a fixed hypothesis class. However, we choose to present it as a $\mathsf{CC/QC}$ separation to highlight that such separations are still possible if the concepts are classically efficiently evaluatable. Moreover, we still include the separation in the setting the learner is constrained to output a hypothesis from a fixed hypothesis class of Section~\ref{subsec:elgamal}, because it is not contingent on a relatively unexplored hardness assumption.}.

\begin{table}[h!]
\centering
\begin{tabular}{ c|c|c|c|c}
\diagbox{\textbf{Concepts}}{\textbf{Properties}} & \textbf{Hardness assumption} & \textbf{Separation} & \underline{\textbf{Binary?}} & \textbf{Singleton?} \\[0.3em]
\hline
\makecell{Fix sequence $\{(p_n, a_n)\}_{n \in \mathbb{N}}$:\\[0.1em] $c_i(x)$ in Def.~\ref{def:dlp}} & $\mathsf{DLP}$-$\mathsf{fixed}$ & $\mathsf{CC/QQ}$ & Yes & Yes \\[0.3em]
\hline
$c_{(a,p)}(x)$ in Def.~\ref{def:dlp1} & $\mathsf{DLP}$ & $\mathsf{CC/QQ}$ & Yes & No \\[0.3em]
\hline
$c(x, a, p)$ in Def.~\ref{def:dlp2} & $\mathsf{DLP}$ & $\mathsf{CC/QQ}$ & No & Yes \\ \hline
\end{tabular}
\caption{An overview of the trade-offs regarding the types of separations achievable based on different hardness assumptions for concepts based on the discrete logarithm, providing a fine-grained analysis of the first row of Table~\ref{table:separations}. 
The hardness assumption $\mathsf{DLP}$-$\mathsf{fixed}$ states that there exists an efficiently generatable sequence for which the discrete logarithm is intractable, whereas $\mathsf{DLP}$ is the typical hardness assumption of the discrete logarithm outlined in~\cite{blum:hardness}.
We highlight that to achieve a learning separation for a singleton concept class based on the typical $\mathsf{DLP}$ assumption outlined in~\cite{blum:hardness}, the trade-off is that our construction has to consider non-binary concepts.}
\label{table:assumptions_finedlp}

\begin{tabular}{ c|c|c|c|c}
\diagbox{\textbf{Concepts}}{\textbf{Properties}} & \textbf{Hardness assumption} & \underline{\textbf{Separation}} & \textbf{Binary?} & \textbf{Singleton?} \\[0.3em]
\hline
\makecell{Fix sequences $\{N_i\}_{i \in \mathbb{N}}$:\\[0.2em] $c_j(x)$ in Def.~\ref{def:cube-root}} & $\mathsf{DCRA}$-$\mathsf{fixed}$ & $\mathsf{CC/QC}$ & Yes & Yes \\[0.3em]
\hline
$c_{N,j}(x)$ in Def.~\ref{def:cube-root1} & $\mathsf{DCRA}$ & $\mathsf{CC/QC}$ & Yes & No \\[0.3em]
\hline
$c(N, x)$ in Def.~\ref{def:cube-root2} & $\mathsf{DCRA}$ & $\mathsf{CC/QQ}$ & Yes & Yes 
\end{tabular}
\caption{An overview of the trade-offs regarding the types of separations achievable based on different hardness assumptions for concepts based on the discrete cube root, providing a fine-grained analysis of the second row of Table~\ref{table:separations}. 
The hardness assumption $\mathsf{DCRA}$-$\mathsf{fixed}$ states that there exists an efficiently generatable sequence for which the discrete cube root is intractable, whereas $\mathsf{DCRA}$ is the typical hardness assumption of the discrete cube root outlined in~\cite{kearns:clt}.
We highlight that to achieve a learning separation for a singleton concept class based on the typical $\mathsf{DCRA}$ assumption outlined in~\cite{blum:hardness}, the trade-off is that we achieve a $\mathsf{CC/QQ}$ instead of a $\mathsf{CC/QC}$ separation.}
\label{table:assumptions_finedcr}
\end{table}

\clearpage

\subsection{Learning separation based on heuristic hardness:\ case of discrete~logarithm}
\label{subsec:dlp}

In this section, we study learning separations for concept classes based on the discrete logarithm problem, as first studied in~\cite{liu:dlp}.
The authors of~\cite{liu:dlp} introduce a concept class based on the discrete logarithm and they argue that it exhibits a learning separation which would correspond to a $\mathsf{CC/QQ}$ separation in our nomenclature.
Additionally, they demonstrate its efficient learnability using a general-purpose quantum learning algorithm often referred to as a \textit{quantum kernel method}.

\begin{definition}[Discrete logarithm concept class]
\label{def:dlp}
Let $\{p_n\}_{n\in \mathbb{N}}$ and $\{a_n\}_{n\in \mathbb{N}}$ be fixed sequences of $n$-bit prime numbers $p_n$ and generators $a_n$ of $\mathbb{Z}_{p_n}^*$ (i.e., the multiplicative group of integers modulo $p_n$). 
We define the \emph{discrete logarithm concept class} as $\mathcal{C}^{\mathrm{DL}}_n = \{c_{i}\}_{i \in \mathbb{Z}^*_{p_n}}$, where 
\begin{align}
\label{eq:c_dlp}
     c_i(x) = \begin{cases}1, & \text{if }\log_{(a_n,p_n)} x \in [i, i + \frac{p_n-3}{2}]\footnote{$\log_{(a,p)} x$ denotes the discrete logarithm modulo $p$ of $x$ with respect to the generator~$a$. That is, the discrete logarithm $\log_{(a, p)} x$ is the smallest positive integer $\ell$ such that $a^\ell \equiv x \mod p.$},\\ 0, & \text{otherwise.} \end{cases}
\end{align}
\end{definition}

\begin{remark}[Efficient data generation]
To see why the examples are efficiently generatable for the discrete logarithm class, first note that the examples are of the form 
\begin{align}
\label{eq:example_dlp}
\left(x, c_i(x) \right) = \left(a^y, f_i(y)\right),
\end{align}
where $y \in \{1, \dots, p-1\}$ is the unique integer such that $x \equiv a^y \mod p$, and we let
\begin{align}
f_i(y) = \begin{cases}1, & \text{if }y \in [i, i + \frac{p-3}{2}],\\ 0, & \text{otherwise.} \end{cases}
\end{align}
Secondly, note that $y \mapsto a^y \mod p$ is a bijection from $\{1, \dots, p-1\}$ to $\mathbb{Z}_p^*$, which implies that sampling $x \in \mathbb{Z}_p^*$ uniformly at random is equivalent to sampling $y \in \{0, \dots, p-1\}$ uniformly at random and computing $x = a^y \mod p$.
By combining this observation with Eq.~\eqref{eq:example_dlp}, one finds that one can efficiently generate examples of the discrete logarithm concept $c_i$ under the uniform distribution over $\mathbb{Z}_p^*$ by sampling $y \in \{1, \dots, p-1\}$ uniformly at random, and computing $(a^y, f_i(y))$.
\end{remark}

\paragraph{Hardness assumptions}
While~\cite{liu:dlp} suggests a separation for their concept class by leveraging the canonical hardness assumption of the discrete logarithm as outlined in~\cite{blum:hardness}, we note that we were unable to obtain a learning separation for the concept class in Definition~\ref{def:dlp} based on the hardness assumption in~\cite{blum:hardness} by following the proof ideas in~\cite{liu:dlp}.
Nonetheless, we will discuss two ways to address these shortcomings in the proof of separation.
First, we propose introducing a new and stronger hardness assumption that can be used to obtain a full proof of a learning separation, while ensuring compatibility with quantum kernel methods (since we did not change the concepts when modifying the definition of the concept class).
Next, we will explore an alternative approach to achieve a separation by changing the the concepts.
This change will allows us to leverage the standard hardness of the discrete logarithm outlined in~\cite{blum:hardness} to obtain a learning separation, albeit without direct compatibility with quantum kernel methods.

Our new hardness assumption (which we denote as $\mathsf{DLP}$-fixed) is that there exists a sequence of primes $\{p_n\}_{n\in \mathbb{N}}$ and generators $\{a_n\}_{n \in \mathbb{N}}$ for which the discrete logarithm is classically intractable.
The modified result from~\cite{liu:dlp} is summarized in the following theorem.

\begin{theorem}[modification of~\cite{liu:dlp}]
$L_{\mathrm{DLP}} = \big( \{\mathcal{C}_n^{\mathrm{DLP}}\}_{n \in \mathbb{N}}, \{\mathcal{D}^U_n\}_{n \in \mathbb{N}})$ exhibits a $\mathsf{CC/QQ}$ separation assuming the intractability of the discrete logarithm problem for the fixed sequences of primes $\{p_n\}_{n \in \mathbb{N}}$ and generators $\{a_n\}_{n \in \mathbb{N}}$, where $\mathcal{D}^U_n$ denotes the uniform distribution over $\mathbb{Z}_p^*$.
\end{theorem}
\begin{remark}[Heuristic hardness]
A worst-case hardness assumption is sufficient for classical non-learnability, since the discrete logarithm admits a worst-to-average case reduction~\cite{blum:hardness}.
Note that this worst-to-average case reduction holds for both the canonical hardness assumption as outlined in~\cite{blum:hardness} as well as our newly introduced and stronger $\mathsf{DLP}$-fixed assumption.
\end{remark}

However, we note that the $\mathsf{DLP}$-fixed assumption is stronger than the typical one concerning the discrete logarithm, as outlined in~\cite{blum:hardness}. 
In particular, the hardness assumption outlined in~\cite{blum:hardness} states that there does not exist a polynomial-time classical algorithm that can compute the discrete logarithm for most primes and generators, not that there does not exist a polynomial time algorithm for any fixed sequence of primes and generators.
It would thus be preferable to work with the weakest assumption and construct learning separations that leverage the canonical hardness of the discrete logarithm outlined in~\cite{blum:hardness}.
We show that this is indeed possible and that one can relax the requirement for the existence of sequences of primes and generators for which the discrete logarithm is classically intractable.
Specifically, one can define a different yet closely related concept class, that uses a different input space and underlying set of functions, and show that it exhibits a $\mathsf{CC/QQ}$ separation assuming the canonical hardness of the discrete logarithm outlined in~\cite{blum:hardness}.
\begin{restatable}{definition}{dlponedef}
\label{def:dlp1}
    We define the concept class
    \[
    \hat{\mathcal{C}}^{\mathrm{DL}}_n = \{c_{(a, p)} \mid p \text{ an $n$-bit prime and }a \in \{1, \dots, p-1\}\},
    \]
    where
    \[
        c_{(a, p)}(b, j, x) = \begin{cases}\mathds{1}_{[0, (p-1)/2]}(\log_{(a, p)}(x)) \quad \text{if }b = 0 \\ \mathrm{bin}((a, p), j) \quad \text{if }b=1.  \end{cases}
    \]
    for $b \in \{0,1\}$, $j \in [2n]=\{1, \dots, 2n\}$ and $x \in \{0,1\}^n$.
\end{restatable}

This new concept class essentially involves a mechanism for leaking information about the prime and generator forming the base of the logarithm of the concept in the data.
We delve deeper into this construction in Appendix~\ref{appendix:dlp1} (which may be of independent interest in other contexts) where we provide a formal proof of the following theorem.

\begin{restatable}{theorem}{dlpone}
\label{thm:dlp1}
    $\hat{L}_{\mathrm{DLP}} = \left(\{\hat{\mathcal{C}}^{\mathrm{DL}}_n\}_{n \in \mathbb{N}}, \{\mathcal{D}^U_n\}_{n \in \mathbb{N}} \right)$ exhibits a $\mathsf{CC/QQ}$ separation assuming the intractability of the discrete logarithm problem as outlined in~\cite{blum:hardness}, where $\mathcal{D}_n^{U}$ denotes the uniform distribution over $\mathcal{X}_n = \{0,1\} \times [2n] \times \{0,1\}^n$.
\end{restatable}

\paragraph{Singleton concept class separations}
Another question we would like to address is whether separations are possible for singleton concept classes.
As discussed in Section~\ref{sec:background}, this is crucial in determining whether a problem is a genuine learning problem or merely a computational problem in disguise.
Firstly, we observe that a learning separation for the singleton concept class $\mathcal{C}'_n = \{c\}$, for any choice of $c \in \mathcal{C}^{\mathrm{DL}}_n$ from Definition~\ref{def:dlp}, can be achieved if one assumes our stronger $\mathsf{DLP}$-fixed hardness assumption.
However, one might wonder whether it is also possible to get a separation for a singleton concept class assuming the canonical hardness assumption of~\cite{blum:hardness}.
It turns out that this is indeed possible, though it again requires one to define a different yet closely related singleton concept class that utilizes a similar construction as those in Definition~\ref{def:dlp1}.
\begin{restatable}{definition}{dlptwodef}
\label{def:dlp2}
    We define the concept class $\hat{\mathcal{C}}^{\mathrm{DL}}_n = \{c_n\}$, where
    \[
        c_{n}(a, p, x) = \log_{(a, p)}(x), \quad\text{for $a, p, x \in \{0,1\}^n$.}
    \]
\end{restatable}

\begin{restatable}{theorem}{dlptwo}
    $\hat{L}_{\mathrm{DLP}} = \left(\{\hat{\mathcal{C}}^{\mathrm{DL}}_n\}_{n \in \mathbb{N}}, \{\mathcal{D}^U_n\}_{n \in \mathbb{N}} \right)$ exhibits a $\mathsf{CC/QQ}$ separation assuming the intractability of the discrete logarithm problem as outlined in~\cite{blum:hardness}, where $\mathcal{D}_n^{U}$ denotes the uniform distribution over $\mathcal{X}_n = \{(a, p, x) \mid p\text{ an $n$-bit prime}, a \in \{0,\dots, p-1\}\text{ and } x \in \mathbb{Z}^*_p\}$.
\end{restatable}

A full proof of the above theorem can be found in Appendix~\ref{appendix:dlp2}.
We note that in our construction the learning separation for a singleton concept class assuming the the canonical hardness assumption of~\cite{blum:hardness} comes at the expense of making the concepts have non-binary labels.
We thus notice a trade-off where stronger assumptions are required for separations for binary singleton concept classes, while the difficulty assumption can be relaxed if we consider non-binary concepts.

\subsection{Learning separation based on obfuscation: the case of discrete cube root}
\label{subsec:3root}

In this section, we study learning separations for concept classes based on the discrete cube root, as first studied in~\cite{kearns:clt, servedio:q_c_learnability}. 
In short, in the discrete cube root problem, one is given some moduli $N$ and some input $x\in \mathbb{Z}_N^*$, and one is tasked with finding $y\in \mathbb{Z}_N^*$ such that $y^3 \equiv x \mod N$.
In contrast to the $\mathsf{CC/QQ}$ separations based on the discrete logarithm discussed in the previous section, we will show that the concept classes based on the discrete cube root exhibit a $\mathsf{CC/QC}$ separation.
In particular, one can construct a classically efficient hypothesis class for the concept classes based on the discrete cube root.
Nonetheless, we will show that one still needs a quantum learning algorithm to select the correct hypothesis from the classically efficient hypothesis class.

The authors of~\cite{kearns:crypto} introduce a concept class based on the
discrete cube root and show that it not efficiently classically learnable. 
Additionally, in~\cite{servedio:q_c_learnability} the authors argue that this concept class is efficiently learnable by a quantum learner, and that they thus exhibit a learning separation. 
However, upon closer inspection it turns out that their proof relies on being able to allow the input distribution to depend on the concept you are trying to learn.
In essence, changing the input distribution is used to reveal the moduli $N$ that is being used for a specific concept.
Note that this does not fit within the definition of the PAC learning framework in Section~\ref{sec:background}, where the input distribution has to be the same for all concepts in the concept class. 
One way to make the definition from~\cite{kearns:crypto} comply with our definition of the PAC learning framework without changing the underlying functions, is to fix the moduli for every instance size.

\begin{definition}[Cube root concept class]
\label{def:cube-root}
Let $\{N_i\}_{i \in \mathbb{N}}$ be a sequence of $i$-bit integers $N_i = pq$\footnote{Throughout the paper, the integers $N_i$ are known to the learner beforehand but $p$ and $q$ are not.}, where $p$ and $q$ are two $\lfloor i/2\rfloor$-bit primes such that $\gcd\big(3, (p-1)(q-1)\big) = 1$\footnote{By requiring that $p$ and $q$ satisfy $\gcd\big(3, (p-1)(q-1)\big) = 1$, we ensure that $f_N^{-1}$ exists.
}. We define the \emph{cube root concept class} as $\mathcal{C}_i^{\mathrm{DCR}} = \{c_{j}\}_{j \in [i]}$, with 
\[
c_j(x) = \mathrm{bin}(f_{N_i}^{-1}(x), j),
\]
where $\mathrm{bin}(y, j)$ denotes the $j$th bit of the binary representation of $y$, and the function $f_N^{-1}$ is the inverse of $f_N(x) = x^3 \mod N$ defined on $\mathbb{Z}^*_N$ (i.e., the multiplicative group of integers modulo $N$).
\end{definition}
\begin{remark}[Efficient data generation]
To see why the examples are efficiently generatable for the cube root concept class, first note that the examples are of the form 
\begin{align}
\label{eq:examples_3root}
(x, c_j(x)) = \left(y^3, \mathrm{bin}(y, i)\right),
\end{align}
where $y \in \mathbb{Z}_N^*$ is the unique element such that $x \equiv y^3 \mod N$.
Secondly, note that $f_N(x) = x^3 \mod N$ is a bijection from $\mathbb{Z}_N^*$ to itself, which implies that sampling $x \in \mathbb{Z}_N^*$ uniformly at random is equivalent to sampling $y \in \mathbb{Z}_N^*$ uniformly at random and computing $x = y^3 \mod N$.
By combining this observation with Eq.~\eqref{eq:examples_3root}, one finds that one can efficiently generate examples of the cube root concept $c_j$ under the uniform distribution over $\mathbb{Z}_N^*$ by first sampling $y \in \mathbb{Z}_N^*$ uniformly at random, and then computing $y^3 \mod N$ together with the $j$th bit of the binary representation of $y$.
\end{remark}

\paragraph{Classically efficient hypotheses}
As already mentioned, there is a significant distinction between the learning separations for the discrete logarithm concept class and the cube root concept class that is worth highlighting: the latter is quantumly learnable using a \textit{classically} evaluatable hypothesis class.
To see why this is the case, it is important to note that $f_N^{-1}$ is of the form 
\begin{align}
\label{eq:d}
f_N^{-1}(y) = y^{d^*} \mod N,
\end{align}
for some $d^*$ that only depends on $N$\footnote{In cryptographic terms, $d^*$ is the private decryption key corresponding to the public encryption key $e=3$ and public modulus $N$ in the RSA cryptosystem.}.
The function $f_N^{-1}$ is a type of ``trap-door function'' in that if one is also given $d^*$, then computing $f_N^{-1}$ suddenly becomes classically tractable.
In other words, there exist polynomially-sized Boolean circuits which evaluate this function, whereas for the discrete logarithm we do not know whether such circuits exist.
In this example we thus see the relevance of how the concepts are specified.
The specifications ``$f^{-1}_{N}$ where $f(x) = x^3$'' and ``$f_N^{-1} = x^{d^*}$'' refer to the same functions, yet computing them is in one case classically tractable, and in the other case it is classically intractable (under the DCRA).
The ideas of concealing (easy) functions in difficult descriptions is reminiscent of the term ``obfuscation'' in computer science, and we will use this term in this context as well.
Specifically, we say that the specification "$f^{-1}_{N}$ where $f(x) = x^3$" is an obfuscation of the specification "$f_N^{-1} = x^{d^*}$" in the sense that both represent the same function, but one is not only much harder to understand, but in this case also harder to evaluate.
Using the terminology of Section~\ref{subsec:complexity}, this establishes that the problem of evaluating the concepts actually lies inside $\mathsf{P/poly}$ (where the advice string -- i.e., $d^*$ -- is used to ``de-obfuscate'' the function).

With regards to quantum learnability, in~\cite{servedio:q_c_learnability} the authors note that using Shor's algorithm a quantum learning algorithm can efficiently compute $d^*$ following the standard attack on the RSA cryptosystem.
The cube root concept class is thus quantumly learnable using the classically evaluatable hypothesis class 
\begin{align}
\left\{f_{d,i}(x) = \mathrm{bin}(x^d\text{ mod }N, i)\text{ }\big|\text{ }d\in [N],\text{ }i \in [n]\right\},
\end{align}

Another feature of the cube root concept class which warrants a comment is that even though computing $d^*$ does not require access to the example oracle (recall that $N$ is known  beforehand), we still have to learn the bit of $x^{d^*}$ that is generating the examples, which does require access to the example oracle (i.e., it requires data).

\paragraph{Hardness assumptions}
Much like in the case of the discrete logarithm based separations discussed in Section~\ref{subsec:dlp}, while~\cite{servedio:q_c_learnability} suggests a separation for their concept class by leveraging the canonical hardness assumption of the discrete cube root as outlined in~\cite{kearns:clt}, we note that we were unable to obtain a learning separation for the concept class in Definition~\ref{def:cube-root} based
on the hardness assumption in~\cite{kearns:clt}by following the proof ideas in~\cite{servedio:q_c_learnability}.
Nonetheless, we will discuss two ways of obtaining a learning separation for the modified concept class.
First, we propose introducing a new and stronger hardness assumption that adequately supports a proof of separation.
Next, we will explore an alternative approach to achieve a learning separation by modifying the underlying functions, allowing us to leverage the canonical hardness of the discrete cube root outlined in~\cite{kearns:clt}.

As discussed above, one approach to achieve a $\mathsf{CC/QC}$ separation for the modified concept class defined in~\ref{def:cube-root} involves relying on a stronger hardness assumption.
Specifically, this assumption (which we denote as $\mathsf{DCR}$-fixed) is that there exists a sequence of moduli $\{N_i\}_{i\in \mathbb{N}}$ for which the discrete cube root is classically intractable.
We summarize the results regarding the separation of the modified cube root concept class in the following theorem.

\begin{theorem}[\cite{servedio:q_c_learnability, kearns:clt}]
$L_{\mathrm{DCR}} = \big( \{\mathcal{C}_i^{\mathrm{DCR}}\}_{i \in \mathbb{N}}, \{\mathcal{D}^U_i\}_{i \in \mathbb{N}})$ exhibits a $\mathsf{CC/QC}$ separation assuming the intractability of the discrete cube root for the fixed sequence of moduli $\{N_i\}_{i \in \mathbb{N}}$, where $\mathcal{D}^U_i$ denotes the uniform distribution over $\mathbb{Z}_{N_i}^*$.
\end{theorem}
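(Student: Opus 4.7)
The plan is to establish the two directions of the $\mathsf{CC/QC}$ separation independently: quantum learnability with a classically evaluatable hypothesis class, and classical non-learnability.

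For the positive direction ($L_{\mathrm{DCR}} \in \mathsf{QC}$), I would build the following quantum learner. Since $N$ is known beforehand, the learner first invokes Shor's algorithm to factor $N = pq$ in polynomial time, then computes $\varphi(N) = (p-1)(q-1)$, and uses the extended Euclidean algorithm to obtain $d^* = 3^{-1} \bmod \varphi(N)$, which exists by the hypothesis $\gcd(3, (p-1)(q-1)) = 1$. At this point the learner has a classical description of $f_N^{-1}(x) = x^{d^*} \bmod N$. To identify the bit index $i^*$ of the target concept $c_{i^*}$, the learner draws $m = \mathcal{O}(\log(n/\delta)/\epsilon)$ examples and, for each candidate $i \in [n]$, computes the empirical agreement of the hypothesis $f_{d^*, i}(x) = \mathrm{bin}(x^{d^*} \bmod N, i)$ with the observed labels. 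By a Chernoff plus union bound over the $n$ candidate indices, the $i$ with the largest empirical agreement equals $i^*$ with probability at least $1-\delta$ (using that for $i \neq i^*$ the disagreement on the uniform distribution over $\mathbb{Z}_N^*$ is bounded away from $0$ by a constant). The output hypothesis $f_{d^*, i^*}$ lies in the classically evaluatable hypothesis class $\{f_{d, i} : d \in [N], i \in [n]\}$, establishing membership in $\mathsf{QC}$.

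For the negative direction ($L_{\mathrm{DCR}} \notin \mathsf{CC}$), I would argue by contradiction using the efficient data generation property identified in Section~\ref{subsec:3root}. Suppose a polynomial-time classical learning algorithm $\mathcal{A}$ with a classically evaluatable hypothesis class PAC-learns $\mathcal{C}^{\mathrm{DCR}}_n$ under $\mathcal{D}^U_n$. Fix the concept $c_0 \in \mathcal{C}^{\mathrm{DCR}}_n$ corresponding to the least-significant bit of $f_N^{-1}$. Since one can efficiently simulate $EX(c_0, \mathcal{D}^U_n)$ classically by sampling $y$ uniformly from $\mathbb{Z}_N^*$ and returning $(y^3 \bmod N, \mathrm{bin}(y, 0))$, we can run $\mathcal{A}$ with this simulated oracle, setting $\epsilon = 1/3$ and $\delta = 1/3$. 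With probability at least $2/3$, $\mathcal{A}$ returns a specification of a classically evaluatable hypothesis $h$ that agrees with $c_0$ on a $2/3$-fraction of $\mathbb{Z}_N^*$. Evaluating $h$ on a uniformly random input then yields the least-significant bit of $f_N^{-1}$ correctly on at least a $\tfrac{1}{2} + \tfrac{1}{\mathrm{poly}(n)}$ fraction of inputs in overall randomized polynomial time. By the worst-case-to-average-case reduction of Alexi, Chor, Goldreich and Schnorr~\cite{alexi:rsa, goldwasser:rsa}, such an algorithm yields a $\mathsf{BPP}$ algorithm for computing $f_N^{-1}$ on \emph{all} inputs, contradicting the DCRA.

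The main technical obstacle is the quantum-learnability side, specifically the identification of the correct bit index $i^*$ from examples: one must argue that distinct concepts $c_i, c_{i'} \in \mathcal{C}^{\mathrm{DCR}}_n$ disagree on a non-negligible fraction of $\mathbb{Z}_N^*$ under the uniform distribution so that empirical agreement reliably distinguishes them. This follows because $f_N^{-1}$ is a bijection on $\mathbb{Z}_N^*$, so the distribution of $f_N^{-1}(x)$ for uniform $x$ is itself uniform on $\mathbb{Z}_N^*$, making the $i$-th bit an almost-balanced Boolean function independent from the $i'$-th bit up to lower-order corrections. The non-learnability side is then essentially a bookkeeping exercise once the efficient simulation of $EX$ and the Alexi et al.\ reduction are invoked.
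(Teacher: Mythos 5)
Your proposal is correct and follows essentially the same route as the paper: quantum learnability via Shor's algorithm to recover $d^*$ (the standard RSA attack) plus a polynomial number of examples to pick out the bit index within the classically evaluatable class $\{f_{d,i}\}$, and classical non-learnability by classically simulating $EX(c,\mathcal{D}^U_n)$ via $(y^3 \bmod N, \mathrm{bin}(y,i))$ and invoking the Alexi--Chor--Goldreich--Schnorr worst-case-to-average-case reduction for the least-significant bit to contradict the DCRA. The only cosmetic point is that the bit-identification step does not actually need pairwise disagreement bounded below by a constant; the standard realizable-PAC/ERM bound over the $n$ candidate hypotheses already suffices.
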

\begin{remark}[Heuristic hardness]
A worst-case hardness assumption is sufficient for classical non-learnability, as the discrete cube root admits a worst-to-average case reduction~\cite{alexi:rsa, goldwasser:rsa}.
Note that this worst-to-average case reduction holds for both the canonical hardness assumption as outlined in~\cite{kearns:clt} as well as our newly introduced and weaker $\mathsf{DCR}$-fixed assumption.
\end{remark}

However, we again note that the $\mathsf{DCR}$-fixed assumption is stronger than the typical one concerning the discrete cube root, as outlined in~\cite{kearns:clt}.
In particular, the hardness assumption outlined in~\cite{kearns:clt} states that there does not exist a polynomial-time classical algorithm that can compute the discrete cube root for most moduli, not that there does not exist a polynomial-time algorithm for any fixed sequence of moduli.
This prompts the question of whether learning separations can also be achieved by leveraging the canonical hardness of the discrete cube root outlined in~\cite{kearns:clt}. 
We show that this is indeed possible and that one can relax the requirement for the existence of sequences of moduli for which the discrete cube root is classically intractable. 
Specifically, one can define a different yet closely related concept class, that uses a different input space and underlying set of functions, and show that it exhibits a $\mathsf{CC/QC}$ separation assuming the canonical hardness of the discrete logarithm outlined in~\cite{kearns:clt}.

\begin{restatable}{definition}{dcronedef}
\label{def:cube-root1}
    We define the concept class $\hat{\mathcal{C}}^{\mathrm{DCR}}_n = \{c_{(N, i)} \mid N = pq \text{ as in Definition~\ref{def:cube-root}}, \text{ }i \in [n]\}$, where
    \[
        c_{(N, i)}(b, j, x) = \begin{cases}\mathrm{bin}(f^{-1}_{N}(x), i) \quad &\text{if }b = 0 \\ \mathrm{bin}(N, j) \quad &\text{if }b=1.  \end{cases}
    \]
    for $b \in \{0,1\}$, $j \in [n]=\{1, \dots, n\}$ and $x \in \{0,1\}^n$.
\end{restatable}

This new concept class essentially involves a mechanism for leaking information about the moduli used by the concept in the data. 
We defer a formal proof of the theorem below to Appendix~\ref{appendix:dcr1}.

\begin{restatable}{theorem}{dcrone}
    $\hat{L}_{\mathrm{DCR}} = \left(\{\hat{\mathcal{C}}^{\mathrm{DCR}}_n\}_{n \in \mathbb{N}}, \{\mathcal{D}^U_n\}_{n \in \mathbb{N}} \right)$ exhibits a $\mathsf{CC/QC}$ separation assuming the intractability of the discrete cube root as outlined in~\cite{kearns:clt}, where $\mathcal{D}_n^{U}$ denotes the uniform distribution over $\mathcal{X}_n = \{0,1\} \times [n] \times \{0,1\}^n$.
\end{restatable}

\paragraph{Singleton concept class separations}

Another question we would like to address is whether separations are possible for singleton concept classes.
As discussed in Section~\ref{sec:background}, this is crucial in determining whether a problem is a genuine learning problem or merely a computational problem in disguise.
Firstly, we observe that a learning separation for the singleton concept class $\mathcal{C}_i = \{c_i\}$, where $c_i \in \mathcal{C}^{\mathrm{DCR}}_i$ corresponds to the least significant bit, can be achieved if one assumes our new, yet stronger $\mathsf{DCR}$-fixed hardness assumptions.
However, one might wonder whether it is also possible to get a separation for a singleton concept class assuming the canonical hardness assumption of~\cite{kearns:clt}.
It turns out that this is indeed the case, though it requires one to define a different yet closely related singleton concept class that utilizes a similar construction to Definition~\ref{def:cube-root1}.

\begin{restatable}{definition}{dcrtwodef}
\label{def:cube-root2}
    We define the concept class $\hat{\mathcal{C}}^{\mathrm{DCR}}_n = \{c_n\}$, where
    \[
        c_n(N, x) = \mathrm{bin}(f^{-1}_{N}(x), n) \quad\text{for $N, x \in \{0,1\}^n$.}
    \]
\end{restatable}

\begin{restatable}{theorem}{dcrtwo}
    $\hat{L}_{\mathrm{DCR}} = \left(\{\hat{\mathcal{C}}^{\mathrm{DCR}}_n\}_{n \in \mathbb{N}}, \{\mathcal{D}^U_n\}_{n \in \mathbb{N}} \right)$ exhibits a $\mathsf{CC/QQ}$ separation assuming the intractability of the discrete cube root as outlined in~\cite{kearns:clt}, where $\mathcal{D}_n^{U}$ denotes the uniform distribution over $\mathcal{X}_n = \{(N, x) \mid N=pq\text{ as in Definition~\ref{def:cube-root}} \text{ and } x \in \mathbb{Z}^*_N\}$.
\end{restatable}

A formal proof of the above theorem can be found in Appendix~\ref{appendix:dcr2}.
We note that in our construction the learning separation for a singleton concept class assuming the the canonical hardness assumption of~\cite{blum:hardness} comes at the expense of having to change the type of learning separation that is achieved from a $\mathsf{CC/QC}$ separation to a $\mathsf{CC/QQ}$.
Therefore, we notice again tradeoff where stronger assumptions are required for $\mathsf{CC/QC}$ separations for binary singleton concept classes, while the difficulty assumption can be relaxed for $\mathsf{CC/QQ}$ separations for binary singleton concept classes

\subsection{Learning separation with efficiently evaluatable concepts}
\label{subsec:new_sep}

In this section we establish a learning separation (contingent on a plausible though relatively unexplored hardness assumption) where the concepts do not just admit polynomial-sized Boolean circuits, but are also given in a representation which is efficiently evaluatable on a classical computer.
For this concept class, the hardness of learning them cannot stem from the hardness of \textit{evaluating} the concepts, and it thus lies in \textit{identifying} which specific concept is generating the examples.
To the best of our knowledge, no such separation was given in the literature before.
The concept class that satisfies all of the above is the modular exponentiation concept class defined as follows.

\begin{definition}
\label{def:modexp}
We define the \emph{modular exponentiation concept class} as
\[
\mathcal{C}^{\mathrm{modexp}}_n = \{c_{(N, d)} \mid N=pq\text{ an $n$-bit $2^c$-integer as in Definition~\ref{def:2c} }, 0 \leq d \leq (p-1)(q-1)\},
\]
where
\[
    c_{(N, d)}(b, x) = \begin{cases}x^d \mod N  &\text{if }b = 0 \\ N &\text{if }b=1.  \end{cases}
\]
for $b \in \{0,1\}$ and $x \in \{0,1\}^n$.
\end{definition}

\begin{remark}
The concepts are not binary-valued, and it is an open question whether and how the separation can be translated to also hold for binary-valued concepts.
\end{remark}

\begin{definition}[$2^c$-integer]
\label{def:2c}
    An $n$-bit integer $N = pq$ is a $2^c$-integer if $p$ and $q$ are two $\lfloor n/2 \rfloor$-bit primes such that $\gcd\big(3, (p-1)(q-1)\big) = 1$ and:
    \begin{itemize}
    \item[(i)] There exists a constant $c$ (i.e., independent of $n$) such that $2^c \nmid (p-1)(q-1)$.
    \item[(ii)] There exists a constant $c'$ (i.e., independent of $n$) such that $\gcd(p-1, q-1) = 2^{c'}$.
    \end{itemize}
\end{definition}

We summarize the learning separation of the modular exponentiation concepts in Theorem~\ref{thm:modexp}, and defer the proof to Appendix~\ref{appendix:modexp}, and we discuss various aspects of this learning separation below.

\begin{restatable}{theorem}{modexp}
\label{thm:modexp}
$L_{\mathrm{modexp}} = \left(\{\mathcal{C}^{\mathrm{modexp}}_n\}_{n \in \mathbb{N}}, \{\mathcal{D}^U_n\}_{n \in \mathbb{N}} \right)$ exhibits a $\mathsf{CC/QC}$ separation assuming the intractability of the discrete cube root as stated in~\cite{kearns:clt} when restricted to $2^c$-integer moduli\footnote{Here we mean the intractability assumption that states that no classical algorithm can efficiently compute the discrete cube root $f_N^{-1}(x)$ for a $\frac{1}{2} + \frac{1}{\mathrm{poly}(n)}$ fraction of all $2^c$-integer moduli $N$ and inputs $x \in \mathbb{Z}_N^*$.}, where $\mathcal{D}_n^{U}$ denotes the uniform distribution over $\mathcal{X}_n = \{0,1\} \times \{0,1\}^n$.
\end{restatable}

\paragraph{Quantum learnability}
To show that the modular exponentiation concept class is quantumly learnable, we use a combination of the quantum algorithm for order-finding and the quantum algorithm for the discrete logarithm~\cite{shor:factoring}.
The key observation is that an example $(x, x^d\text{ mod }N)$ specifies a congruence relation $d \equiv a\text{ mod }r$, where $r$ denotes the multiplicative order of $x \in \mathbb{Z}^*_N$, and $a$ denotes the discrete logarithm of $x^d$ in the subgroup generated by $x$ (i.e., the smallest positive integer $\ell$ such that $x^\ell \equiv x^d\mod N$).
Next, using the fact that $N$ is a $2^c$-number, we show that a polynomial number of these congruences suffices to recover $d$ with high probability.

\paragraph{Hardness assumption}
We show that the above concept class is not classically learnable assuming the intractability of the discrete cube root as stated in~\cite{kearns:clt} restricted to $2^c$-integer moduli{\color{blue}\footnotemark[7]}.
We will refer to this assumption as the $2^c$-discrete cube root assumption ($2^c$-DCRA).
To see this, note that the modular exponentiation concept class contains the cube root function $f_N^{-1}$ discussed in Section~\ref{subsec:3root} (though this time it is not ``obfuscated'').
Moreover, using the construction also outlined in Section~\ref{subsec:3root} we can efficiently generate examples $(y, f_N^{-1}(y))$, for $y \in \mathbb{Z}_N^*$ uniformly at random.
If we put these examples into an efficient classical learning algorithm for the modular exponentiation concept class, it would with high probability identify a classically efficiently evaluatable hypothesis that agrees with $f_N^{-1}$ on a $1 - \frac{1}{\mathrm{poly}(n)}$ fraction of inputs.
Similar to Section~\ref{subsec:3root}, by the worst-case to average-case reduction of~\cite{alexi:rsa, goldwasser:rsa} this then violates the $2^c$-DCRA.

We note that by imposing that $N$ is a $2^c$-integer might cause the $2^c$-DCRA to no longer hold, since there could be an efficient classical algorithm for these specific $2^c$-integer moduli. 
However, since $2^c$-integers  are generally not considered to be unsecure or ``weak'' moduli for the RSA cryptosystem, and since recently factored RSA numbers\footnote{\url{https://en.wikipedia.org/wiki/RSA_numbers}} are all essentially $2^c$-integers, it is plausible that the DCRA still holds when restricted to $2^c$-integers (see Appendix~\ref{appendix:moduli} for more details).

\paragraph{Conclusion}
The modular exponentiation concept class thus exhibits a $\mathsf{CC/QC}$ separation (assuming the $2^c$-DCRA hold), where the concepts are classically efficiently evaluatable.
Since the concepts are classically efficiently evaluatable, one could argue that the classical hardness of learning lies in \textit{identifying} rather than \textit{evaluating} a hypothesis that is close to the concept generating the examples.
We remark that for the modular exponentation concept class, it is not possible to restrict the concept class and obtain a similar learning separation where a quantum learner does not require any data (i.e., similar to Observation~\ref{lemma:trivial_sep}).
In fact, since the concepts are efficiently evaluatable classically, any polynomially-sized subset of concepts is classically learnable since a classical learning algorithm can do a brute-force search to find the concept that best matches the data.

In the next section, we present an example of a separation in the setting where the learner is constrained to only
output hypotheses from a fixed hypothesis class.
Since the learner is not required to evaluate the concepts on unseen examples, it can be argued that in this case the classical hardness also lies in identifying 
  rather than evaluating the concept generating the examples.
  
\subsection{Learning separation with a fixed hypothesis class}
\label{subsec:elgamal}

In this section we establish a separation in the setting where the learner is constrained to only output hypotheses from a fixed hypothesis class.
Recall that in this setting the learner is not required to be able to evaluate the concepts, so the hardness of learning must stem from the hardness of identifying the hypothesis that is close to the concept generating the data. 
The main differences compared to the modular exponentiation concept class are that the concepts discussed in this section are binary-valued and that it is unknown whether they exhibit a separation in the setting where the learner is free to output arbitrary hypotheses.
The concept class we discuss in this section is defined below, and it is a modification of the cube root concept class from Definition~\ref{def:cube-root}.

\begin{definition}[Cube root identification concept class]
\label{def:cube-root-id}
We define the \emph{cube root identification} concept class as 
\[
    \mathcal{C}^{\mathrm{DCRI}}_n = \Big\{c_{(N, m)} \mid N = pq \text{, $p$ and $q$ two $\lfloor n/2\rfloor$-bit primes s.t.\ $\gcd\big(3, (p-1)(q-1)\big) = 1$}, \text{ }m \in \mathbb{Z}_N^*\Big\},
\]
where
    \[
        c_{(N, m)}(b, x) = \begin{cases}\mathrm{bin}(m^3\text{ mod } N,\text{  }\mathrm{int}(x_1:\dots:x_{\lfloor \log n \rfloor})) \quad &\text{if }b = 0 \\ \mathrm{bin}(N, \mathrm{int}(x_1:\dots:x_{\lfloor \log n \rfloor})) \quad &\text{if }b=1.  \end{cases}
    \]
for $b \in \{0,1\}$, $x \in \{0,1\}^n$.
\end{definition}
\begin{remark}
    Here $\mathrm{bin}(y, k)$ denotes the $k$th bit of the binary representation of $y$, and we additionally use $\mathrm{int}(x_1:\dots:x_{\lfloor \log n \rfloor})$ to denote the integer encoded by the first $\lfloor \log n \rfloor$-bits of $x \in \{0,1\}^n$.
\end{remark}

We summarize the learning separation of the cube root identification concepts in Theorem~\ref{thm:modexp}, we defer the proof to Appendix~\ref{appendix:elgamal}, and we discuss various aspects of this learning separation below.
\begin{restatable}{theorem}{elgamal}
\label{thm:elgamal}
$L_{\mathrm{DCRI}} = \left(\{\mathcal{C}^{\mathrm{DCRI}}_n\}_{n \in \mathbb{N}}, \{\mathcal{D}^U_n\}_{n \in \mathbb{N}} \right)$ exhibits a $\mathsf{C}_{\mathcal{H}}/\mathsf{Q}_{\mathcal{H}}$ separation assuming the intractability of the discrete cube root as outlined in~\cite{kearns:clt}, where $\mathcal{D}_n^{U}$ denotes the uniform distribution over $\mathcal{X}_n = \{0,1\} \times \{0,1\}^n$.
\end{restatable}

\paragraph{Quantum learnability}
To establish that the cube root identification concept class  is quantumly proper learnable, we first note that using $\mathcal{O}(\mathrm{poly}(n))$ examples of a concept $c_m$ under the uniform distribution we can with high probability reconstruct the full binary representation of $m^3$.
Since we can also with high probability recover $N$ using the same amount of examples, we can use Shor's algorithm~\cite{shor:factoring} to compute $d$ such that $(m^3)^d \equiv m\text{ mod }N$ (allowing us to correctly identify the concept $c_m$). 
We remark that the quantum learner needs access to the data in order to obtain a full reconstruction of the binary representation of both $m^3 \mod N$ as well as the modulus $N$.

\paragraph{Hardness assumption}
We show that the cube root identification concept class is not classically learnable with a fixed hypothesis class under the \emph{Discrete Cube Root Assumption} (DCRA) discussed in Section~\ref{subsec:3root}.
To show that the existence of an efficient classical learner violates the DCRA, we let $e \in \mathbb{Z}^*_N$ and show we how an efficient classical learner can efficiently compute $m = f_N^{-1}(e)$.
First, we generate examples $(x, \mathrm{bin}(e,k))$ and $(x, \mathrm{bin}(N, k))$, where $k = \mathrm{int}(x_1:\dots:x_{\lfloor \log n \rfloor})$.
When plugging these examples into an efficient classical learner it will with high probability identify an $m'$ such that $(m')^3 \equiv m \mod N$.
Since $x \mapsto x^3 \mod N$ is a bijection on $\mathbb{Z}^*_N$ we find that $m = m'$, and conclude that an efficient classical learner can efficiently compute the discrete cube root $f_N^{-1}(e)$.

\paragraph{Conclusion} The cube root identification concept class exhibits a separation in the setting where the learner is constrained to only output hypotheses from a fixed hypothesis class.
In fact, this is a separation in the so-called \textit{proper} PAC framework, since the hypothesis class is the same as the concept class.
Since in this setting it is not required to evaluate the concepts on unseen examples, the classical hardness has to lie in \textit{identifying} rather than \textit{evaluating} the concept generating the examples.
Note that for the  cube root identification concept class it is not possible to obtain a learning separation for a singleton concept class where a learner does not require any data (see Observation~\ref{lemma:trivial_sep}).

\setcounter{footnote}{0} 
\section{Learning separations without efficient data generation}
\label{sec:no_eff_data}

In the quantum machine learning community there is an often-mentioned conjecture that quantum machine learning is most likely to have its advantages for data that is generated by a ``genuine quantum process''\footnote{Recently, there have been notable developments that have yielded contrasting conclusions. For instance, in~\cite{huang:science}, surprisingly complex physics problems are efficiently learned by classical learners. We will briefly discuss this in Section~\ref{subsec:huang}.}.
We understand this to mean that the concepts generating the data are $\mathsf{BQP}$-complete or perhaps $\mathsf{DQC1}$-complete. 
It is worth noting that if concepts in $\mathsf{BQP}$ or $\mathsf{DQC1}$ that are not in $\mathsf{BPP}$ are already considered a ``genuine quantum process'', then the discrete logarithm concept class discussed in Section~\ref{subsec:dlp} suffices. 
However, we aim to investigate learning separations beyond these concepts, i.e., where the concepts are $\mathsf{BQP}$-complete.

A natural question that arises is, given a family of $\mathsf{BQP}$-complete concepts, what additional assumptions are sufficient to prove that these concepts exhibit a learning separation?
In Section~\ref{sec:eff_data}, we discussed proofs of learning separations that were predicated on the data being efficiently generatable by a classical device.
However, since there is no reason to believe that a family of $\mathsf{BQP}$-complete concepts allow for efficient data generation, we will need to adopt a different proof-strategy.

To ensure quantum learnability of a family of $\mathsf{BQP}$-complete concepts $\{\mathcal{C}_n\}_{n \in \mathbb{N}}$, we can simply limit the size of each concept class $\mathcal{C}_n$ to be no more than a polynomial in $n$.
When the size of the concept class is polynomial, a quantum learner can iterate over all concepts and identify the concept that best matches the examples from the oracle.
In more technical terms, a quantum learner can efficiently perform empirical risk minimization through brute-force fitting.
From standard results in learning theory (e.g., Corollary 2.3 in~\cite{shalev:book}), it follows that this method results in a learner that satisfies the conditions of the PAC learning framework.
 
As discussed in Section~\ref{subsec:complexity}, assuming that the concepts are not in $\mathsf{HeurP/poly}$ is sufficient to ensure that the concept class is not classically learnable. 
Intuitively, this is because if the concepts were classically learnable, the examples could be used to construct an advice string that, together with an efficient classical learning algorithm, would put the concepts in $\mathsf{HeurP/poly}$. 
By combining this with our approach to ensure quantum learnability, we can show that if there exists a family of polynomially-sized concept classes consisting of $\mathsf{BQP}$-complete concepts that are not in $\mathsf{HeurP/poly}$, then this family of concept classes exhibits a  $\mathsf{CC/QQ}$ separation. 
Moreover, in Section~\ref{subsec:physical_systems} we discuss how several of these separations can be build around data that is generated by a ``genuine quantum process''.
The following theorem summarizes our findings, and we defer the proof to Appendix~\ref{appendix:seps_no-eff-data}.

\begin{restatable}{theorem}{noeffdata}
\label{thm:seps_no-eff-data}
Consider a family of concept classes $\{\mathcal{C}_n\}_{n \in \mathbb{N}}$ and distributions $\{\mathcal{D}_n\}_{n \in \mathbb{N}}$ such that

\medskip

\noindent \underline{Quantum learnability:}
\begin{itemize}
    \item[(a)] Every $c_n \in \mathcal{C}_n$ can be evaluated on a quantum computer in time $\mathcal{O}\left(\mathrm{poly}(n)\right)$.
    \item[(b)] There exists a polynomial $p$ such that for every $n \in \mathbb{N}$ we have $|\mathcal{C}_n| \leq p(n)$.
\end{itemize}

\smallskip

\noindent \underline{Classical non-learnability:}
\begin{itemize}
    \item[(c)] There exists a family $\{c_n\}_{n \in \mathbb{N}}$, where $c_n \in \mathcal{C}_n$, such that $(\{c_n\}_{n \in \mathbb{N}}, \{\mathcal{D}_n\}_{n \in \mathbb{N}})\not\in \mathsf{HeurP/poly}$.
\end{itemize}

\smallskip

\noindent Then, $L = (\{\mathcal{C}_n\}_{n \in \mathbb{N}}, \{\mathcal{D}_n\}_{n \in \mathbb{N}})$ exhibits a $\mathsf{CC/QQ}$ learning separation.
\end{restatable}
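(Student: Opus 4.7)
The plan is to establish the two halves of the $\mathsf{CC}/\mathsf{QQ}$ separation independently: conditions (a) and (b) deliver quantum learnability via empirical risk minimization over $\mathcal{C}_n$ itself, while condition (c) combined with a standard sample-and-random-coin advice construction rules out classical learnability by contradiction.

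For $L \in \mathsf{QQ}$, I would take the hypothesis class to be $\mathcal{H}_n = \mathcal{C}_n$, which is quantumly evaluatable by (a). The learner draws $m = \mathcal{O}\bigl((\log p(n) + \log(1/\delta))/\epsilon\bigr)$ examples from $EX(c, \mathcal{D}_n)$, iterates over the at most $p(n)$ candidate hypotheses $h \in \mathcal{C}_n$, quantumly evaluates each on every example to compute its empirical error, and finally outputs a specification of the $h$ minimizing that error. Since the true concept belongs to $\mathcal{H}_n$, this is realizable ERM over a finite class; standard uniform-convergence bounds yield true error at most $\epsilon$ with probability at least $1-\delta$, and the total running time is $\mathcal{O}(\mathrm{poly}(n, 1/\epsilon, \log(1/\delta)))$.

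For $L \notin \mathsf{CC}$, I would argue by contrapositive. Suppose some classical polynomial-time learner $\mathcal{A}$, together with a classically evaluatable hypothesis class $\mathcal{H}$, learns $\mathcal{C}$ under $\mathcal{D}$. Fix the hard family $\{c_n\}_{n \in \mathbb{N}}$ from condition (c), and for each $n$ and $\epsilon$, run $\mathcal{A}$ with failure parameter $\delta = 1/3$. By the probabilistic method, there exist specific choices of a labeled training set $S_{n,\epsilon}$ drawn from $\mathcal{D}_n$ together with a specific string of internal random bits $r_{n,\epsilon}$ such that $\mathcal{A}(S_{n,\epsilon}; r_{n,\epsilon})$ outputs a hypothesis $h_{n,\epsilon} \in \mathcal{H}_n$ with $\mathsf{Pr}_{x \sim \mathcal{D}_n}[h_{n,\epsilon}(x) \neq c_n(x)] \leq \epsilon$. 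Encoding $\alpha_{n,\epsilon} := (S_{n,\epsilon}, r_{n,\epsilon})$, whose total length is polynomial in $n$ and $1/\epsilon$, one obtains advice for a $\mathsf{HeurP/poly}$ algorithm that on input $(x, 0^{\lfloor 1/\epsilon \rfloor}, \alpha_{n,\epsilon})$ simulates $\mathcal{A}$ on the hard-wired training set and coins to recover a specification of $h_{n,\epsilon}$, and then classically evaluates $h_{n,\epsilon}(x)$ in polynomial time via the evaluator for $\mathcal{H}$. This places $(\{c_n\}_{n \in \mathbb{N}}, \{\mathcal{D}_n\}_{n \in \mathbb{N}}) \in \mathsf{HeurP/poly}$, contradicting condition (c).

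The main subtlety lies in the classical direction, where the advice must simultaneously fix both a ``good'' sample drawn from $\mathcal{D}_n$ and a ``good'' setting of the learner's randomness so that the output hypothesis is $\epsilon$-accurate on $\mathcal{D}_n$; an averaging argument guarantees the existence of such a pair but crucially requires nothing about efficient sampleability of $\mathcal{D}_n$, which is why $\mathsf{HeurP/poly}$ (rather than the $\mathsf{HeurBPP/samp}$ route closed off by Lemma~\ref{lemma:samppoly}) is the natural target. The other essential ingredient is that $\mathcal{H}$ is \emph{classically} evaluatable, which is implicit in the definition of $\mathsf{CC}$: without it, the final step of the $\mathsf{HeurP/poly}$ algorithm would only give $(\{c_n\}_{n \in \mathbb{N}}, \{\mathcal{D}_n\}_{n \in \mathbb{N}}) \in \mathsf{HeurBQP/poly}$, which condition (c) does not rule out and which is consistent with a $\mathsf{CC/QQ}$ separation.
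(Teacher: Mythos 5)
Your proposal is correct, and the quantum half is exactly the paper's argument: brute-force empirical risk minimization over the polynomially many concepts, using $\mathcal{H}_n = \mathcal{C}_n$ as a quantumly evaluatable hypothesis class (the paper simply cites a finite-class ERM bound rather than spelling out the sample complexity). For the classical half you take a genuinely different, more self-contained route. The paper combines the hypothetical classical learner with the classical evaluator to place $(\{c_n\},\{\mathcal{D}_n\})$ in $\mathsf{HeurBPP/samp}$ and then invokes Lemma~\ref{lemma:samppoly} ($\mathsf{HeurBPP/samp} \subseteq \mathsf{HeurP/poly}$), whose proof in turn imports the seed-and-data–fixing argument of Huang et al. You instead inline the derandomization: since the learner succeeds with probability at least $1-\delta = 2/3$ over samples and coins, an averaging argument yields one fixed pair $(S_{n,\epsilon}, r_{n,\epsilon})$ producing an $\epsilon$-good hypothesis, and hard-wiring that pair as advice gives a deterministic polynomial-time algorithm, matching the deterministic form of Definition~\ref{def:heurp/poly} directly (indeed one could shorten the advice further to just the specification of $h_{n,\epsilon}$ output by the learner). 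What the paper's route buys is modularity: $\mathsf{HeurBPP/samp}$ is set up as the class capturing all classically learnable concepts and is reused conceptually elsewhere, and Lemma~\ref{lemma:samppoly} handles the general bounded-error-with-samples setting once and for all. What your route buys is a shorter, lemma-free contradiction that avoids the majority-vote amplification in the lemma's proof, and you correctly isolate the two crucial ingredients: the advice construction needs no efficient sampleability of $\mathcal{D}_n$, and classical evaluatability of $\mathcal{H}$ is what keeps the conclusion in $\mathsf{HeurP/poly}$ rather than $\mathsf{HeurBQP/poly}$.
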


At face value, it may not be clear whether there exist concept classes that satisfy both conditions (a) and (c), since condition (a) puts the concepts in $\mathsf{BQP}$ and it may not be clear how large $\mathsf{HeurP/poly}$ is relative to $\mathsf{BQP}$. 
Notably, it is known that if the discrete logarithm is not in $\mathsf{BPP}$, then it is also not in $\mathsf{HeurBPP}$ under certain distributions. 
Additionally, it is widely believed that a polynomial amount of advice does not significantly improve the computational complexity of solving the discrete logarithm problem~\cite{corrigan:dlp}. 
Hence, it is plausible to imagine the existence of problems $L \in \mathsf{BQP}$ for which there is a distribution $\mathcal{D}$ such that $(L, \mathcal{D}) \not\in \mathsf{HeurP/poly}$. Moreover, it is interesting to observe that if there exists a single $L \in \mathsf{BQP}$ that is not in $\mathsf{HeurP/poly}$ under some distribution, then for every $\mathsf{BQP}$-complete problem there exists a distribution under which it is not in $\mathsf{HeurP/poly}$. 
We summarize this in the lemma below, and we defer the proof to Appendix~\ref{appendix:lemma1}.

\begin{restatable}{lemma}{lemmaone}
\label{lemma:1}
If there exists a $(L, \mathcal{D}) \not\in \mathsf{HeurP/poly}$ with $L \in \mathsf{BQP}$, then for every $L' \in \mathsf{BQP}$-$\mathsf{complete}$\footnote{With respect to many-to-one reductions (as is the case for, e.g., quantum linear system solving~\cite{harrow:qls}).} there exists a family of distributions $\mathcal{D}' = \{\mathcal{D}'_n\}_{n \in \mathbb{N}}$ such that $(L', \mathcal{D}') \not\in \mathsf{HeurP/poly}$.
\end{restatable}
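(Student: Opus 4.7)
The plan is to obtain the desired distribution on $L'$ as the push-forward of $\mathcal{D}$ through a $\mathsf{BQP}$-completeness reduction, and then show via a direct composition that any $\mathsf{HeurP/poly}$ algorithm for $L'$ under this pushed-forward distribution would yield a $\mathsf{HeurP/poly}$ algorithm for $L$ under $\mathcal{D}$, contradicting the hypothesis.

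More precisely, since $L \in \mathsf{BQP}$ and $L'$ is $\mathsf{BQP}$-complete under many-one reductions, there exists a classical polynomial-time computable function $f : \{0,1\}^* \to \{0,1\}^*$ such that $L(x) = L'(f(x))$ for every $x$. By a standard padding argument I may assume that $f$ is length-regular, so there is a polynomial $q$ with $|f(x)| = q(|x|)$ for all $x$. I then define $\mathcal{D}' = \{\mathcal{D}'_m\}_{m \in \mathbb{N}}$ as follows: if $m = q(n)$ for some $n$, let $\mathcal{D}'_m$ be the push-forward of $\mathcal{D}_n$ along $f$ (choosing the smallest such $n$ if $q$ is not injective); for all other $m$, let $\mathcal{D}'_m$ be any efficiently sampleable distribution on $\{0,1\}^m$ (say the point mass at $0^m$), since those lengths will never be used in the argument below.

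Next, I argue by contradiction. Suppose $(L', \mathcal{D}') \in \mathsf{HeurP/poly}$, witnessed by a polynomial-time classical algorithm $\mathcal{A}'$ and an advice family $\{\alpha'_{m,\epsilon}\}$ of length $\mathrm{poly}(m, 1/\epsilon)$. Define a classical algorithm $\mathcal{A}$ for $L$ that on input $(x, 0^{\lfloor 1/\epsilon \rfloor})$ with $|x| = n$ first computes $y = f(x)$ and then returns $\mathcal{A}'(y, 0^{\lfloor 1/\epsilon \rfloor}, \alpha_{n,\epsilon})$, where the new advice is $\alpha_{n,\epsilon} := \alpha'_{q(n),\epsilon}$, which has size $\mathrm{poly}(q(n), 1/\epsilon) = \mathrm{poly}(n, 1/\epsilon)$. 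Since $f$ sends $\mathcal{D}_n$ to $\mathcal{D}'_{q(n)}$ by construction, and since $L(x) = L'(f(x))$, the heuristic guarantee for $\mathcal{A}'$ pulls back verbatim:
\[
\mathsf{Pr}_{x \sim \mathcal{D}_n}\big[\mathcal{A}(x, 0^{\lfloor 1/\epsilon \rfloor}, \alpha_{n,\epsilon}) = L(x)\big] = \mathsf{Pr}_{y \sim \mathcal{D}'_{q(n)}}\big[\mathcal{A}'(y, 0^{\lfloor 1/\epsilon \rfloor}, \alpha'_{q(n),\epsilon}) = L'(y)\big] \geq 1 - \epsilon.
\]
This places $(L, \mathcal{D})$ in $\mathsf{HeurP/poly}$, contradicting the hypothesis; hence $(L', \mathcal{D}') \notin \mathsf{HeurP/poly}$.

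I do not expect any serious obstacle here. The one point requiring slight care is ensuring the push-forward gives a well-defined distributional problem in the sense of Definition~\ref{def:distr_problem}, i.e.\ a distribution at every input length; this is handled by the length-regularity of $f$ (via padding) together with an arbitrary default choice at the unused lengths. A minor subtlety is that $q$ may not be injective, but picking any one preimage-length per image-length suffices, since the inclusion into $\mathsf{HeurP/poly}$ only needs a single good advice string per $(n, \epsilon)$.
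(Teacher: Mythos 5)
Your proposal is correct and follows essentially the same route as the paper's proof: push $\mathcal{D}$ forward along the many-one reduction $f$ and show that composing a purported $\mathsf{HeurP/poly}$ algorithm for $(L',\mathcal{D}')$ with $f$ would place $(L,\mathcal{D})$ in $\mathsf{HeurP/poly}$, a contradiction. Your added care about length-regularity of $f$ and the choice of distributions at unused input lengths only tightens details the paper handles implicitly.
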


In summary, to obtain a learning separation for data generated by a ``genuine quantum process'', it is sufficient to have a single problem $L \in \mathsf{BQP}$ that lies outside $\mathsf{HeurP/poly}$ under some distribution. 
An example of such a problem is the discrete logarithm. 
However, the resulting distribution under which the $\mathsf{BQP}$-complete problem lies outside of $\mathsf{HeurP/poly}$ is artificial as it comes from explicitly encoding
the discrete logarithm into the learning problem through the reduction to the $\mathsf{BQP}$-complete problem.
Besides the discrete logarithm, little is known about the heuristic hardness of problems in $\mathsf{BQP}$ (especially those that are considered ``genuinely quantum'').
Therefore, the question arises as to what additional properties are required for a $\mathsf{BQP}$-complete problem to lie outside $\mathsf{HeurP/poly}$ under some distribution. 
We show that a worst-case to average-case reduction combined with the assumption that $\mathsf{BQP} \not\subseteq \mathsf{P/poly}$ is sufficient for this purpose. 
While the question of $\mathsf{BQP} \not\subseteq \mathsf{P/poly}$ remains open, we proceed under this assumption based on its implications for cryptography.
Specifically, if $\mathsf{BQP} \subseteq \mathsf{P/poly}$, then problems like the discrete logarithm would be in $\mathsf{P/poly}$, which would break cryptographic systems assumed to be secure\footnote{In cryptography it is common to assume non-uniform adversaries (i.e., with computational resources of $\mathsf{P/poly}$), and even in this case most public-key cryptosystems such as RSA and Diffie-Hellman are still assumed to be safe).}. 
Under the assumption that $\mathsf{BQP} \not\subseteq \mathsf{P/poly}$, the only missing piece is that our problem $L \in \mathsf{BQP}$ that lies outside $\mathsf{P/poly}$ is random self-reducible with respect to some distribution (i.e., it admits a worst-case to average case reduction as discussed in Section~\ref{subsec:complexity}). 
We summarize these findings in the lemma below, and we defer the proof to Appendix~\ref{appendix:lemma2}.

\begin{restatable}{lemma}{lemmatwo}
\label{lemma:2}
If $L \not\in \mathsf{P/poly}$ and $L$ is polynomially random self-reducible with respect to some distribution $\mathcal{D}$, then $(L, \mathcal{D}) \not\in \mathsf{HeurP/poly}$.
\end{restatable}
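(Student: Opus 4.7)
The plan is to proceed by contrapositive: assuming $(L, \mathcal{D}) \in \mathsf{HeurP/poly}$, I will use the random self-reduction to upgrade the heuristic advice-algorithm to a worst-case algorithm with polynomial advice, thereby putting $L$ in $\mathsf{P/poly}$ and contradicting the hypothesis.

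So suppose there is a polynomial-time classical algorithm $\mathcal{A}$ and advice strings $\{\alpha_{n, \epsilon}\}$ of size $\mathrm{poly}(n, 1/\epsilon)$ such that $\Pr_{x \sim \mathcal{D}_n}[\mathcal{A}(x, 0^{\lfloor 1/\epsilon \rfloor}, \alpha_{n, \epsilon}) = L(x)] \geq 1 - \epsilon$. Let $R$ be the polynomial random self-reduction: on input $x \in \{0,1\}^n$, $R$ uses internal randomness to produce queries $y_1, \dots, y_{k(n)}$ with $k(n) = \mathrm{poly}(n)$ such that each $y_i$ is marginally distributed as $\mathcal{D}_n$, together with a polynomial-time reconstruction procedure that, given the correct values $L(y_1), \dots, L(y_{k(n)})$, outputs $L(x)$. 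I will set $\epsilon_n := 1/(3 k(n))$, so that $\epsilon_n = 1/\mathrm{poly}(n)$ and the advice string $\alpha_{n, \epsilon_n}$ is still of size $\mathrm{poly}(n)$.

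The key step is then the following algorithm $\mathcal{B}$ for $L$ equipped with advice $\alpha_{n, \epsilon_n}$: on input $x$, run $R(x)$ to obtain $y_1, \dots, y_{k(n)}$, query $b_i := \mathcal{A}(y_i, 0^{\lfloor 1/\epsilon_n \rfloor}, \alpha_{n, \epsilon_n})$ for each $i$, and return the reconstruction applied to $b_1, \dots, b_{k(n)}$. Since each $y_i$ is marginally $\mathcal{D}_n$-distributed, $\Pr[b_i \neq L(y_i)] \leq \epsilon_n$, where the probability is over the randomness of $R$ on the fixed input $x$. By a union bound over the $k(n)$ queries, $\Pr[\exists i : b_i \neq L(y_i)] \leq k(n) \cdot \epsilon_n = 1/3$, so with probability $\geq 2/3$ the reconstruction returns $L(x)$. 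This holds for \emph{every} $x$, so $\mathcal{B}$ witnesses $L \in \mathsf{BPP/poly}$.

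To finish, I invoke the standard fact that $\mathsf{BPP/poly} = \mathsf{P/poly}$, obtained by a straightforward non-uniform derandomization: for each length $n$, amplify the success probability of $\mathcal{B}$ to $1 - 2^{-(n+1)}$ by polynomially many independent repetitions and a majority vote, then use a counting/probabilistic argument to fix a single random string $r_n$ of polynomial length that makes the algorithm correct on all $2^n$ inputs of length $n$ simultaneously; absorb $r_n$ into the advice. The resulting algorithm is deterministic, polynomial-time, and uses polynomial-size advice, so $L \in \mathsf{P/poly}$, contradicting the hypothesis. The main obstacle is bookkeeping around the parameter $\epsilon_n$: one must ensure that the chosen $\epsilon_n$ is small enough for the union bound to beat $1/3$ yet still large enough that $1/\epsilon_n$ is polynomial (so the heuristic advice remains of polynomial size); the choice $\epsilon_n = 1/(3 k(n))$ achieves both, and no further subtlety arises provided the self-reduction is non-adaptive so that a single value of $\epsilon_n$ suffices.
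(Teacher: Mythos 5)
Your proposal is correct and follows essentially the same route as the paper: assume $(L, \mathcal{D}) \in \mathsf{HeurP/poly}$, run the heuristic advice-algorithm on the queries of the random self-reduction with $\epsilon$ of order $1/k(n)$, apply a union bound, and conclude $L \in \mathsf{BPP/poly} = \mathsf{P/poly}$, contradicting $L \notin \mathsf{P/poly}$. The only difference is that you assume the reconstruction procedure is perfect when given correct answers, whereas the paper's notion of random self-reducibility only guarantees reconstruction success with probability $3/4$, which is why it chooses $\epsilon' = 1/(9k)$ rather than your $1/(3k)$ --- a purely quantitative adjustment that leaves the argument unchanged.
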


By combining Lemma~\ref{lemma:1} with Lemma~\ref{lemma:2}, we obtain a set of assumptions that result in provable learning separations for data that could be generated by a genuine quantum process, as stated in Theorem~\ref{thm:seps_no-eff-data} (see also Section~\ref{subsec:physical_systems}).
These assumptions include the existence of a problem $L \in \mathsf{BQP}$ that is not in $\mathsf{P/poly}$ which is polynomially random self-reducible with respect to some distribution.

\begin{corollary}
\label{cor:seps}
If there exists an $L \in \mathsf{BQP}$ such that $L \not\in \mathsf{P/poly}$ and it is random self-reducible, then every $\mathsf{BQP}$-complete problem gives rise to a $\mathsf{CC/QQ}$ separation.
\end{corollary}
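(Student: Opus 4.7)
The proof is a direct chaining of the two preceding lemmas with Theorem~\ref{thm:seps_no-eff-data}, applied to a singleton concept class built around the $\mathsf{BQP}$-complete problem. I would structure it as follows.

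First, let $L \in \mathsf{BQP}$ be a problem satisfying the hypotheses of the corollary (i.e., $L \not\in \mathsf{P/poly}$ and $L$ is polynomially random self-reducible with respect to some distribution $\mathcal{D}$). Applying Lemma~\ref{lemma:2} immediately yields $(L, \mathcal{D}) \not\in \mathsf{HeurP/poly}$. Since $L$ lies in $\mathsf{BQP}$, this distributional problem satisfies the hypothesis of Lemma~\ref{lemma:1}. Consequently, for any $\mathsf{BQP}$-complete problem $L'$ (with respect to many-to-one reductions), Lemma~\ref{lemma:1} produces a family of distributions $\mathcal{D}' = \{\mathcal{D}'_n\}_{n \in \mathbb{N}}$ such that $(L', \mathcal{D}') \not\in \mathsf{HeurP/poly}$.

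Next, I would construct the concept class that witnesses the separation. Define $\mathcal{C}_n = \{c_n\}$ where $c_n : \{0,1\}^n \to \{0,1\}$ is the restriction of $L'$ to $n$-bit inputs, and take the target distributions to be $\mathcal{D}'$. I would then check the three hypotheses of Theorem~\ref{thm:seps_no-eff-data} in turn:\ condition~(a) holds since $L' \in \mathsf{BQP}$, so each $c_n$ is evaluatable in quantum polynomial time; condition~(b) is trivial since $|\mathcal{C}_n| = 1$; and condition~(c) is exactly the conclusion of the previous paragraph. Theorem~\ref{thm:seps_no-eff-data} then delivers the $\mathsf{CC/QQ}$ separation for $L = (\{\mathcal{C}_n\}_{n \in \mathbb{N}}, \mathcal{D}')$.

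There is no real technical obstacle — the work is all done inside Lemmas~\ref{lemma:1}, \ref{lemma:2} and Theorem~\ref{thm:seps_no-eff-data}. The only subtle point worth flagging in the write-up is that the resulting separation is, strictly speaking, of the ``trivial'' type warned about in Observation~\ref{lemma:trivial_sep}: a singleton concept class is quantumly learnable without ever querying $EX(c,\mathcal{D}'_n)$. If one wishes to rule this out, one can replace the singleton by any polynomially sized family of $\mathsf{BQP}$ concepts that includes $c_n$ (condition~(b) still holds and conditions~(a), (c) are unaffected), but this is a cosmetic strengthening and not required by the statement of the corollary as written.
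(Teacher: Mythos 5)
Your proposal is correct and follows essentially the same route as the paper, which states the corollary precisely as the chaining of Lemma~\ref{lemma:2} into Lemma~\ref{lemma:1} and then into Theorem~\ref{thm:seps_no-eff-data} via a polynomially-sized (e.g.\ singleton) concept class built from the $\mathsf{BQP}$-complete problem under the induced distribution. Your closing remark about the singleton case being of the type flagged in Observation~\ref{lemma:trivial_sep}, and the fix of enlarging to any polynomially-sized family, matches the paper's own discussion and is a sensible point to include.
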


Although establishing such learning separations is not straightforward, the criteria listed in Lemma~\ref{lemma:1}, Lemma~\ref{lemma:2} and Corollary~\ref{cor:seps} suggest some challenges that when addressed lead to provable learning separations.
In particular, they highlight the need for further investigation into the heuristic hardness of problems in $\mathsf{BQP}$ from the perspective of quantum machine learning.

An interesting observation can be made when we consider the constructions underlying Corollary~\ref{cor:seps} and let $L = \mathsf{DLP}$ to be the language corresponding to the discrete logarithm problem. 
Firstly, we note that $\mathsf{DLP} \in \mathsf{BQP}$ due to Shor's algorithm~\cite{shor:factoring}.
Secondly, since $\mathsf{DLP}$ is both random self-reducible~\cite{blum:hardness} and random verifiable (i.e., we can generate examples under the uniform distribution as in Section~\ref{subsec:dlp}), we know that $\mathsf{DLP} \in \mathsf{HeurBPP/samp}$ is equivalent to $\mathsf{DLP} \in \mathsf{BPP}$.
By combining these observations with the constructions underlying Corollary~\ref{cor:seps} we obtain the lemma below, which shows that for every $\mathsf{BQP}$-$\mathsf{complete}$ problem we can construct a distribution $\mathsf{D}_L$ under which it exhibits a $\mathsf{CC}/\mathsf{QQ}$ separation unless $\mathsf{DLP} \in \mathsf{BPP}$ (we defer the proof to Appendix~\ref{appendix:cor2}).

\begin{restatable}{lemma}{cortwo}
\label{cor:seps1}
For every $L \in \mathsf{BQP}$-$\mathsf{complete}$ there exists an efficiently samplable distribution $\mathcal{D}_L$ such that $(L, \mathcal{D}_L) \not\in \mathsf{HeurBPP}/\mathsf{samp}$, unless $\mathsf{DLP} \in \mathsf{BPP}$.
\end{restatable}

\paragraph{Generative modeling}
To the authors it is not entirely clear precisely how strong the assumption that $\mathsf{BQP} \not\subseteq \mathsf{P/poly}$ is.
It is worth noting though that for sampling problems arguably more iron-clad assumptions, such as the non-collapse of the polynomial hierarchy, could potentially lead to analogous conclusions.
In particular, one possibility is to use quantum supremacy arguments~\cite{aaronson:supremacy, bremner:supremacy} to establish learning separations in generative modeling, where the task is to learn a distribution instead of a binary function.
If the distribution to be learned is in $\mathsf{SampBQP}$ (i.e., sampling problems solvable by a polynomial-time quantum algorithm), then for classical non-learnability, the corresponding requirement is that not all $\mathsf{SampBQP}$ problems are in $\mathsf{SampBPP/poly}$ (i.e., sampling problems solvable by a polynomial-time classical algorithm with polynomial-sized advice)\footnote{For a formal definition of these complexity classes we refer to~\cite{aaronson:supremacy}.}. 
This is analogous to the supervised learning case, but for sampling problems we might have further evidence this is unlikely. 
Specifically, as sketched by Aaronson in~\cite{aaronson:blog}, if $\mathsf{SampBQP} \subseteq \mathsf{SampBPP/poly}$, then this could cause the polynomial hierarchy to collapse.
In other words, one could arguably use these arguments to show that a family of distributions is not classically learnable, under the assumption that the polynomial hierarchy does not collapse.

\subsection{Learning separations from physical systems}
\label{subsec:physical_systems}

Many quantum many-body problems are either $\mathsf{BQP}$-complete or $\mathsf{QMA}$-complete when appropriately formalized, making them suitable for defining concepts that are not classically learnable (recall that this also implies a learning separation, since quantum learnability can be ensured by considering polynomially-sized concept classes). 
To be more precise, recall that any problem in $\mathsf{BQP}$ that does not lie in $\mathsf{HeurP/poly}$ with respect to some distribution can be used to construct a distribution under which a hard quantum many-body problem defines a learning problem that is not classically learnable (as shown by Theorem~\ref{thm:seps_no-eff-data} and Lemma~\ref{lemma:1}). 
However, the induced distribution under which the physical system is not classically learnable is artificial, as it is induced by a particular choice of reduction, and there is no evidence that these induced distributions are relevant in practice.

\paragraph{Examples of physical systems}
For concreteness, let us discuss some examples.
There are many physical systems that are in some sense universal for quantum computing, such as the Bose-Hubbard model~\cite{childs:qma}, the antiferromagnetic Heisenberg and antiferromagnetic XY model~\cite{piddock:qma}, the Fermi-Hubbard model~\cite{gorman:electronic}, supersymmetric systems~\cite{cade:susy}, interacting bosons~\cite{wei:qma}, and interacting fermions~\cite{liu:qma}.
In particular, each of these physical systems defines a family of Hamiltonians and, for several of these Hamiltonian families, time-evolution is $\mathsf{BQP}$-complete when appropriately formalized~\cite{haah:ham_sim, childs:hs}.
That is, for several of these universal Hamiltonian families $H(\beta)$, where $\beta$ denote the Hamiltonian parameters, we can define $\mathsf{BQP}$-complete concepts
\[
c_H(\beta, t) = \mathrm{sign}\left(\big|\langle 0^n|e^{iH(\beta)t}Z_1e^{-iH(\beta)t} |0^n\rangle\big|^2 
 - \frac{1}{2}\right),
\]
where $Z_1$ denotes the Pauli-$Z$ operator on the first qubit and identity elsewhere.
Additionally, one could also use $\mathsf{BQP}$-complete problems in high energy physics, such as scattering in scalar quantum field theory~\cite{jordan:bqp_hep}.

As another example, we note that for any of the universal Hamiltonian families the problem of finding the ground state energy is $\mathsf{QMA}$-complete.
That is, for any universal Hamiltonian family $H(\beta)$, where $\beta$ denote the Hamiltonian parameters, we can define $\mathsf{QMA}$-complete concepts
\[
    c_H(\beta) = \mathrm{sign}\left( \mathrm{Tr}\left[H(\beta) \ket{\psi_{H}(\beta)} \right] - \frac{1}{2} \right),
\]
where $\ket{\psi_{H}(\beta)}$ denotes the ground state of $H(\beta)$.
Naturally, one worries that these concepts are too hard to evaluate on a quantum computer, but there are a few workarounds.
Firstly, sometimes there is a natural special case of the problem that is $\mathsf{BQP}$-complete (e.g., the subset of Hamiltonians obtained through a circuit-to-Hamiltonian mapping).
Moreover, more generically it holds that any problem that is $\mathsf{QMA}$-complete has a restriction that is $\mathsf{BQP}$-complete (i.e., take any $\mathsf{BQP}$-complete problem and consider the image of this problem under a many-to-one reduction).
Finally, one could use recent results on the guided local Hamiltonian problem to relax the $\mathsf{QMA}$-complete problems and obtain a $\mathsf{BQP}$-complete problem~\cite{weggemans:guidable, cade:guidable, gharibian:guidable}.

In short, by exploiting a reduction from a problem that is in $\mathsf{BQP}$ which under a given distribution lies outside $\mathsf{HeurP/poly}$ onto a chosen $\mathsf{BQP}$-complete problem (as in Lemma~\ref{lemma:1}), any physical system that is in some sense universal for quantum computing can be used to construct a learning separation.
Nonetheless, since the reduction is implicitely used to construct the distribution under which the physical system becomes not classically learnable, the distributions will be artificial and there is no reason to believe these have any relevance in practice.

\setcounter{footnote}{0} 
\section{Connections to other works on (quantum) learning tasks}
\label{sec:discussion}

In this section we discuss other topics of relevance.
First, in Section~\ref{subsec:huang}, we discuss the implications and limitations of the milestone work of Huang et al.~\cite{huang:science} on establishing learning separations from physical systems.
Next, in Section~\ref{subsec:power_data}, we discuss how having access to data radically enhances what can be efficiently evaluated by discussing the example of evaluating parameterized quantum circuits.
Afterwards, in Section~\ref{subsec:proper_physics}, we discuss how two physically-motivated problems (i.e., Hamiltonian learning, and identifying order parameters for phases of matter) fit in the PAC learning setting where the learner is constrained to output hypotheses from a fixed hypothesis class.

\subsection{Provably efficient machine learning with classical shadows}
\label{subsec:huang}

In the milestone work of Huang et al.~\cite{huang:science}, the authors design classical machine learning methods (in part built around the \emph{classical shadow} paradigm) that can efficiently learn quantum many-body problems.
One of the problems studied in~\cite{huang:science} is that of \textit{predicting ground states of Hamiltonian}.
More precisely, for a family of Hamiltonians $H(x)$ with ground states $\rho_H(x)$, one wants to predict the expectation value of some observable $O$ when measured on $\rho_H(x)$.
That is, one wants to efficiently learn to evaluate the function
\begin{align}
\label{eq:huang_science}
    f_{H, O}(x) = \mathrm{Tr}\left[\rho_H(x) O \right].
\end{align}

One of the main things that~\cite{huang:science} show is that given a polynomial number of data points, one is able to efficiently evaluate the functions in Eq.~\eqref{eq:huang_science} with a constant expected error under certain criteria.
Recall that in Section~\ref{subsec:physical_systems} we argued that concepts based on physical systems can be used as a source of learning separations.
Since these concepts are of a similar form as the functions described in Eq.~\eqref{eq:huang_science}, one might wonder how the results of Huang et al.\ relate.

Let us take a closer look at the requirements of the methods described in~\cite{huang:science}.
Firstly, the Hamiltonians $H(x)$ must all be geometrically-local, and the observable $O$ must be a sum of polynomially many local observables $O = \sum_{i = 1}^L O_i$ such that $\sum_{i = 1}^L ||O_i||$ is bounded by a constant.
Additionally, the Hamiltonians $H(x)$ must all have a constant spectral gap (i.e., the difference between the smallest and the next smallest eigenvalue) and they must depend smoothly on $x$ (or more precisely, the average gradient of the function in Eq.~\eqref{eq:huang_science} must be bounded by a constant).
One might wonder what will happen if we relax the above requirements, while simultaneously maintaining the fact that a quantum computer would still be able to evaluate the function in Eq.~\eqref{eq:huang_science} (and hence build a learning separation around it based on Theorem~\ref{thm:seps_no-eff-data}).

Two possible relaxations of the requirements are the absence of a constant spectral gap (while maintaining an inverse polynomial spectral gap) and a reduced smoothness dependency of the Hamiltonian family on $x$ (i.e., compared to what is required for the methods of~\cite{huang:science}).
It turns out that if one relaxes these requirements, then under cryptographic assumptions the methods proposed by Huang et al.\ are no longer capable of evaluating the function in Eq.~\eqref{eq:huang_science} with constant expected error.
More precisely, any classical machine learning method that would still be able to evaluate the function in Eq.~\eqref{eq:huang_science} up to constant expected error under the relaxed assumptions would be able to solve $\mathsf{DLP}$ in $\mathsf{P/poly}$, which contradicts certain cryptographic assumptions. 
We provide a formal statement of this in the following theorem, the proof of which is deferred to Appendix~\ref{appendix:limitations_huang}.

\begin{restatable}{theorem}{limitationshuang}
\label{thm:limitations_huang}

Suppose there exists a polynomial-time randomized classical algorithm $\mathcal{A}$ with the following property: for every geometrically-local family of $n$-qubit Hamiltonians $H(x)$ there exist a dataset $\mathcal{T}_{H} \in \{0,1\}^{\mathrm{poly}(n)}$ such that for every sum $O = \sum_{i = 1}^{L}O_i$ of $L\in \mathcal{O}(\mathrm{poly}(n))$ many local observables with $\sum_{i = 1}^{L}||O_i||\leq B$ for some constant $B$, the function
\[
\overline{f}_{H, O}(x) = \mathcal{A}(x, O, \mathcal{T}_{H})
\]
satisfies
\[
    \mathbb{E}_{x \sim [-1, 1]^m}\Big[ \left|\overline{f}_{H, O}(x) - f_{H, O}(x) \right|\Big] < \frac{1}{6},
\]
where $f_{H, O}(x) = \mathrm{Tr}\left[\rho_H(x) O \right]$ and $\rho_H(x)$ denotes the ground state of $H(x)$. 
Then, $\mathsf{DLP} \in \mathsf{P/poly}$.
\end{restatable}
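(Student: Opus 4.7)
The plan is to encode the discrete logarithm problem into a geometrically-local Hamiltonian family whose ground-state expectation values reveal the answer, and then use the hypothesized classical algorithm $\mathcal{A}$ together with its dataset $\mathcal{T}_H$ as a polynomial-size advice string. Concretely, I would fix an $n$-bit prime $p$ and generator $a$ of $\mathbb{Z}_p^*$, and identify $\mathbb{Z}_p^*$ with $\{-1,+1\}^n \subseteq [-1,1]^n$ in the obvious bit-encoding. The first step is to construct, using a standard circuit-to-Hamiltonian (Feynman--Kitaev) reduction applied to Shor's discrete logarithm circuit, a geometrically-local Hamiltonian family $H(x)$ on $\mathrm{poly}(n)$ qubits arranged on a 2D lattice, parameterized by $x \in [-1,1]^n$. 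The $x$-dependence enters through local input-penalty terms that pin the initial computational register to $x$; the propagation and output-check terms are independent of $x$. The resulting Hamiltonian is geometrically local, but its spectral gap is only $1/\mathrm{poly}(n)$ and its dependence on $x$ is sharp (non-smooth in the sense of Huang et al.), so the positive results of~\cite{huang:science} do not apply.

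The second step is to identify the observable that reads out the answer. By construction, the ground state $\rho_H(x)$ of $H(x)$ is (close to) the history state of Shor's algorithm on input $x$. Let $O$ be a single-qubit Pauli-$Z$ acting on the qubit of the output register that, at the final time step of the history construction, carries the most-significant bit of $\log_a x$, appropriately weighted by the projector onto that final clock state. This $O$ is a sum of $\mathrm{poly}(n)$ local observables with $\sum_i \|O_i\| \le B$ for a universal constant $B$, so it satisfies the operator-norm condition in the theorem. A direct calculation gives $f_{H,O}(x) = \mathrm{Tr}[\rho_H(x) O] = \pm c$ for a constant $c>0$ depending on which value the most-significant bit of $\log_a x$ takes.

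The third step invokes the hypothesis: the algorithm $\mathcal{A}$ together with the dataset $\mathcal{T}_H$ produces a function $\overline{f}_{H,O}(x) = \mathcal{A}(x, O, \mathcal{T}_H)$ with $\mathbb{E}_{x}|\overline{f}_{H,O}(x) - f_{H,O}(x)| < 1/6$. By Markov's inequality, for at least a $1/2 + \Omega(1)$ fraction of $x \in \mathbb{Z}_p^*$ the sign of $\overline{f}_{H,O}(x)$ agrees with that of $f_{H,O}(x)$, which recovers the most-significant bit of $\log_a x$ on that fraction. Hardwiring $\mathcal{T}_H$ as the non-uniform advice string $\alpha_n$ yields a polynomial-time classical algorithm that, given $x$, outputs this bit correctly on a $1/2 + 1/\mathrm{poly}(n)$ fraction of inputs. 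Invoking the Blum--Micali worst-case to average-case reduction~\cite{blum:hardness} upgrades this heuristic bit-predictor to a polynomial-time classical algorithm with polynomial advice that solves discrete logarithm on all inputs, establishing $\mathsf{DLP} \in \mathsf{P/poly}$.

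I expect the main technical obstacle to be the first step: making the Hamiltonian family both genuinely geometrically local on a constant-dimensional lattice \emph{and} parameterized cleanly by $x$ so that the single ground state encodes the computation on the intended input. The standard Feynman--Kitaev construction must be combined with geometric-locality gadgets (e.g.\ perturbative gadgets or the 2D constructions used in universality proofs for Hamiltonian families cited in Section~\ref{subsec:physical_systems}) while keeping the input-dependence local, and some care is needed to ensure that the readout observable $O$ has bounded $\sum_i \|O_i\|$ and that the overlap of the ground state with the correct history state is $1 - o(1)$, so that $|f_{H,O}(x)|$ is bounded away from zero uniformly in $n$. The rest of the argument is a straightforward combination of average-case error bounds and the Blum--Micali reduction.
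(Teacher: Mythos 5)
Your overall architecture matches the paper's: both proofs run Shor's discrete-logarithm circuit through a Feynman--Kitaev circuit-to-Hamiltonian construction, read the answer off the history state with a suitably weighted observable, treat $\mathcal{T}_H$ as non-uniform advice, and conclude $\mathsf{DLP}\in\mathsf{P/poly}$. The genuine gap is in your third step. The hypothesis only bounds $\mathbb{E}_{x\sim[-1,1]^m}\big[\,|\overline{f}_{H,O}(x)-f_{H,O}(x)|\,\big]$, an average with respect to the \emph{continuous} uniform measure on the cube, whereas you evaluate $\overline{f}_{H,O}$ at the $2^n$ embedded points $\{-1,+1\}^n$ and apply Markov there. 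Those points form a measure-zero set, so the expected-error bound places no constraint whatsoever on $\overline{f}_{H,O}$ at them: $\mathcal{A}$ may be wrong on every embedded instance of $\mathbb{Z}_p^*$ while still satisfying the hypothesis, and Markov's inequality cannot be applied to a distribution (uniform over $\mathbb{Z}_p^*$) different from the one appearing in the expectation. This is exactly what the paper's construction is engineered around: the input enters the circuit through rotation angles $g_\gamma(x_i)\cdot 2\pi$, where $g_\gamma$ is a continuous function that equals exactly $0$ or $1$ on a positive-measure neighbourhood of $\mp 1$, so that $f_{H,O}$ is \emph{constant} on an entire corner region $C_y$ for each bitstring $y$, and the bit is decoded from the average of $\overline{f}_{H,O}$ over $C_y$ (estimable by sampling), not from a point value. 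Your ``penalty terms pinning the input register to $x$'' would have to be reworked along these lines, and the decoding restated in terms of region averages, before any Markov-type argument can be run.

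A second, related problem is the readout observable. If $O$ is a Pauli on the output qubit tensored with the projector onto the \emph{final} clock state, the history state assigns that clock value weight only $1/\mathrm{poly}(n)$, so $|f_{H,O}(x)|=\Theta(1/\mathrm{poly}(n))$ and the constant bound $1/6$ is vacuous: the identically-zero predictor already satisfies the hypothesis for your $(H,O)$, and nothing about $\mathsf{DLP}$ follows. You flag the need for $|f_{H,O}|$ to be bounded away from zero but do not supply the fix; the paper obtains it by appending $2T$ identity layers, so the answer is present for a constant ($2/3$) fraction of the clock values and the signal is $\Theta(1)$. Once both repairs are made, your ending (Markov over corner regions followed by the Blum--Micali worst-case to average-case reduction) is a legitimate, somewhat more forgiving alternative to the paper's, which argues that the corner averages are correct for \emph{every} $y$ and hence needs no average-case reduction; but with only the $1/6$ bound and a $\Theta(1)$ signal you must still verify that the fraction of correctly decoded instances exceeds $1/2$ by the inverse-polynomial margin that the Blum--Micali reduction requires.
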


In conclusion, Theorem~\ref{thm:limitations_huang} shows that any method similar to that of~\cite{huang:science} cannot learn to predict ground state properties of certain physical systems discussed in Section~\ref{subsec:physical_systems}.
Moreover, there are a few subtle differences between the setup of~\cite{huang:science} and the one discussed in this paper.
Firstly, the classical shadow paradigm uses data that is different from the PAC learning setting (i.e., the data does not correspond to evaluations of the function it aims to predict).
This distinction in setup makes the approach of~\cite{huang:science} more versatile, as their data can be utilized to evaluate multiple different observables (moreover, their methods also work in the PAC setting).
Secondly, the functions $f_{H, O}$ in Eq.~\eqref{eq:huang_science} are real-valued, which differs from our paper where we investigate functions that map onto a discrete label space. 
It is possible to address this difference by applying a threshold function to $f_{H, O}$ after it is learned.
However, this thresholding introduces a mismatch in the types of data, as it would involve using real-valued data to learn a function with discrete values (which is clearly different from the PAC setting).

\subsection{Power of data}
\label{subsec:power_data}

In~\cite{huang:power} the authors show how having access to data radically enhances what can be efficiently evaluated.
In this section we connect the ideas from their work to the formalism we introduce in this paper.
Specifically, we will discuss a family of functions inspired by~\cite{huang:power} that from their description alone cannot be efficiently evaluated classically, yet access to a few examples (i.e., evaluations of the function) allows them to be efficiently evaluated classically.
This highlights an important difference between complexity-theoretic separations and learning separations, since in the latter one has to deal with the learner having access to data when proving classical non-learnability.

Consider a polynomial-depth parameterized quantum circuit $U(\theta, \vec{\phi})$ -- with two types of parameters $\theta \in \mathbb{R}$ parameterizing a single gate and $\vec{\phi} \in \mathbb{R}^\ell$ parameterizing multiple other gates -- that is universal in the sense that for every polynomial-depth circuit $V$ there exists parameters $\vec{\phi}^* \in \mathbb{R}^\ell$ such that 
\[
U(0, \vec{\phi}^*)\ket{0^n} = V\ket{0^n}.
\]
Moreover, assume the gates in $U$ are of the form $\mathrm{exp}\left(-\frac{i\theta}{2}A \right)$, with $A^2 = I$ (e.g., $Z$- or $X$-rotations).
By measuring the output of the circuit we define a family of single parameter functions given by
\[
f_{\vec{\phi}}(\theta) = \bra{0^n}U(\theta, \vec{\phi})^\dagger M U(\theta, \vec{\phi})\ket{0^n}.
\]

Following an argument similar to~\cite{huang:power}, due to the universality of the parameterized quantum circuit no efficient randomized classical algorithm can take as input a $\vec{\phi} \in \mathbb{R}^\ell$ and compute the function $f_{\vec{\phi}}$ on a given point $\theta \in \mathbb{R}$ up to constant error in time $\mathcal{O}\left(\mathrm{poly}(n)\right)$, unless $\mathsf{BPP} = \mathsf{BQP}$.
Intuitively, one might thus think that the concept class $\{f_{\vec{\phi}} \mid \vec{\phi} \in \mathbb{R}^\ell\}$ exhibits a separation between classical and quantum learners.
However, it turns out that the examples given to a classical learner radically enhance what it can efficiently evaluate.
In particular, given a few of evaluations of $f_{\vec{\phi}}$ for some fixed but arbitrary $\vec{\phi} \in \mathbb{R}^\ell$, a classical learner is suddenly able to efficiently evaluate the function.
To see this, note that by~\cite{nakanishi:pqc} one can write the functions as
\[
f_{\vec{\phi}}(\theta) = \alpha\cos(\theta - \beta) + \gamma, \quad \text{for }\alpha, \beta, \gamma \in \mathbb{R},
\]
where the coefficients $\alpha, \beta$ and $\gamma$ are all independent of $\theta$ (but they do depend on $\vec{\phi}$).
From this we can see that any three distinct examples $\big\{\big(\theta_i, f_{\vec{\phi}}(\theta_i)\big)\big\}_{i=1}^3$ uniquely determine $f_{\vec{\phi}}(\theta)$ and one can simply fit $\alpha, \beta$ and $\gamma$ to these three examples to learn how to evaluate $f_{\vec{\phi}}$ on unseen points.
We would like to point that the $\mathsf{BQP}$-hard problem in question is not evaluating $f_{\vec{\phi}}$ for a fixed $\vec{\phi} \in \mathbb{R}^\ell$, but rather evaluating $f_{\vec{\phi}}$ when $\vec{\phi}\in \mathbb{R}^\ell$ is part of the input.
This approach can be generalized to settings with more than one free parameter $\theta$, by using the fact that expectation values of parameterized quantum circuits can be written as a Fourier series~\cite{schuld:pqc}.
Specifically, when the number of frequencies appearing in the Fourier series is polynomial, then a polynomial number of examples suffices to fit the Fourier series and learn how to evaluate the expectation value of the quantum circuits for an arbitrary choice of parameters.

As discussed in Section~\ref{sec:eff_data}, one way to deal with the fact that data can radically enhance what can be efficiently evaluated is to ensure that the data itself is efficiently generatable.
However, for the concepts discussed above, the examples are such that only a quantum computer can generate them efficiently. 
In other words, these functions exemplify how hard to generate data can radically enhance what a classical learner can efficiently evaluate.
As discussed in Section~\ref{sec:eff_data}, another way to deal with the fact that data can radically enhance what can be efficiently evaluated is to ensure that the concepts lie outside of $\mathsf{HeurP}/\mathsf{poly}$.
However, for the case discussed above, every $f_{\vec{\phi}}$ corresponds to a function in $\mathsf{HeurP}/\mathsf{poly}$, since the coefficients $\alpha, \beta$ and $\gamma$ suffice as the advice.
Finally, we note that for certain circuits one could have exponentially many terms in the Fourier series~\cite{casas:fourier, caro:encoding}, in which case it is unclear how to classically learn them.

\subsection{Physically-motivated PAC learning settings with fixed hypothesis classes}
\label{subsec:proper_physics}

Throughout this paper we mainly focused on the setting where the learner is allowed to output arbibtrary hypotheses (barring that they have to be tractable as discussed in Appendix~\ref{appendix:poly_eval}). 
However, we want to highlight that setting where the learner is constrained to only be able to output hypothesis from a fixed hypothesis class is also relevant from a practical perspective.
In particular, in this section discuss how two well-studied problems (i.e., Hamiltonian learning, and identifying order parameters for phases of matter) fit in this setting.
Recall that in this setting, it is allowed and reasonable for the hypothesis class to be classically- or quantumly- intractable.

\paragraph{Hamiltonian learning}
In Hamiltonian learning one is given measurement data from a quantum experiment, and the goal is to recover the Hamiltonian that best matches the data.
Throughout the literature, various different types of measurement data have been considered.
For example, it could be measurement data from ground states, (non-zero temperature) thermal sates, or time-evolved states.
In our case, the data will be measurement data from time-evolved states and we formulate Hamiltonian learning in terms of a hypothesis class as follows.
First, we fix a (polynomially-sized) set of Hermitian operators $\{H_\ell\}_{\ell = 1}^L$.
Next, we consider a family of Hamiltonians $\{H_\beta\}_{\beta \in \mathbb{R}^L}$, where
\begin{align}
\label{eq:hams}
    H_\beta = \sum_{\ell = 1}^L \beta_\ell H_\ell.    
\end{align}
Finally, we define the hypothesis class $\mathcal{H}^{\mathrm{HL}} = \{h_\beta\}_{\beta \in \mathbb{R}^L}$, with concepts defined as
\begin{align}
\label{eq:hl_concepts}
    h_\beta(z, t) = \mathrm{sign}\big(\mathrm{Tr}\big[U^\dagger(t)\rho_z U(t)O_z\big]\big), \quad U(t) = e^{it H_\beta}.
\end{align}
Here $z$ describes the experimental setup, specifying the starting state (that will evolve under $H_\beta$ for time $t$) and the observable measured at the end.
A natural specification of the concepts that a learner could output are the parameters $\beta$.
In particular, in Hamiltonian learning we are only concerned with identifying which concept generated the data (i.e., what is the specification of the underlying Hamiltonian), as opposed to finding a hypothesis that closely matches the data.
In other words, the problem of Hamiltonian learning can naturally be formulated as PAC learning setting where the learner is constrained to only be able to output hypotheses described in Eq.~\eqref{eq:hl_concepts}.

With respect to learning separations, one might think that the above setting is a good candidate to exhibit a $\mathsf{C}_{\mathcal{H}^{\mathrm{HL}}}/\mathsf{Q}_{\mathcal{H}^{\mathrm{HL}}}$ separation, since the hypotheses are classically intractable and quantumly efficient to evaluate (assuming $\mathsf{BPP} \neq \mathsf{BQP}$).
Moreover, according to the folklore, quantum learners are most likely to have its advantages for data that is ``quantum-generated'', which certainly seems to be the case here.
However, recall that in the setting where the learner is constrained to output hypotheses from a fixed hypothesis class the task is not to evaluate, but rather to identify the concept generating the examples.
Therefore, the arguments we used throughout this paper do not directly apply.
In fact, it turns out that classical learners can efficiently identify the parameters of the Hamiltonian generating the data in many natural settings~\cite{anshu:hs, haah:hs, huang:hs}, eliminating the possibility of a $\mathsf{C}_{\mathcal{H}^{\mathrm{HL}}}/\mathsf{Q}_{\mathcal{H}^{\mathrm{HL}}}$ separation.
 
\paragraph{Order parameters and phases of matter}
When studying phases of matter one might want to identify what physical properties characterize the phase.
One can formulate this problem as finding a specification of the correct hypothesis selected from a hypothesis class consisting of possible \emph{order parameters}.
In particular, we fix the hypotheses $\mathcal{H}^{\mathrm{order}} = \{h_\alpha\}$ to be of a very special form, which compute certain expectation values of ground states given a specification of a Hamiltonian.
That is, we formally define the hypotheses as
\begin{align}
\label{eq:orderparam_concepts}
    h_\alpha(\beta) = \mathrm{sign}\left(\mathrm{Tr}\left[O_{\alpha}\rho_{\beta}\right]\right),
\end{align}
where $\rho_\beta$ denotes the ground state of some Hamiltonian specified by $\beta$ (e.g., using the parameterization in Eq.~\eqref{eq:hams}), and $\alpha$ specifies an observable $O_\alpha$ drawn from a set of observables that are deemed potential candidates for the order parameter that characterize the phase.
In this setting, one might not necessarily want to evaluate the hypotheses, as they might require one to prepare the ground state, which is generally intractable (even for a quantum computer).
However, one might still want to identify the observable $O_{\alpha}$ that correctly characterizes the phase of the physical system specified by $\beta$ (i.e., the corresponding \emph{order parameter}).
In other words, the problem of identifying order parameters naturally fits in the PAC learning setting where the learner is constrained to only be able to output hypotheses described in Eq.~\eqref{eq:orderparam_concepts}.

As in the case of Hamiltonian learning, one might think that the above concepts are good candidates to exhibit a $\mathsf{C}_{\mathcal{H}^{\mathrm{order}}}/\mathsf{Q}_{\mathcal{H}^{\mathrm{order}}}$ separation, since the hypotheses are classically intractable and quantumly efficient to evaluate (assuming $\mathsf{BPP} \neq \mathsf{BQP}$).
In fact, according to the folklore, quantum learners are most likely to have advantages for data that is ``quantum-generated'', which certainly seems to also be the case here.
However, as already mentioned, in the setting where the learner is constrained to output hypotheses from a fixed hypothesis class the goal is only to identify the correct hypothesis, and it is therefore not enough to just have concepts that are classically intractable.
We remark that the methods of~\cite{huang:science} also apply to phase classification, but they are more aimed at the PAC learning setting where the learner can output arbitrary hypotheses (i.e., the main goal is to predict the phase of a given physical system). 
In particular, their methods do not directly allow one to obtain a physically-meaningful description of the order-parameter, which is the main goal in the setting where the learner is constrained to output hypotheses from a fixed hypothesis class (which is related to the popular theme of ``explainability'' in machine learning).

\bigskip

In conclusion, while there has been progress in studying separations in the setting where the learner is constrained to output hypotheses from a fixed hypothesis class, there is still much to be discovered. 
Note that if the hypothesis class is $\mathsf{BQP}$-complete in the sense that it can perform arbitrary quantum computation, then a collapse similar to Lemma~\ref{lemma:cq=qq} happens and no separations are possible.
All in all, we have yet to find an example of a learning setting where the data is generated by a genuine quantum process and where it is necessary to use a quantum algorithm to efficiently identify the process generating the data.

\paragraph{Acknowledgements} 
The authors thank Simon C. Marshall, Srinivasan Arunachalam, and Tom O'Brien for helpful discussions.
The authors are grateful to Peter Bruin for valuable comments on the hardness of the discrete cube root for the $2^c$-integers in Section~\ref{subsec:new_sep}.
This work was supported by the Dutch Research Council (NWO/ OCW), as part of the Quantum Software Consortium programme (project number 024.003.037).
This work was also supported by the Dutch National Growth Fund (NGF), as part of the Quantum Delta NL programme.
This work was also supported by the European Union’s Horizon Europe program through the ERC CoG
BeMAIQuantum (Grant No.\ 101124342).
Views and opinions expressed are however those of the author(s) only and do not necessarily reflect those of the European Union or the European Research Council. 
Neither the European Union nor the granting authority can be held responsible for them.

\bibliographystyle{alpha}
\bibliography{main}

\newcommand{\etalchar}[1]{$^{#1}$}
\begin{thebibliography}{CGFM{\etalchar{+}}21}

\bibitem[AA11]{aaronson:supremacy}
Scott Aaronson and Alex Arkhipov.
\newblock The computational complexity of linear optics.
\newblock In {\em Proceedings of the forty-third annual ACM symposium on Theory of computing}, 2011.

\bibitem[AAKS20]{anshu:hs}
Anurag Anshu, Srinivasan Arunachalam, Tomotaka Kuwahara, and Mehdi Soleimanifar.
\newblock Sample-efficient learning of quantum many-body systems.
\newblock In {\em 2020 IEEE 61st Annual Symposium on Foundations of Computer Science (FOCS)}, pages 685--691. IEEE, 2020.

\bibitem[Aar13]{aaronson:blog}
Scott Aaronson, Jan 2013.
\newblock {\color{blue}\texttt{https://cstheory.stackexchange.com/questions/15066/\\consequences-of-bqp-subseteq-p-poly}}.

\bibitem[AB09]{arora:book}
Sanjeev Arora and Boaz Barak.
\newblock {\em Computational complexity: a modern approach}.
\newblock Cambridge University Press, 2009.

\bibitem[ACGS88]{alexi:rsa}
Werner Alexi, Benny Chor, Oded Goldreich, and Claus~P Schnorr.
\newblock Rsa and rabin functions: Certain parts are as hard as the whole.
\newblock {\em SIAM Journal on Computing}, 17:194--209, 1988.

\bibitem[Adl78]{adleman:theorem}
Leonard Adleman.
\newblock Two theorems on random polynomial time.
\newblock In {\em 19th Annual Symposium on Foundations of Computer Science ({S}{F}{C}{S} 1978)}, pages 75--83. IEEE Computer Society, 1978.

\bibitem[AdW17]{arunachalam:qlearning}
Srinivasan Arunachalam and Ronald de~Wolf.
\newblock Guest column: A survey of quantum learning theory.
\newblock {\em ACM SIGACT News}, 48, 2017.

\bibitem[AS06]{arrighi:blind}
Pablo Arrighi and Louis Salvail.
\newblock Blind quantum computation.
\newblock {\em International Journal of Quantum Information}, 4, 2006.

\bibitem[BM84]{blum:hardness}
Manuel Blum and Silvio Micali.
\newblock How to generate cryptographically strong sequences of pseudorandom bits.
\newblock {\em SIAM journal on Computing}, 13, 1984.

\bibitem[BMS16]{bremner:supremacy}
Michael~J Bremner, Ashley Montanaro, and Dan~J Shepherd.
\newblock Average-case complexity versus approximate simulation of commuting quantum computations.
\newblock {\em Physical review letters}, 117, 2016.

\bibitem[BT06]{bogdanov:average}
Andrej Bogdanov and Luca Trevisan.
\newblock Average-case complexity.
\newblock {\em Theoretical Computer Science}, 2006.

\bibitem[BWP{\etalchar{+}}17]{biamonte:qml}
Jacob Biamonte, Peter Wittek, Nicola Pancotti, Patrick Rebentrost, Nathan Wiebe, and Seth Lloyd.
\newblock Quantum machine learning.
\newblock {\em Nature}, 549, 2017.

\bibitem[CC21]{cade:susy}
Chris Cade and P~Marcos Crichigno.
\newblock Complexity of supersymmetric systems and the cohomology problem.
\newblock {\em arXiv \texttt{2107.00011}}, 2021.

\bibitem[CCL23]{casas:fourier}
Berta Casas and Alba Cervera-Lierta.
\newblock Multi-dimensional fourier series with quantum circuits.
\newblock {\em arXiv \texttt{2302.03389}}, 2023.

\bibitem[CFW22]{cade:guidable}
Chris Cade, Marten Folkertsma, and Jordi Weggemans.
\newblock Complexity of the guided local hamiltonian problem: improved parameters and extension to excited states.
\newblock {\em arXiv \texttt{2207.10097}}, 2022.

\bibitem[CGFM{\etalchar{+}}21]{caro:encoding}
Matthias~C Caro, Elies Gil-Fuster, Johannes~Jakob Meyer, Jens Eisert, and Ryan Sweke.
\newblock Encoding-dependent generalization bounds for parametrized quantum circuits.
\newblock {\em Quantum}, 5, 2021.

\bibitem[CGK18]{corrigan:dlp}
Henry Corrigan-Gibbs and Dmitry Kogan.
\newblock The discrete-logarithm problem with preprocessing.
\newblock In {\em Advances in Cryptology--EUROCRYPT 2018: 37th Annual International Conference on the Theory and Applications of Cryptographic Techniques, Tel Aviv, Israel, April 29-May 3, 2018 Proceedings, Part II 37}, pages 415--447. Springer, 2018.

\bibitem[CGW14]{childs:qma}
Andrew~M Childs, David Gosset, and Zak Webb.
\newblock The bose-hubbard model is {Q}{M}{A}-complete.
\newblock In {\em International Colloquium on Automata, Languages, and Programming}. Springer, 2014.

\bibitem[Chi04]{childs:hs}
Andrew~Macgregor Childs.
\newblock {\em Quantum information processing in continuous time}.
\newblock PhD thesis, Massachusetts Institute of Technology, 2004.

\bibitem[FF93]{feigenbaum:rsr}
Joan Feigenbaum and Lance Fortnow.
\newblock Random-self-reducibility of complete sets.
\newblock {\em SIAM Journal on Computing}, 22:994--1005, 1993.

\bibitem[GD22]{gyurik:old}
Casper Gyurik and Vedran Dunjko.
\newblock On establishing learning separations between classical and quantum machine learning with classical data.
\newblock {\em arXiv \texttt{2208.06339}}, 2022.

\bibitem[GHGM22]{gharibian:guidable}
Sevag Gharibian, Ryu Hayakawa, Fran{\c{c}}ois~Le Gall, and Tomoyuki Morimae.
\newblock Improved hardness results for the guided local hamiltonian problem.
\newblock {\em arXiv \texttt{2207.10250}}, 2022.

\bibitem[GMT82]{goldwasser:rsa}
Shafi Goldwasser, Silvio Micali, and Po~Tong.
\newblock Why and how to establish a private code on a public network.
\newblock In {\em 23rd Annual Symposium on Foundations of Computer Science ({S}{F}{C}{S} 1982)}, pages 134--144. IEEE, 1982.

\bibitem[HBM{\etalchar{+}}21]{huang:power}
HY~Huang, M~Broughton, M~Mohseni, R~Babbush, S~Boixo, H~Neven, and JR~McClean.
\newblock Power of data in quantum machine learning (2020).
\newblock {\em Nature Communications}, 2021.

\bibitem[HHKL21]{haah:ham_sim}
Jeongwan Haah, Matthew~B Hastings, Robin Kothari, and Guang~Hao Low.
\newblock Quantum algorithm for simulating real time evolution of lattice hamiltonians.
\newblock {\em SIAM Journal on Computing}, 2021.

\bibitem[HHL09]{harrow:qls}
Aram~W Harrow, Avinatan Hassidim, and Seth Lloyd.
\newblock Quantum algorithm for linear systems of equations.
\newblock {\em Physical review letters}, 103, 2009.

\bibitem[HKT21]{haah:ham_learning}
Jeongwan Haah, Robin Kothari, and Ewin Tang.
\newblock Optimal learning of quantum hamiltonians from high-temperature gibbs states.
\newblock {\em arXiv \texttt{2108.04842}}, 2021.

\bibitem[HKT22a]{haah:hs}
Jeongwan Haah, Robin Kothari, and Ewin Tang.
\newblock Optimal learning of quantum hamiltonians from high-temperature gibbs states.
\newblock In {\em 2022 IEEE 63rd Annual Symposium on Foundations of Computer Science (FOCS)}, pages 135--146. IEEE, 2022.

\bibitem[HKT{\etalchar{+}}22b]{huang:science}
Hsin-Yuan Huang, Richard Kueng, Giacomo Torlai, Victor~V Albert, and John Preskill.
\newblock Provably efficient machine learning for quantum many-body problems.
\newblock {\em Science}, 377, 2022.

\bibitem[HTFS22]{huang:hs}
Hsin-Yuan Huang, Yu~Tong, Di~Fang, and Yuan Su.
\newblock Learning many-body hamiltonians with heisenberg-limited scaling.
\newblock {\em arXiv \texttt{2210.03030}}, 2022.

\bibitem[JKLP18]{jordan:bqp_hep}
Stephen~P Jordan, Hari Krovi, Keith~SM Lee, and John Preskill.
\newblock Bqp-completeness of scattering in scalar quantum field theory.
\newblock {\em Quantum}, 2:44, 2018.

\bibitem[KV94a]{kearns:crypto}
Michael Kearns and Leslie Valiant.
\newblock Cryptographic limitations on learning boolean formulae and finite automata.
\newblock {\em Journal of the ACM (JACM)}, 1994.

\bibitem[KV94b]{kearns:clt}
Michael Kearns and Umesh Vazirani.
\newblock {\em An introduction to computational learning theory}.
\newblock MIT press, 1994.

\bibitem[LAT21]{liu:dlp}
Yunchao Liu, Srinivasan Arunachalam, and Kristan Temme.
\newblock A rigorous and robust quantum speed-up in supervised machine learning.
\newblock {\em Nature Physics}, 2021.

\bibitem[LCV07]{liu:qma}
Yi-Kai Liu, Matthias Christandl, and Frank Verstraete.
\newblock Quantum computational complexity of the {N}-representability problem: Qma complete.
\newblock {\em Physical review letters}, 98:110503, 2007.

\bibitem[NFT20]{nakanishi:pqc}
Ken Nakanishi, Keisuke Fujii, and Synge Todo.
\newblock Sequential minimal optimization for quantum-classical hybrid algorithms.
\newblock {\em Physical Review Research}, 2, 2020.

\bibitem[OIS{\etalchar{+}}21]{brien:ham_learning}
Thomas O'Brien, LevC Ioffe, Yuan Su, David Fushman, Hartmut Neven, Ryan Babbush, and Vadim Smelyanskiy.
\newblock Quantum computation of molecular structure using data from challenging-to-classically-simulate nuclear magnetic resonance experiments.
\newblock {\em arXiv \texttt{2109.02163}}, 2021.

\bibitem[OIWF21]{gorman:electronic}
Bryan O'Gorman, Sandy Irani, James Whitfield, and Bill Fefferman.
\newblock Electronic structure in a fixed basis is qma-complete.
\newblock {\em arXiv \texttt{2103.08215}}, 2021.

\bibitem[PGPZF23]{perez:relation}
Jordi P{\'e}rez-Guijarro, Alba Pag{\`e}s-Zamora, and Javier~R Fonollosa.
\newblock Relation between quantum advantage in supervised learning and quantum computational advantage.
\newblock {\em arXiv \texttt{2304.06687}}, 2023.

\bibitem[PM17]{piddock:qma}
Stephen Piddock and Ashley Montanaro.
\newblock The complexity of antiferromagnetic interactions and 2d lattices.
\newblock {\em Quantum Information \& Computation}, 17:636--672, 2017.

\bibitem[SG04]{servedio:q_c_learnability}
Rocco Servedio and Steven~J Gortler.
\newblock Equivalences and separations between quantum and classical learnability.
\newblock {\em SIAM Journal on Computing}, 2004.

\bibitem[Sho99]{shor:factoring}
Peter Shor.
\newblock Polynomial-time algorithms for prime factorization and discrete logarithms on a quantum computer.
\newblock {\em SIAM review}, 41, 1999.

\bibitem[SSBD14]{shalev:book}
Shai Shalev-Shwartz and Shai Ben-David.
\newblock {\em Understanding machine learning: From theory to algorithms}.
\newblock Cambridge university press, 2014.

\bibitem[SSHE21]{sweke:gen_mod}
Ryan Sweke, Jean-Pierre Seifert, Dominik Hangleiter, and Jens Eisert.
\newblock On the quantum versus classical learnability of discrete distributions.
\newblock {\em Quantum}, 5, 2021.

\bibitem[SSM21]{schuld:pqc}
Maria Schuld, Ryan Sweke, and Johannes~Jakob Meyer.
\newblock Effect of data encoding on the expressive power of variational quantum-machine-learning models.
\newblock {\em Physical Review A}, 103, 2021.

\bibitem[WFC23]{weggemans:guidable}
Jordi Weggemans, Marten Folkertsma, and Chris Cade.
\newblock Guidable local hamiltonian problems with implications to heuristic ansatze state preparation and the quantum pcp conjecture.
\newblock {\em arXiv \texttt{2302.11578}}, 2023.

\bibitem[WMN10]{wei:qma}
Tzu-Chieh Wei, Michele Mosca, and Ashwin Nayak.
\newblock Interacting boson problems can be qma hard.
\newblock {\em Physical review letters}, 104, 2010.

\end{thebibliography}

\clearpage
\appendix

\setcounter{footnote}{0}

\section{Details regarding definitions}
\label{appendix:details}

\subsection{Constraining hypothesis classes to those that are efficiently evaluatable}
\label{appendix:poly_eval}

In this section, we discuss why it makes sense to restrict the hypothesis class to be efficiently evaluatable.
Specifically, we show that if we allow the learner to use hypotheses that run for superpolynomial time, then every concept class that is learnable in superpolynomial time is also learnable in polynomial time.
Thus, if we do not restrict the hypotheses to be efficiently evaluatable, then the restriction that the learning algorithm has to run in polynomial time is vacuous (i.e., it imposes no extra restrictions on what can be learned).
For more details we refer to~\cite{kearns:clt}.

Consider a concept class $\mathcal{C}$ that is learnable by a superpolynomial time learning algorithm $\mathcal{A}$ using a hypothesis class $\mathcal{H}$.
To show that this concept class is learnable using a polynomial time learning algorithm, consider the hypothesis class $\mathcal{H}$ whose hypotheses are enumerated by all possible polynomially-sized sets of examples.
Each hypothesis in $\mathcal{H}'$ runs the learning algorithm $\mathcal{A}$ on its corresponding set of examples, and it evaluates the hypothesis from $\mathcal{H}$ that the learning algorithm outputs based on this set of examples.
Finally, consider the polynomial-time learning algorithm~$\mathcal{A}'$ that queries the example oracle a polynomial number of times and outputs the specification of the hypothesis in $\mathcal{H}'$ that corresponds to the obtained set of examples.
By construction, this polynomial-time learning algorithm $\mathcal{A}'$ now learns $\mathcal{C}$.

\subsection{Proof of Lemma~\ref{lemma:cq=qq}}
\label{appendix:cqqq}

\CQequalsQQ*

\begin{proof}

Since any efficient classical algorithm can be simulated using an efficient quantum algorithm it is obvious that $\mathsf{CQ} \subseteq \mathsf{QQ}$.
For the other inclusion, let $L = \left(\mathcal{C}, \mathcal{D}\right) \in \mathsf{QQ}$.
That is, the concept classes $\mathcal{C}$ are efficiently learnable under the distributions $\mathcal{D}$ by a quantum learning algorithm $\mathcal{A}^q$ using a quantum evaluatable hypothesis class $\mathcal{H}$. 
To show that $L \in \mathsf{CQ}$, consider the quantum evaluatable hypothesis class $\mathcal{H}'$ whose hypotheses are enumerated by all possible polynomially-sized sets of training examples.
Each hypothesis in $\mathcal{H}'$ runs the quantum learning algorithm $\mathcal{A}^q$ on its corresponding set of examples, and evaluates the hypothesis from $\mathcal{H}$ that the quantum learning algorithm $\mathcal{A}^q$ outputs based on the set of examples.
Finally, consider the classical polynomial-time learning algorithm $\mathcal{A}^c$ that queries the example oracle a polynomial number of times and outputs the specification of the hypothesis in $\mathcal{H}'$ that corresponds to the obtained set of examples.
By construction, this classical polynomial-time algorithm $\mathcal{A}^c$ can learn the concept classes $\mathcal{C}$ under the distributions $\mathcal{D}$ using the quantum evaluatable hypothesis class $\mathcal{H}'$. 
This shows that $L \in \mathsf{CQ}$.

\end{proof}

\subsection{Proof of Lemma~\ref{lemma:samppoly}}
\label{appendix:samp_poly}

\samppoly*

\begin{proof}
The proof strategy is analogous to the arguments in Section 2 of the supplementary material of~\cite{huang:power}, where they show that $\mathsf{BPP/samp} \subseteq \mathsf{P/poly}$.
Let $(L, \{\mathcal{D}_n\}_{n \in \mathbb{N}})~\in~\mathsf{HeurBPP/samp}$.
In particular, there exist a polynomial-time classical algorithms $\mathcal{A}$ with the following property: for every $n$ and $\epsilon > 0$ it holds that
\begin{align}
\label{eq:heursamp1}
    \mathsf{Pr}_{x \sim \mathcal{D}_n}\left[\mathsf{Pr}\big(\mathcal{A}(x, 0^{\lfloor 1/\epsilon \rfloor}, \mathcal{T}) = L(x)\big) \geq \frac{2}{3}\right] \geq 1 - \epsilon,
\end{align} 
where the inner probability is over the randomization of $\mathcal{A}$ and $\mathcal{T} = \{(x_i, L(x_i)) \mid x_i \sim \mathcal{D}_n\}_{i=1}^{\mathrm{poly}(n)}$.

Let $\epsilon > 0$ and partition the set of $n$-bit strings as follows
\begin{align}
    \label{eq:split}
\{0,1\}^n = I^n_{\text{correct}}(\epsilon) \sqcup I^n_{\text{error}}(\epsilon),
\end{align}
such that for every $x \in I^n_{\text{correct}}(\epsilon)$ we have
\begin{align}
\mathsf{Pr}\big(\mathcal{A}(x, 0^{\lfloor 1/\epsilon \rfloor}, \mathcal{T}) = L(x)\big) \geq \frac{2}{3},
\end{align}
where the probability is taken over the internal randomization of $\mathcal{A}$ and $\mathcal{T}$.
Importantly, we remark that our partition is such that
\begin{align}
\label{eq:probeps}
\mathsf{Pr}_{x \sim \mathcal{D}_n}\big[x \in I^n_{\text{correct}}(\epsilon)\big] \geq 1-\epsilon.
\end{align}

By applying the arguments of Section 2 of the supplementary material of~\cite{huang:power} to $\mathcal{A}$ with the bitstring $0^{\lfloor 1/\epsilon \rfloor}$ fixed as input we obtain  a polynomial-time classical algorithm $\mathcal{A}'$ with the following property: for every $n$ there exists an advice string $\alpha_{n, \epsilon} \in \{0,1\}^{\mathrm{poly(n, 1/\epsilon)}}$\footnote{Note that the advice string also depends on $\epsilon$, since $0^{\lfloor 1/\epsilon \rfloor}$ was fixed as input to $\mathcal{A}$.} such that for every $x \in  I^n_{\text{correct}}(\epsilon)$:
\begin{align}
\mathcal{A}'(x, 0^{\lfloor 1/\epsilon \rfloor}, \alpha_{n, \epsilon}) = L(x).
\end{align}
Intuitively, the algorithm $\mathcal{A}'(x, 0^{\lfloor 1/\epsilon \rfloor}, \alpha_{n, \epsilon})$ runs $\mathcal{A}(x, 0^{\lfloor 1/\epsilon \rfloor}, \mathcal{T})$ a certain number of times and decides its output based on a majority-vote.
Moreover, $\mathcal{A}'$ does so with a particular setting of random seeds and training data $\mathcal{T}$ that makes it correct decide $L(x)$, which is collected in $\alpha_{n, \epsilon}$.
Finally, from Eq.~\eqref{eq:probeps} we find that we have
\begin{align}
\mathsf{Pr}_{x \sim \mathcal{D}_n}\left[\mathcal{A}'(x, 0^{\lfloor 1/\epsilon \rfloor}, \alpha_{n, \epsilon}) = L(x)\right] \geq 1 - \epsilon,
\end{align}
which shows that $(L, \{\mathcal{D}_n\}_{n \in \mathbb{N}}) \in \mathsf{HeurP/poly}$.

\end{proof}

\section{Discrete logarithm concept class}
\subsection{$\mathsf{CC}/\mathsf{QQ}$ separation from intractability assumption in~\cite{blum:hardness}}
\label{appendix:dlp1}

In this section we show how to construct a binary concept class that achieves a $\mathsf{CC/QQ}$ separation when one assumes the classical intractability of the discrete logarithm class as stated in~\cite{blum:hardness}.
More precisely, in~\cite{blum:hardness} they state that if a family of classical circuits $\{C_n\}$ satisfies
\begin{align}
\label{eq:solvep}
C_n(x, g, p) = \log_{(a,p)} (x)\quad \forall a \in \{1, \dots, p -1\}\text{ and }\forall x \in \mathbb{Z}^*_p.
\end{align}
for at least a $\frac{1}{\mathrm{poly}(n)}$ fraction of all primes $p$, then the size of $C_n$ grows faster than any polynomial~in~$n$.

\begin{remark}
$\log_{(a,p)} x$ denotes the discrete logarithm modulo $p$ of $x$ with respect to the generator~$a$. That is, the discrete logarithm $\log_{(a, p)} x$ is the smallest positive integer $\ell$ such that $a^\ell \equiv x \mod p.$
\end{remark}

\dlponedef*

\dlpone*

\begin{proof}

For quantum learnability, we note that a quantum learner can with high probability recover the tuple $(a, p) \in \{0,1\}^{2n}$ from $\mathcal{O}\left(\mathrm{poly}(n)\right)$ examples drawn from $E(c_{(a,p)}, \mathcal{D}_n^{U})$ (i.e., it just considers those for which $b = 1$).
Once the quantum learner has figured out what $(a, p)$ is, it can directly use Shor's algorithm~\cite{shor:factoring} to compute the concept $c_{(a, p)}$ on unseen datapoints $(b, j, x) \in \mathcal{X}_n$.

For classical non-learnability, we note that since the examples are efficiently generatable classically, the existence of a classical learner implies the existence of poly($n$)-sized classical circuits $C_n$ that satisfy
\begin{align}
\label{eq:performance}
    \mathsf{Pr}_{(b, j, x)\sim \mathcal{D}_n^U}\left[C_n(a, p, b, j, x) = c_{(a, p)}(b, j, x)\right] \geq 1 - \epsilon, \quad\text{for some }\epsilon \in \Omega(1/\mathrm{poly}(n)).
\end{align}
In particular, the circuit $C_n$ generates examples for $c_{(a, p)}$ based on its input $(a, p)$, runs the classical learning algorithm and afterwards evaluates the hypothesis output by the learner on $(b, j, x)$.
What remains to be shown is that this circuit $C_n$ can be used to compute $\mathds{1}_{[0, (p-1)/2]}(\log_{(a, p)}(x))$ on at least a $\frac{1}{2} + \frac{1}{\mathrm{poly}(n)}$ fraction of $x \in \mathbb{Z}_p^*$, since by the random self-reducibility outlined in~\cite{blum:hardness} this would violate the hardness of the discrete logarithm stated in~\cite{blum:hardness}.
To see why this holds, note that the worst $C_n$ can do for computing $\mathds{1}_{[0, (p-1)/2]}(\log_{(a, p)}(x))$ while satisfying Eq.~\eqref{eq:performance} is when it is correct on all inputs with $b=1$, and if it is correct on at least one $j'$ for a given $x'$ then it must be correct on all $j = 1, \dots, n$ for that $x'$.
However, note that even in this case $C_n$ can correctly compute $\mathds{1}_{[0, (p-1)/2]}(\log_{(a, p)}(x))$ on at least a $1 - 2\epsilon$ fraction of all possible $x$.
Finally, as $\epsilon \in \Omega(1/\mathrm{poly}(n))$, we conclude that $C_n$ can compute $\mathds{1}_{[0, (p-1)/2]}(\log_{(a, p)}(x))$ on at least a $\frac{1}{2} + \frac{1}{\mathrm{poly}(n)}$ fraction of $x \in \mathbb{Z}_p^*$, thereby violating the hardness of the discrete logarithm stated in~\cite{blum:hardness}.

\end{proof}

\subsection{Singleton $\mathsf{CC}/\mathsf{QQ}$ separation from intractability assumption in~\cite{blum:hardness}}
\label{appendix:dlp2}

In this section we show how to construct a non-binary singleton concept class that achieves a $\mathsf{CC/QQ}$ separation when one assumes the classical intractability of the discrete logarithm class as stated in~\cite{blum:hardness} (see previous section for the formal statement of intractability stated in~\cite{blum:hardness}).

\dlptwodef*

\smallskip

\dlptwo*

\begin{proof}
    For quantum learnability, we note that the quantum learner does not need data, as it can output the hypothesis that reads inputs $(a, p, x)$ and uses Shor's algorithm to compute $\log_{(a, p)}(x)$.

    \enskip

    \noindent For classical non-learnability, we first recap the following claims about ``random self-reducibility''.

    \smallskip 

\noindent \textbf{Claim 1}: Fix a prime $p$, due to ``random self-reducibilities'' we have
\begin{itemize}
    \item For any fixed $a$, any polynomial-time algorithm $\mathcal{A}$ that satisfies
    \[
    \mathsf{Pr}_{x \in_U \mathbb{Z}_p^*} \left(\mathcal{A}(a, p, x) = \log_{(a, p)}(x) \right) \geq \frac{1}{2} + \frac{1}{\mathrm{poly}(n)}
    \]
    can be turned into a randomized  polynomial-time algorithm $\mathcal{A}'$ that satisfies
    \[
        \mathcal{A}'(a, p, x) = \log_{(a, p)}(x), \quad \forall x \in \mathbb{Z}^*_p.
    \]
    \noindent \underline{\emph{proof of claim}}: $\mathcal{A}'$ on input $(a, p, x)$ samples a uniformly random $c \in \{0, \dots, p-1\}$ and computes $ b = \mathcal{A}(a, p, xa^c)$. 
    If $a^{b - c} = x$, output $b-c$. 
    Otherwise, resample $c \in \{0, \dots, p-1\}$ and repeat.

    \item Any polynomial-time algorithm $\mathcal{A}$ that satisfies
    \[
    \mathsf{Pr}_{a \in_U \mathbb{Z}_p^*} \Big( \forall x \in \mathbb{Z}_p^* \text{ }:\text{ }\mathcal{A}(a, p, x) = \log_{(a, p)}(x)\Big) \geq \frac{1}{2} + \frac{1}{\mathrm{poly}(n)}
    \]
    can be turned into an polynomial-time algorithm $\mathcal{A}'$ that satisfies
    \[
        \mathcal{A}'(a, p, x) = \log_{(a, p)}(x) , \quad \forall a \in \mathbb{Z}^*_p.
    \]

    \noindent \underline{\emph{proof of claim}}: $\mathcal{A}'$ on input $(a, p, x)$ samples a uniformly random $a' \in \{0, \dots, p-1\}$ and computes $b = \mathcal{A}(a', p, a)$ and $c = \mathcal{A}(a', p, x)$. If $\mathcal{A}$ is correct on $a'$, i.e., we have that
    \[
    (a')^b = a,\text{ }(a')^c = x\text{ and }\gcd(b, p-1) = 1,
    \]
    then output $b^{-1}c$. 
    Otherwise, resample $a' \in \{0, \dots, p-1\}$ and repeat.
    
\end{itemize}

\medskip

\noindent \textbf{Claim 2}: Fix a prime $p$, a result of Claim 1 is that any polynomial-time algorithm $\mathcal{A}$ that satisfies
\begin{align}
    \label{eq:performance_fixed}
    \mathsf{Pr}_{(a, x) \in_U \left(\mathbb{Z}_p^*\right)^2} \Big(\mathcal{A}(a, p, x) = \log_{(a,p)} (x)) \Big) \geq \frac{3}{4} + \frac{1}{P(n)}
\end{align}
for a polynomial $P$ can be turned into a polynomial-time algorithm $\mathcal{A}'$ that satisfies
\begin{align}
\label{eq:solvep2}
\mathcal{A}'(a, p, x) = \log_{(a,p)} (x)\quad \forall a \in \{1, \dots, p -1\}\text{ and }\forall x \in \mathbb{Z}^*_p.
\end{align}

\smallskip

\noindent \underline{\emph{proof of claim}}: Let $P$ be a polynomial, and suppose $\mathcal{A}$ satisfies Eq.~\eqref{eq:performance_fixed}.
We say ``$\mathcal{A}$ solves $a$'' if
\[
    \label{eq:solvesg}
    \mathsf{Pr}_{x \in_U \mathbb{Z}_p^*} \left(\mathcal{A}(a, p, x) = \log_{(a,p)} (x) \right) > \frac{1}{2} + \frac{1}{P(n) - 2}.
\]
That is, the number of tuples $(a, x)$ such that $\mathcal{A}(a, p, x)$ is incorrect is at most $(\frac{1}{2} - \frac{1}{P(n)-2})\cdot 2^n$

The question we now ask ourselves is: what is the smallest amount of $a$ that $\mathcal{A}$ can solve while satisfying Eq.~\eqref{eq:performance_fixed}?
The number of $a$ that $\mathcal{A}$ cannot solve is maximized if $\mathcal{A}$ barely fails to solve it on all the $a$ that it cannot solve.
Specifically, for every $a$ that $\mathcal{A}$ cannot solve we have
\[
    \label{eq:frugalsolve}
    \mathsf{Pr}_{x \in_U \mathbb{Z}_p^*} \left(\mathcal{A}(a, p, x) = \log_{(a,p)} (x) \right) = \frac{1}{2} + \frac{1}{P(n)-2}.
\]
That is, for every $a$ that $\mathcal{A}$ cannot solve the number of tuples $(a, x)$ such that $\mathcal{A}(a, p, x)$ is incorrect is $(\frac{1}{2} - \frac{1}{P(n)}-2)2^n$.
Let $f$ denote the number of $a$ that $\mathcal{A}$ cannot solve, from Eq.~\eqref{eq:performance_fixed} we get
\[
f \cdot \left(\frac{1}{2} - \frac{1}{P(n)-2}\right)2^n < \left(\frac{1}{4} - \frac{1}{P(n)}\right)2^{2n}.
\]
which shows that $f < \left(\frac{1}{2} - \frac{1}{P(n)}\right)2^n$.

Part 1 of Claim 1 implies that if $\mathcal{A}$ solves $a$, then $\mathcal{A}$ can be turned into a polynomial-time randomized algorithm $\mathcal{A}'$ that is correct on all $x$ for the given generator $a$.
Finally, since the number of $a$ that $\mathcal{A}$ cannot solve is less than $\left(\frac{1}{2} - \frac{1}{P(n)}\right)2^n$, part 2 of Claim 1 implies that $\mathcal{A}'$ can be turned into a polynomial-time randomized algorithm $\mathcal{A}''$ that satisfies
\begin{align}
\label{eq:solvep3}
\mathcal{A}''(a, p, x) = \log_{(a,p)} (x)\quad \forall a \in \{1, \dots, p -1\}\text{ and }\forall x \in \mathbb{Z}^*_p.
\end{align}

\bigskip 

Using the above two claims, we are able to show that no classical learning algorithm can achieve ``efficient learnability''.
In particular, the existence of an efficient classical learner implies the existence of a polynomial--time algorithm $\mathcal{A}$ that satisfies
\begin{align}
\label{eq:performance_total}
    \mathsf{Pr}_{(a, p, x)}\left(\mathcal{A}(a, p, x) =  \log_{(a,p)} (x)\right) \geq 1 - \frac{1}{\mathrm{poly}(n)}
\end{align}
which we can show -- using Claims 1 and 2 -- can be turned into an algorithm $\mathcal{A}'$ that satisfies
\begin{align}
\label{eq:solvep4}
\mathcal{A}'(a, p, x) = \log_{(a,p)} (x)\quad \forall a \in \{1, \dots, p -1\}\text{ and }\forall x \in \mathbb{Z}^*_p
\end{align} 
for more than a polynomial fraction of all primes (violating the intractability as outlined in~\cite{blum:hardness}).

To see this, we suppose that an algorithm $\mathcal{A}$ satisfies Eq.~\eqref{eq:performance_total}, and show that it will satisfy Eq.~\eqref{eq:performance_fixed} for at least a $1 - 1/\mathrm{poly}(n)$ fraction of all primes, which violates the intractability of the discrete logarithm as outlined in~\cite{blum:hardness}.
We do so by letting $E$ denote the set of triples $(a, p, x)$ on which $\mathcal{A}$ is incorrect and asking ourselves what the most ``prime costly'' way for $\mathcal{A}$ to fail (note here we say that $\mathcal{A}$ fails on some prime if it does not satisfy Eq.~\eqref{eq:performance_fixed} for this prime).
Here by ``prime costly'' we mean: \emph{what is the cheapest way (in terms of number of triples in $E$) for $\mathcal{A}$ to fails on a given prime $p'$.}
In other words, what is the largest amount of primes on which $\mathcal{A}$ can fail.

In short, we note that the most ``prime costly'' would be to barely fail on every prime $p$ on which it fails. 
That is, for every prime $p'$ on which $\mathcal{A}$ fails, we have that
\[
\# \{(a, p', x) \in E\} = \left(\frac{3}{4} + \frac{1}{\mathrm{poly(n)}}\right) \cdot 2^{2n}
\]
Finally, this leads us to the observation that
\[
\#\text{primes $p$ on which $\mathcal{A}$ fails} \leq \frac{|E|}{\left(\frac{1}{4} + \frac{1}{\mathrm{poly(n)}} \right) \cdot 2^{2n}} \leq 2^n \cdot \frac{1}{\mathrm{poly(n)}}
\]
where we use that $|E| \leq \left(1 - \frac{1}{\mathrm{poly(n)}}\right)\cdot 2^{3n}$.

\end{proof}

\section{Discrete cube root concept class}
\subsection{$\mathsf{CC}/\mathsf{QC}$ separation from intractability assumption in~\cite{kearns:clt}}
\label{appendix:dcr1}

In this section we show how to construct a binary concept class that achieves a $\mathsf{CC/QC}$ separation when one assumes the classical intractability of the discrete cube root as stated in~\cite{kearns:clt}.
More precisely, in~\cite{kearns:clt} they state that if a family of classical circuits $\{C_n\}$ satisfies
\begin{align}
\label{eq:solvep}
C_n(N, x) = f_N^{-1}(x)\quad \forall x \in \mathbb{Z}^*_N.
\end{align}
for at least a $\frac{1}{\mathrm{poly}(n)}$ fraction of moduli $N$, then the size of $C_n$ grows faster than any polynomial~in~$n$.

\smallskip

\dcronedef*

\smallskip

\dcrone*

\begin{proof}

For quantum learnability, we note that a quantum learner can with high probability recover the modulus $N$ from $\mathcal{O}\left(\mathrm{poly}(n)\right)$ examples drawn from $E(c_{(N,i)}, \mathcal{D}_n^{U})$ (i.e., it just considers those for which $b = 1$).
Once the quantum learner has figured out what $N$ is, it can directly use Shor's algorithm~\cite{shor:factoring} to compute the $d^*$ such that $f_N^{-1}(x) \equiv x^{d^*} \mod N$.

For classical non-learnability, we note that since the examples are efficiently generatable classically, the existence of a classical learner implies the existence of poly($n$)-sized classical circuits $C_n$ that satisfy
\begin{align}
\label{eq:performancedcr}
    \mathsf{Pr}_{(b, j, x)\sim \mathcal{D}_n^U}\left[C_n(N, i, b, j, x) = c_{(N, i)}(b, j, x)\right] \geq 1 - \epsilon, \quad\text{for some }\epsilon \in \Omega(1/\mathrm{poly}(n)).
\end{align}
In particular, the circuit $C_n$ generates examples for $c_{(N, i)}$ based on its input $(N, i)$, runs the classical learning algorithm and afterwards evaluates the hypothesis output by the learner on $(b, j, x)$.
What remains to be shown is that this circuit $C_n$ can be used to compute $\mathrm{bin}(f_N^{-1}(x), n)$ on at least a $\frac{1}{2} + \frac{1}{\mathrm{poly}(n)}$ fraction of $x \in \mathbb{Z}_p^*$, since by the random self-reducibility outlined in~\cite{alexi:rsa, goldwasser:rsa} this would violate the hardness of the discrete logarithm stated in~\cite{kearns:clt}.
To see why this holds, note that the worst $C_n$ can do for computing $\mathrm{bin}(f_N^{-1}(x), n)$ while satisfying Eq.~\eqref{eq:performancedcr} is when it is correct on all inputs with $b=1$, and if it is correct on at least one $j'$ for a given $x'$ then it must be correct on all $j = 1, \dots, n$ for that $x'$.
However, note that even in this case $C_n$ can correctly compute $\mathrm{bin}(f_N^{-1}(x), n)$ on at least a $1 - 2\epsilon$ fraction of all possible $x$.
Finally, as $\epsilon \in \Omega(1/\mathrm{poly}(n))$, we conclude that $C_n$ can compute $\mathrm{bin}(f_N^{-1}(x), n)$ on at least a $\frac{1}{2} + \frac{1}{\mathrm{poly}(n)}$ fraction of $x \in \mathbb{Z}_N^*$, thereby violating the hardness of the discrete logarithm stated in~\cite{kearns:clt}.

\end{proof}

\subsection{Singleton $\mathsf{CC}/\mathsf{QQ}$ separation from intractability assumption in~\cite{kearns:clt}}
\label{appendix:dcr2}

In this section we show how to construct a binary singleton concept class that achieves a $\mathsf{CC/QQ}$ separation when one assumes the classical intractability of computing the function $f_N^{-1}$ as outlined in~\cite{kearns:clt} (see the previous section for the formal statement of intractability stated in~\cite{kearns:clt}).

\dcrtwodef*

\smallskip

\dcrtwo*

\begin{proof}
    For quantum learnability, we note that the quantum learner does not need data, as it can output the hypothesis that reads inputs $(N, x)$ and uses Shor's algorithm to compute $f_N^{-1}(x)$.

    For classical non-learnability, we note that since the examples are efficiently generatable classically, the existence of an efficient classical learner implies the existance of a polynomial-time classical algorithm $\mathcal{A}$ that satisfies
    \begin{align}
        \mathsf{Pr}_{x \in \mathbb{Z}_N^*}\left[\mathcal{A}(N, x) =  \mathrm{bin}(f^{-1}_{N}(x), n)\right] \geq \frac{1}{2} + \frac{1}{\mathrm{poly}(n)}. 
    \end{align}
    for at least an inverse polynomial fraction of moduli $N$.
    Since computing the least-significant bit of $f_N^{-1}$ on a $\frac{1}{2} + \frac{1}{\mathrm{poly}(n)}$ fraction of $x \in \mathbb{Z}_N^*$ is as hard as to computing $f^{-1}_N$ on all $x\in \mathbb{Z}_N^*$~\cite{alexi:rsa, goldwasser:rsa}, we conclude that $\mathcal{A}$ can be turned into a polynomial-time algorithm that computes $f^{-1}_{N}$ for an inverse polynomial fraction of moduli, violating the hardness assumption outlined in~\cite{kearns:clt}.  
\end{proof}

\section{Proof of Theorem~\ref{thm:modexp}}
\label{appendix:modexp}

\modexp*

\begin{proof}

To see why $L_{\mathrm{modexp}}$ is not in $\mathsf{CC}$, we first note that the modular exponentiation concept class contains the cube root function $f_N^{-1}$ discussed in Section~\ref{subsec:3root}.
Therefore, the proof presented in~\cite{kearns:clt}, which shows that the cube root concept class is not in $\mathsf{CC}$ under the DCRA, also implies that the modular exponentiation concept class is not in $\mathsf{CC}$ under the DCRA.
To briefly recap, recall from Section~\ref{subsec:3root} that we can efficiently generate examples $(y, f_N^{-1}(y))$, for $y \in \mathbb{Z}_N^*$ uniformly at random.
If we put these examples (together with examples for $b=1$) into an efficient classical learning algorithm for the modular exponentiation concept class, the learning algorithm would with high probability identify a classically efficiently evaluatable hypothesis that agrees with $f_N^{-1}$ on a polynomial fraction of inputs.
This directly violates the $2^c$-DCRA, which states that evaluating $f_N^{-1}$ is classically intractable, even on an inverse polynomial fraction of inputs.

What remains to be shown is that $L_{\mathrm{modexp}}$ is in $\mathsf{QC}$.
First, we note that a quantum learner can with high probability recover the modulus $N$ from $\mathcal{O}\left(1\right)$ examples drawn from $E(c_{(N,d)}, \mathcal{D}_n^{U})$ (i.e., it just needs one for which $b = 1$).
Next, we note that $EX(c_d, \mathcal{D}^U_n)$ will in $\mathcal{O}(1)$ queries with high probability return an example $(x, x^d \mod N)$, where $x$ is sampled uniformly at random from $\mathbb{Z}_N^*$.
To show that $L_{\mathrm{modexp}}$ is in $\mathsf{QC}$, we will describe a $\mathcal{O}(\mathrm{poly}(n, 1/\delta))$-time quantum learning algorithm that uses queries to $EX(c_d, \mathcal{D}^U_n)$ and identifies $d$ with probability at least $1-\delta$.
Before describing the quantum learning algorithm, we will first prove the following lemma, which we will later use to help us prove that our learning algorithm is both correct and efficient.

\begin{lemma}
\label{lemma:order}
Write $(p-1)(q-1) = 2^c \cdot p^{k_1}_1 \cdots p^{k_\ell}_\ell$, where the $p_i$s are distinct odd primes.
Then, for any $i = 1, \dots, \ell$ we have that with probability at least $1/2$, an example $(x, x^d \mod N)$ queried from $EX(c_d, \mathcal{D}^U_n)$ satisfies
\begin{align}
\label{eq:order}
    p^{k_i}_i \mid \mathrm{ord}_N(x),
\end{align}
where $\mathrm{ord}_N(x)$ denotes the order of $x$ in $\mathbb{Z}_N^*$.
\end{lemma}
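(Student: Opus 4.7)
The plan is to exploit the Chinese Remainder Theorem together with the structural hypothesis that $N$ is a $2^c$-integer, which (via condition $(ii)$ of Definition~\ref{def:2c}) forces the odd parts of $p-1$ and $q-1$ to be coprime. This ``disjoint support'' property on the odd primes is the key that makes the lemma straightforward; without it one would have to control $p_i$-valuations of an $\mathrm{lcm}$ involving contributions from both cyclic factors, and the claim would in general be false.

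Concretely, I would proceed as follows. First, use CRT to identify $\mathbb{Z}_N^*$ with $\mathbb{Z}_p^* \times \mathbb{Z}_q^*$, so that sampling $x$ uniformly from $\mathbb{Z}_N^*$ is equivalent to sampling its components $x_p \in \mathbb{Z}_p^*$ and $x_q \in \mathbb{Z}_q^*$ independently and uniformly, and $\mathrm{ord}_N(x) = \mathrm{lcm}(\mathrm{ord}_p(x_p), \mathrm{ord}_q(x_q))$. Next, observe that because $\gcd(p-1, q-1) = 2^{c'}$, every odd prime $p_i$ appearing in the factorization $(p-1)(q-1) = 2^c p_1^{k_1}\cdots p_\ell^{k_\ell}$ divides exactly one of $p-1$ or $q-1$, and moreover $v_{p_i}$ of that factor is exactly $k_i$. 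Without loss of generality assume $p_i^{k_i} \mid p-1$ (the other case is symmetric); since $p_i \nmid q-1$, we have $v_{p_i}(\mathrm{ord}_N(x)) = v_{p_i}(\mathrm{ord}_p(x_p))$, so the problem reduces to a statement about a uniform element of the cyclic group $\mathbb{Z}_p^*$.

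Fix a generator $g$ of $\mathbb{Z}_p^*$ and write $x_p = g^j$ with $j$ uniform in $\{0, 1, \dots, p-2\}$. The standard identity $\mathrm{ord}_p(g^j) = (p-1)/\gcd(j, p-1)$ gives
\[
v_{p_i}(\mathrm{ord}_p(x_p)) \;=\; v_{p_i}(p-1) - \min\bigl(v_{p_i}(j),\, v_{p_i}(p-1)\bigr) \;=\; k_i - \min(v_{p_i}(j), k_i),
\]
so $p_i^{k_i} \mid \mathrm{ord}_p(x_p)$ if and only if $p_i \nmid j$. A direct count shows the number of $j \in \{0, \dots, p-2\}$ with $p_i \mid j$ equals $(p-1)/p_i$, hence
\[
\Pr_{x \sim \mathcal{D}^U_n}\bigl[p_i^{k_i} \mid \mathrm{ord}_N(x)\bigr] \;=\; \frac{p_i - 1}{p_i} \;\geq\; \frac{2}{3} \;>\; \frac{1}{2},
\]
where the first inequality uses that $p_i$ is odd, hence $p_i \geq 3$. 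This proves the claimed bound. The only subtle point is invoking the hypothesis $\gcd(p-1, q-1) = 2^{c'}$ at the right moment to localize the $p_i$-valuation to a single cyclic factor; after that the argument is a one-line counting exercise in $\mathbb{Z}_{p-1}$.
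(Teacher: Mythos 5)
Your proof is correct and follows essentially the same route as the paper's: both invoke the CRT together with $\gcd(p-1,q-1)=2^{c'}$ to isolate the $p_i$-part of the group in a single cyclic factor, and then count the fraction of elements of full $p_i$-order, obtaining $1-1/p_i=\varphi(p_i^{k_i})/p_i^{k_i}\geq 1/2$. The only difference is presentational: the paper decomposes $\mathbb{Z}_N^*$ fully into prime-power cyclic factors and counts generators of $C_{p_i^{k_i}}$, while you stop at $\mathbb{Z}_p^*\times\mathbb{Z}_q^*$ and count via the exponent $j$ of a generator, which is the same computation.
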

\begin{proof}[Proof of Lemma~\ref{lemma:order}]

Let $C_k$ denote the cyclic group of order $k$.
Since $(p-1)(q-1) = 2^ca$ and $\gcd(p-1, q-1) = 2^{c'}$ as described in Definition~\ref{def:modexp}, the Chinese remainder theorem tells us that
\[
\mathbb{Z}_N^* \simeq \mathbb{Z}_p^* \times \mathbb{Z}_q^* \simeq C_{p-1} \times C_{q-1} \simeq C_{2^{c_1}} \times C_{2^{c_2}} \times C_{p^{k_1}_1} \times \dots \times C_{p^{k_\ell}_\ell},
\]
for some $c_1, c_2$ such that $c_1 + c_2 = c$.
For any $x \in \mathbb{Z}_N^*$ we will use $(x^{(1)}_0, x^{(2)}_0, x_1, \dots, x_\ell)$ to denote its corresponding element in $C_{2^{c_1}} \times C_{2^{c_2}} \times C_{p^{k_1}_1} \times \dots \times C_{p^{k_\ell}_\ell}$.
Next, note that the order of $x$ in $\mathbb{Z}_N^*$ is the least common multiple of the orders of all $x^{(1)}_0, x^{(2)}_0, x_1, \dots, x_\ell$ in their respective groups.
What this implies is that any element $x = (x^{(1)}_0, x^{(2)}_0, x_1, \dots, x_\ell)$ satisfies 
\[
    p^{k_i}_i \mid \mathrm{ord}_N(x),
\]
if $x_i$ is a generator of $C_{p^{k_i}_i}$.
The number of generators of $C_{p^{k_i}_i}$ is equal to $\varphi(p^{k_i}_i)$ (where $\varphi$ denotes Euler's totient function), and the number of elements of $x = (x^{(1)}_0, x^{(2)}_0, x_1, \dots, x_\ell)$ such that $x_i$ is a generator of $C_{p^{k_i}_i}$ is therefore equal to 
\[
2^{c_1} \cdot 2^{c_2} \cdot p^{k_1}_1 \cdots p^{k_{i-1}}_{i-1} \cdot \varphi(p^{k_i}_i) \cdot p^{k_{i+1}}_{i+1} \cdots p^{k_\ell}_\ell.
\]
Thus, the probability that a uniformly random $x \in \mathbb{Z}_N^*$ satisfies Eq.~\eqref{eq:order}
 is at least
 \begin{align*}
    \frac{2^{c_1} \cdot 2^{c_2} \cdot p^{k_1}_1 \cdots p^{k_{i-1}}_{i-1} \cdot \varphi(p^{k_i}_i) \cdot p^{k_{i+1}}_{i+1} \cdots p^{k_\ell}_\ell.}{\# \mathbb{Z}_N^*} &= \frac{2^{c_1} \cdot 2^{c_2} \cdot p^{k_1}_1 \cdots p^{k_{i-1}}_{i-1} \cdot \varphi(p^{k_i}_i) \cdot p^{k_{i+1}}_{i+1} \cdots p^{k_\ell}_\ell.}{(p-1)(q-1)} \\
    &= \frac{\varphi(p^{k_i}_i)}{p^{k_i}_i} \geq \frac{1}{2}.
 \end{align*}
\end{proof}

We now describe the quantum learning algorithm that can identify $d$ in time $\mathcal{O}(\mathrm{poly}(n, 1/\delta))$ using queries to $EX(c_d, \mathcal{D}_n^U)$.
We write $(p-1)(q-1) = 2^c \cdot p^{k_1}_1 \cdots p^{k_\ell}_\ell$, where the $p_i$s are distinct primes.
The idea is to query $EX(c_d, \mathcal{D}_n^U)$ sufficiently many times such that for every $i = 1, \dots, \ell$ we have an example $(x_i, x_i^d \mod N)$ where
\begin{align}
\label{eq:cong}
    p^{k_i}_i \mid \mathrm{ord}_N(x_i).
\end{align}
Next, we use Shor's algorithm~\cite{shor:factoring} to compute $r_i= \mathrm{ord}_N(x_i)$ and $a_i = \log_{x_i}(x_i^d)$, where $\log_{a}(b)$ denotes the discrete logarithm of $b$ in the group generated by $a$ (i.e., the smallest integer $\ell$ such that $a^\ell = b$).
Now by elementary group theory we obtain the congruence relation
\[
    d \equiv a_i \mod r_i,
\]
which by Eq.~\eqref{eq:cong} implies the congruence relation
\[
    d \equiv a_i \mod p^{k_i}_i.
\]
In other words, the examples allowed us to recover $d \mod p^{k_i}_i$ for every $i = 1, \dots, \ell$.
By the Chinese remainder theorem, all that remains is to recover $d \mod 2^c$, which we can do by brute force search since $c$ is constant. 
All in all, if we query $EX(c_d, \mathcal{D}_n^U)$ sufficiently many times such that for every $i = 1, \dots, \ell$ we have an example $(x_i, x_i^d \mod N)$ satisfying Eq.~\eqref{eq:cong}, then we can recover $d$.

What remains to be shown is that with probability $1 - \delta$ a total number of $\mathcal{O}(\mathrm{poly}(n, 1/\delta))$ queries to $EX(c_d, \mathcal{D}_n^U)$ suffices to find an example $(x_i, x_i^d \mod N)$ satisfying Eq.~\eqref{eq:cong} for every $i = 1, \dots, \ell$.
To do so, we invoke Lemma~\ref{lemma:order} and conclude from it that for any individual $i = 1, \dots, \ell$ after $\mathcal{O}(\log(1/\delta'))$ queries with probability at least $1-\delta'$ we found an example $(x_i, x_i^d \mod N)$ satisfying Eq.~\eqref{eq:cong}.
In particular, this implies that after a total of $\mathcal{O}(\log(n, 1/\delta))$ queries we found with probability at least $1- \delta$ examples $(x_i, x_i^d \mod N)$ satisfying Eq.~\eqref{eq:cong} for all $i = 1, \dots, \ell$.

\end{proof}

\subsection{Discrete cube root assumption for moduli of Definition~\ref{def:modexp}}
\label{appendix:moduli}

Recall that in Definition~\ref{def:modexp} we have constraint our moduli $N=pq$ to satisfy the conditions
\begin{itemize}
    \item[$(a)$] $\gcd(3, (p-1)(q-1)) = 1$,
    \item[$(b)$] $(p-1)(q-1) = 2^c\cdot a$, where $a \in \mathbb{N}$ is odd and $c$ is a constant,
    \item[$(c)$] $\gcd(p-1, q-1) = 2^{c'}$ for some $c'$.
\end{itemize}
Firstly, we remark that $(a)$ is a standard condition required for the function cube root function $f_N^{-1}$ to be well-defined, and it therefore does not influence the DCRA.
On the other hand, the implications that the conditions $(b)$ and $(c)$ have on the hardness in the DCRA are relatively unexplored.
Nonetheless, there are reasons to believe that the DCRA still holds under conditions $(b)$ and $(c)$.

To see why conditions $(b)$ and $(c)$ might not influence the DCRA, we remark that the DCRA is closely-related to the security of the RSA cryptosystem.
Specifically, the DCRA for a specific modulus $N$ is equivalent to assuming that the RSA cryptosystem with public exponentiation key $e = 3$ and modulus $N$ has an ``exponential security'' (i.e., deciphering a ciphertext without the private key requires time exponential in the cost of decryption).
In other words, if a certain family of moduli is used in practice, or are not actively avoided, this can be considered as supporting evidence that the DCRA holds for those moduli.

In practice it is generally prefered to use so-called ``cryptographically strong primes''\footnote{\url{https://en.wikipedia.org/wiki/Strong_prime}} $p$ and $q$ when constructing the modulus $N = pq$ for the RSA cryptosystem.
One of the conditions for a prime $p$ to be a cryptographically strong prime is that $p-1$ has large prime factors.
Note that if $p-1$ has large prime factors, then the largest power of 2 that divides it must be small. 
In other words, if $p$ and $q$ are cryptographically strong primes, then condition $(b)$ holds.
Moreover, if $p-1$ and $q-1$ only have large prime factors, then the probability that $p-1$ and $q-1$ share a prime factors is relatively small, and condition $(c)$ is thus likely to hold.
Finally, we note that recently factored RSA numbers\footnote{\url{https://en.wikipedia.org/wiki/RSA_numbers}}, which is a factoring challenge of a set of cryptographically strong moduli organized by the inventors of the RSA cryptosystem, all satisfy both conditions $(b)$ and $(c)$.
For instance, all RSA numbers that have been factored over the last five years (i.e., \texttt{RSA-250}, \texttt{RSA-240}, \texttt{RSA-768}, \texttt{RSA-232} and \texttt{RSA-230}) all have $c' \leq 2$ and $c \leq 8$ in conditions $(b)$ and $(c)$.

\section{Proof of Theorem~\ref{thm:elgamal}}
\label{appendix:elgamal}

\elgamal*

\begin{proof}

For quantum learnability, we first note that a quantum learner can with high probability recover the modulus $N$ from $\mathcal{O}\left(\mathrm{poly}(n)\right)$ examples drawn from the example oracle $E(c_{(N,m)}, \mathcal{D}^U_n)$.
Next, we note that if $N$ is known, then we can use Shor's algorithm~\cite{shor:factoring} to efficiently compute the integer $d \in \{0, \dots, (p-1)(q-1)\}$ such that\begin{align}
    (m^3)^d \equiv m \mod N, \quad  \text{ for all }m \in \mathbb{Z}^*_N.
\end{align}
Next, we note that from examples of the form $(x, \mathrm{bin}(m, k))$ we can retrieve the $k$th bit of $m^3$, where $k = \mathrm{int}(x_1 : \dots : x_{\lfloor \log n \rfloor})$.
Since for any given $k \in [n]$ we have 
\begin{align}
    \mathsf{Pr}_{x \sim \mathcal{D}^U_n}\left(\mathrm{int}(x_1 :\dots:x_{\lfloor\log n \rfloor}) = k \right) = \frac{1}{n}
\end{align}
we find that $\mathcal{O}(\mathrm{poly}(n))$ examples of the form $(x, \mathrm{bin}(m, k))$ suffices to reconstruct the full binary representation of $m^3$ with high probability.
Moreover, we can with high probability obtain an example of the form $(x, \mathrm{bin}(m, k))$ using just $\mathcal{O}(1)$ queries to the example oracle.
Finally, using $d$ and $m^3$ we can compute $(m^3)^d \equiv m \mod N$.

To show classical non-learnability we show that an efficient classical learning algorithm $\mathcal{A}_{\mathrm{learn}}$ can efficiently solve the discrete cube root problem.
To do so, we let $e \in \mathbb{Z}^*_N$ and our goal is to use $\mathcal{A}_{\mathrm{learn}}$ to efficiently compute $m \in \mathbb{Z}^*_N$ such that $m^3 \equiv e\mod N$.
First, we generate examples 
\begin{align}
    (x, \mathrm{bin}(e, k))\text{ and }(x, \mathrm{bin}(N, k))
\end{align}
where $x \in \{0,1\}^n$ is sampled uniformly at random and $k = \mathrm{int}(x_1 : \dots : x_{\lfloor \log n \rfloor})$. 
If we plug these examples into $\mathcal{A}_{\mathrm{learn}}$ with $\epsilon = 1/n^3$ and $\delta = 1/3$, then with high probability we obtain some $m'$ such that
\begin{align}
\label{eq:prob_con}
\mathsf{Pr}_{k \sim [n]}\left(\mathrm{bin}(m^3, k) \neq \mathrm{bin}((m')^3, k)\right) \leq \frac{1}{2n^3},
\end{align}
where $k \in [n]$ is sampled uniformly at random.
Next, we claim that $m^3 \equiv (m')^3 \mod N$.
Specifically, suppose there exists some $i$ such that $\mathrm{bin}(m^3, i) \neq \mathrm{bin}((m')^3, i)$, then this implies that
\begin{align}
\mathsf{Pr}_{k \sim [n]}\left(\mathrm{bin}(m^3, k) \neq \mathrm{bin}((m')^3, k)\right) = \frac{1}{n}\sum_{k = 1}^n\mathds{1}\left[\mathrm{bin}(m^3, k) = \mathrm{bin}((m')^3, k)\right] \geq \frac{1}{n},
\end{align}
which clearly contradicts Eq.~\eqref{eq:prob_con}.
Now since $x \mapsto x^3 \mod N$ is a bijection to and from $\mathbb{Z}_N^*$ we conclude that $m = m'$ and that we have thus solved our instances of the discrete cube root.
\end{proof}

\section{Proof of Theorem~\ref{thm:seps_no-eff-data}}
\label{appendix:seps_no-eff-data}

\noeffdata*

\begin{proof}

Firstly, a quantum learner can iterate over all concepts in $\mathcal{C}_n$ and find the one that matches the examples obtained from the oracle.
In other words, a quantum learner can implement empirical risk minimization through brute-force search.
By Corollary 2.3 of~\cite{shalev:book} this shows that $L \in \mathsf{QQ}$.

Next, suppose $L \in \mathsf{CC}$, i.e., suppose there exists an efficient classical learning algorithm for $L$ that uses a classically evaluatable hypothesis class.
By combining the classical learning algorithm with the evaluation algorithm of the hypothesis class we obtain a polynomial-time classical randomized algorithm $\mathcal{A}$ such that for every $c'_n \in \mathcal{C}_n$ on input $\mathcal{T} = \{(x_i, c'_n(x_i)) \mid x_i \sim \mathcal{D}_n\}_{i = 1}^{\mathrm{poly}(n)}$ and $x \in \{0,1\}^n$ we have
\[
    \mathsf{Pr}_{x \sim \mathcal{D}_n}\left[\mathsf{Pr}\left(\mathcal{A}(x, 0^{\lfloor 1/\epsilon \rfloor}, \mathcal{T}) = c'_n(x)\right) \geq \frac{2}{3} \right] \geq 1- \epsilon
\]
If we apply $\mathcal{A}$ to the concepts $\{c_n\}_{n \in \mathbb{N}}$ we obtain $(\{c_n\}_{n \in \mathbb{N}}, \{\mathcal{D}_n\}_{n\in \mathbb{N}}) \in \mathsf{HeurBPP/samp} \subseteq\mathsf{HeurP/poly}$, which contradicts the classical non-learnability assumption.
Therefore, it must hold that $L \not\in \mathsf{CC}$.

\end{proof}

\subsection{Proof of Lemma~\ref{lemma:1}}
\label{appendix:lemma1}

\lemmaone*

\begin{proof}

Let $L' \in \mathsf{BQP}$-$\mathsf{complete}$ and consider the many-to-one polynomial-time reduction $f: L \rightarrow L'$ such that $L(x) = L'(f(x))$.
Also, consider the pushforward distributions $\mathcal{D}'_n = f(\mathcal{D}_n)$ on $\{0,1\}^{n'}$\footnote{Note that $f$ can map instances $x \in \{0,1\}^n$ to instances of size $n'$ that are at most polynomially larger than $n$.}, i..e, the distribution induced by first sampling $x \sim \mathcal{D}_n$ and subsequently computing $f(x)$.
Next, we suppose that $(L', \mathcal{D}') \in \mathsf{HeurP/poly}$.
Specifically, we suppose that there exists a classical algorithm $\mathcal{A}$ and a sequence advice strings $\{\alpha_n\}_{n\in \mathbb{N}}$ as in Definition~\ref{def:heurp/poly} such that for every $n \in \mathbb{N}$:
\begin{align}
   \mathsf{Pr}_{y \sim \mathcal{D}'_n}\left[ \mathcal{A}(y, 0^{\lfloor 1/\epsilon \rfloor}, \alpha_{n'}) = L'(y)\right] \geq 1-\epsilon 
\end{align}
By the definition of the push-forward distribution $\mathcal{D}'_n$ we have
\begin{align}
\label{eq:proof1_lemma1}
   \mathsf{Pr}_{y \sim \mathcal{D}'_n}\left[ \mathcal{A}(y, 0^{\lfloor 1/\epsilon \rfloor}, \alpha_{n'}) = L'(y)\right] = \mathsf{Pr}_{x \sim \mathcal{D}_n}\left[ \mathcal{A}(f(x), 0^{\lfloor 1/\epsilon \rfloor}, \alpha_{n'}) = L'(f(x))\right]
\end{align}
Finally, we define a polynomial-time classical algorithm $\mathcal{A}'$ that uses advice as follows
\[
\mathcal{A}'(x, 0^{\lfloor 1/\epsilon \rfloor}, \alpha_n) = \mathcal{A}(f(x), 0^{\lfloor 1/\epsilon \rfloor}, \alpha_n).
\]
Then, by Eq.~\eqref{eq:proof1_lemma1} we have that
\begin{align}
   \mathsf{Pr}_{x \sim \mathcal{D}_n}\left[ \mathcal{A}'(x, 0^{\lfloor 1/\epsilon \rfloor}, \alpha_n) = L(x)\right] = \mathsf{Pr}_{x \sim \mathcal{D}_n}\left[\mathcal{A}(f(x), 0^{\lfloor 1/\epsilon \rfloor}, \alpha_n) = L'(f(x)) \right]\geq 1-\epsilon 
\end{align}
which implies that $(L, \mathcal{D}) \in \mathsf{HeurP/poly}$.
This contradicts the assumptions made in the lemma, and we therefore conclude that indeed $(L', \mathcal{D}') \not\in \mathsf{HeurP/poly}$.

\end{proof}

\subsection{Proof of Lemma~\ref{lemma:2}}
\label{appendix:lemma2}

\lemmatwo*

\begin{proof}

Since $L$ is polynomially random self-reducible (for a formal definition we refer to~\cite{feigenbaum:rsr}), we know that there exists a family of distributions $\mathcal{D} = \{\mathcal{D}_n\}_{n \in \mathbb{N}}$, a polynomial-time computable function $f$, and some integer $k_n = \mathcal{O}(\mathrm{poly}(n))$ such that 
\begin{align}
\label{eq:proof1_lemma2}
    \mathsf{Pr}_{y_1, \dots, y_{k_n} \sim \mathcal{D}_n}\Big(f\big(x, L(y_1), \dots, L(y_{k_n})\big) = L(x)\Big) \geq \frac{3}{4}
\end{align}

Suppose $(L, \mathcal{D}) \in \mathsf{HeurP/poly}$, i.e., there exists a polynomial-time classical algorithm $\mathcal{A}$ and a sequence of advice strings $\{\alpha_n\}_{n \in \mathbb{N}}$ such that
\begin{align}
\label{eq:proof2_lemma2}
    \mathsf{Pr}_{y \sim \mathcal{D}_n}\left(\mathcal{A}(y, 0^{\lfloor 1/\epsilon \rfloor}, \alpha_n) = L(y) \right) \geq 1 - \epsilon.
\end{align}
Let $\epsilon' = 1/(9k)$, then by combining Eq.~\eqref{eq:proof1_lemma2} and Eq.~\eqref{eq:proof2_lemma2} we get that
\begin{align}
\mathsf{Pr}_{y_1, \dots, y_{k_n} \sim \mathcal{D}_n(x)}\Big(f\big(x, \mathcal{A}(y_1, 0^{1/\epsilon'}, \alpha_n), \dots, \mathcal{A}(y_{k_n}, 0^{1/\epsilon'}, 
\alpha_n)\big) = L(x)\Big) \geq \frac{2}{3}
\end{align}
In other words, if we define $\mathcal{A}'(x, \alpha_n) = f\big(x, \mathcal{A}(y_1, 0^{1/\epsilon'}, \alpha_n), \dots, \mathcal{A}(y_1, 0^{1/\epsilon'}, 
\alpha_n)\big)$, where $y_i\sim \mathcal{D}_n$ are sampled during the runtime of the algorithm, then we conclude that $L \in \mathsf{BPP/poly} = \mathsf{P/poly}$.
This contradicts the assumption in the lemma, and we therefore conclude that $(L, \mathcal{D}) \not\in \mathsf{HeurP/poly}$.
    
\end{proof}

\subsection{Proof of Corollary~\ref{cor:seps1}}
\label{appendix:cor2}

\cortwo*

\begin{proof}

    Suppose there exists $L\in \mathsf{BQP}$-$\mathsf{complete}$ such that for every efficiently samplable distribution $\mathcal{D}$ we have $(L, \mathcal{D}) \in \mathsf{HeurBPP}/\mathsf{samp}$.
    Now let $f$ be the (many-to-one) reduction from $\mathsf{DLP}$ to $L$ and define $
    \mathcal{D}_L = f(\mathcal{D}_U)$ to be the push-forward distribution of the uniform distribution $\mathcal{D}_U$ under $f$.
    By our assumption at the beginning of the proof, we know $(L, \mathcal{D}_L) \in \mathsf{HeurBPP}/\mathsf{samp}$. 
    However, we now note that the $\mathsf{HeurBPP}/\mathsf{samp}$-algorithm for solving $(L, \mathcal{D}_L)$ actually also solves $(\mathsf{DLP}, \mathcal{D}_U)$, which implies that $(\mathsf{DLP}, \mathcal{D}_U) \in \mathsf{HeurBPP}/\mathsf{samp}$. 
    Since $\mathsf{DLP}$ is both random self-reducible as well as random verifiable, we conclude that this implies that $\mathsf{DLP} \in \mathsf{BPP}$.
\end{proof}

\section{Proof of Theorem~\ref{thm:limitations_huang}}
\label{appendix:limitations_huang}

\limitationshuang*

\begin{proof}

We define $\mathsf{DLP}$ to be the problem of computing the first bit of $\log_a x$ (i.e., the smallest positive integer $\ell$ such that $a^\ell \equiv x \mod p$)  with respect to a generator $a \in \mathbb{Z}^*_p$ for a given $x \in \mathbb{Z}^*_p$.

First, using Shor's algorithm we can construct a polynomial-depth circuit $U_{\mathrm{Shor}}$ such that
\begin{align}
    U_{\mathrm{Shor}}\ket{0, x, 0^\ell} = (1-\alpha)\ket{\mathsf{DLP}(x), x, 0^\ell} + \alpha\ket{\mathrm{garbage}},
\end{align}
for all $x \in \{0, 1\}^n$ and where $\alpha = \mathcal{O}(2^{-n})$.
Next, we parameterize 
\begin{align}
U(x) = U \cdot \left( I_0 \otimes \left[\bigotimes^n_{i = 1}X_i(g_\gamma(x_i) \cdot 2\pi)\right]\right),
\end{align}
where $X$ is a rotation such that $X(0)\ket{0} = \ket{0}$ and $X(2\pi)\ket{0} = \ket{1}$, and $g_\gamma$ is a continuous function such that
\begin{align}
    g_\gamma(x_i) = \begin{cases} 0, & x_i \in [-1, -\gamma)\\ (2\gamma - x)(\gamma + x)/(4\gamma^3), & x\in (-\gamma, \gamma)\\ 1, & x_i \in (\gamma, 1] \end{cases},
\end{align}
for some $\gamma > 0$.
Finally, we add $2T$ layers of identities to $U(x)$, where $T$ denotes the depth of $U_\mathrm{Shor}$.

We define $H(x)$ to be the Hamiltonian family on $\mathbb{C}^{2^{s}} \oplus \mathbb{C}^{2^{3T}}$ with $s = n + \ell + 1$  given by
\begin{align}
\label{eq:kitaev}
H(x) = H_{\mathrm{init}} + H_{\mathrm{clock}} + \sum_{t = 1}^{3T} H_t(x),
\end{align}
with
\begin{align}
H_{\mathrm{init}} &=  \sum_{i = 1}^s \ket{0}\bra{0}_i,\\
H_{\mathrm{clock}} &=  \sum_{t = 1}^{3T - 1}\ket{01}\bra{01}^{\mathrm{clock}}_{t, t+1},\\
H_t(x) &= \frac{1}{2}\Big(I \otimes \ket{100}\bra{100}^{\mathrm{clock}}_{t-1, t, t+1} + I \otimes \ket{110}\bra{110}^{\mathrm{clock}}_{t-1, t, t+1} -\\
&U_t(x) \otimes \ket{110}\bra{100}^{\mathrm{clock}}_{t-1, t, t+1} - U_t(x)^\dagger \otimes \ket{100}\bra{110}^{\mathrm{clock}}_{t-1, t, t+1}\Big)
\end{align}
where $\ket{.}\bra{.}_i$ acts on the $i$th site of $\mathbb{C}^{2^s}$, $\ket{.}\bra{.}_j^{\mathrm{clock}}$ acts on the $j$th site of $\mathbb{C}^{2^{3T}}$ and $U_t$ denotes the $t$th layer of gates in $U(x)$. 
Note that $H(x)$ is 5-local for all $x \in [-1, 1]$.
The ground state of $H(x)$ is given by $\rho(x) = \ket{\psi(x)}\bra{\psi(x)}$, where
\begin{align}
\ket{\psi(x)} = \frac{1}{\sqrt{3T}} \sum_{t = 1}^T(U_t\cdots U_1)(x)\ket{0^s}\ket{1^t0^{3T - t}},
\end{align}

We define $O = \ket{0}\bra{0}_0 \otimes I \otimes \ket{1}\bra{1}_T^{\mathrm{clock}}$ and note that it is a local observable with constant norm.
Now $f_{H, O}$ defined in Eq.~\ref{eq:huang_science} is such that 
\begin{align}
f_{H,O}(x) &= \mathrm{Tr}\left[\rho_0(x)O \right] = \frac{2}{3} p_1(x),
\end{align}
where $p_1(x)$ denotes the probability that $\ket{\psi_{\mathrm{out}}(x)} = U(x)\ket{0^s}$ outputs 1 when measuring the first qubit in the computational basis.
In particular, we have that
\begin{align}
f_{H,O}(x) &= \mathrm{Tr}\left[\rho_0(x)O \right] = \frac{2}{3} \left( (1-\alpha)\mathsf{DLP}(g_\gamma(x)) + \alpha \cdot \mathrm{garb}\right),
\end{align}
for all $x \in [-1, 1]^n$ for which there does not exists $x_i \in (-\gamma, \gamma)$, and some quantity $\mathrm{garb} \leq 1$.

Finally, assume that we obtain $\overline{f}_{H, O}$ such that
\begin{align}
\label{eq:guarnatee}
\mathbb{E}_{x \sim [-1, 1]^n}\Big[ \left|\overline{f}_{H, O} - f_{H,O} \right| \Big] < \frac{1}{6}.
\end{align}
Also, suppose there exists a bitstring $y \in \{0,1\}^n$ whose corresponding corner 
$C_y\subset [-1,1]^n$\footnote{Here $y \in \{0,1\}^n$ is mapped to $\{-1,1\}^n$ by setting all $0$s to $-1$, and the corner $C_y$ consists of all points $x \in [-1, 1]^n$ whose $i$th coordinate is $\gamma$ close to $y_i$ for all $i\in [n]$.} with size $\gamma$ is such that
\begin{align}
    \left|\mathbb{E}_{x \sim C_y}\Big[ \overline{f}_{H, O} \Big] - \mathbb{E}_{x \sim C_y}\Big[ f_{H,O}\Big] \right| > \frac{1}{3}.
\end{align}
Then, we find that
\begin{align}
    \mathbb{E}_{x \sim [-1, 1]^n}\Big[ \left|\overline{f}_{H, O} - f_{H,O} \right| \Big] &= \int_{[-1, 1]^n} \left|\overline{f}_{H, O} - f_{H,O} \right| dx\\
    &\geq \int_{C_y} \left|\overline{f}_{H, O} - f_{H,O} \right| dx\\
    &\geq \left|\int_{C_y} \overline{f}_{H, O} - f_{H,O}  dx\right|\\
    &=\left|\left(\int_{C_y} \overline{f}_{H, O}dx\right) - \left(\int_{C_y} f_{H, O}dx\right)\right|\\
    &= \left|\mathbb{E}_{x \sim C_y}\Big[ \overline{f}_{H, O} \Big] - \mathbb{E}_{x \sim C_y}\Big[ f_{H,O}\Big] \right| > \frac{1}{3}.
\end{align}
which clearly contradicts Eq.~\eqref{eq:guarnatee}.
We therefore conclude that 
\begin{align}
    \left|\mathbb{E}_{x \sim C_y}\Big[ \overline{f}_{H, O} \Big] - \mathbb{E}_{x \sim C_y}\Big[ f_{H,O}\Big] \right| < \frac{1}{3}.
\end{align}

In conclusion, for every $y \in \{0,1\}^n$ the quantity $\mathbb{E}_{x \sim C_y}[\overline{f}_{H, O}]$ is exponentially close to $\mathsf{DLP}(y)$.
Finally, we can efficiently estimate $\mathbb{E}_{x \sim C_y}[\overline{f}_{H, O}]$ to within additive inverse-polynomial error, which allows us to compute $\mathsf{DLP}(y)$ in $\mathsf{BPP/poly} = \mathsf{P/poly}$.

\end{proof}

\paragraph{Spectral gap and smoothness}
Note that the Hamiltonian family constructed above indeed does not satisfy all requirements for the methods of Huang et al.~\cite{huang:science} to work.
In particular, it is known that the spectral gap of Hamiltonians obtained from Kitaev's circuit-to-Hamiltonian construction (i.e., those defined in Eq.~\eqref{eq:kitaev}) have a spectral gap that is inverse polynomial in the depth of the circuit, which is our case is polynomial in the instance size $n$.
Moreover, since we apply a function $g_\gamma$ to the parameters $x$ (which has a rapid increase between $-\gamma$ and $\gamma$), it is likely that the average gradient of the function $\mathrm{Tr}\left[O\rho_H(x)\right]$ is not bounded by a constant, but rather scales with the number of parameters $m$ (which in our case also scales with the instance size $n$).

\end{document}